\newcommand{\qw}[1][-1]{\ar @{-} [0,#1]}
\newcommand{\qwx}[1][-1]{\ar @{-} [#1,0]}
\newcommand{\gate}[1]{*+<.6em>{#1} \POS ="i","i"+UR;"i"+UL **\dir{-};"i"+DL **\dir{-};"i"+DR **\dir{-};"i"+UR **\dir{-},"i" \qw}
\newcommand{\measure}[1]{*+[F-:<.9em>]{#1} \qw}
\newcommand{\multimeasure}[2]{*+<1em,.9em>{\hphantom{#2}} \qw \POS[0,0].[#1,0];p !C *{#2},p \drop\frm<.9em>{-}}
\newcommand{\control}{*!<0em,.025em>-=-<.2em>{\bullet}}
\newcommand{\controlo}{*+<.01em>{\xy -<.095em>*\xycircle<.19em>{} \endxy}}
\newcommand{\ctrl}[1]{\control \qwx[#1] \qw}
\newcommand{\ctrlo}[1]{\controlo \qwx[#1] \qw}
\newcommand{\targ}{*+<.02em,.02em>{\xy ="i","i"-<.39em,0em>;"i"+<.39em,0em> **\dir{-}, "i"-<0em,.39em>;"i"+<0em,.39em> **\dir{-},"i"*\xycircle<.4em>{} \endxy} \qw}
\newcommand{\qswap}{*=<0em>{\times} \qw}
\newcommand{\multigate}[2]{*+<1em,.9em>{\hphantom{#2}} \POS [0,0]="i",[0,0].[#1,0]="e",!C *{#2},"e"+UR;"e"+UL **\dir{-};"e"+DL **\dir{-};"e"+DR **\dir{-};"e"+UR **\dir{-},"i" \qw}
\newcommand{\ghost}[1]{*+<1em,.9em>{\hphantom{#1}} \qw}
\newcommand{\lstick}[1]{*!R!<.5em,0em>=<0em>{#1}}
\newcommand{\Qcircuit}{\xymatrix @*=<0em>}
\DeclarePairedDelimiterX\braket[2]{\langle}{\rangle}{#1 \delimsize\vert #2}
\def\>{\rangle}
\def\<{\langle}
\newcommand{\nn}{\nonumber\\}
\newtheorem{theorem}{Theorem}
\newtheorem{definition}{Definition}
\newtheorem{lemma}{Lemma}
\newtheorem{proposition}{Proposition}
\newtheorem{example}{Example}
\newtheorem{corollary}{Corollary}
\newcommand{\eqn}[1]{(\ref{eqn:#1})}
\newcommand{\thm}[1]{\hyperref[thm:#1]{Theorem~\ref*{thm:#1}}}
\newcommand{\defn}[1]{\hyperref[defn:#1]{Definition~\ref*{defn:#1}}}
\newcommand{\lem}[1]{\hyperref[lem:#1]{Lemma~\ref*{lem:#1}}}
\newcommand{\prop}[1]{\hyperref[prop:#1]{Proposition~\ref*{prop:#1}}}
\newcommand{\fig}[1]{\hyperref[fig:#1]{Figure~\ref*{fig:#1}}}
\newcommand{\tab}[1]{\hyperref[tab:#1]{Table~\ref*{tab:#1}}}
\newcommand{\algo}[1]{\hyperref[algo:#1]{Algorithm~\ref*{algo:#1}}}
\renewcommand{\sec}[1]{\hyperref[sec:#1]{Section~\ref*{sec:#1}}}
\newcommand{\append}[1]{\hyperref[append:#1]{Appendix~\ref*{append:#1}}}
\newcommand{\N}{\mathbb{N}}
\newcommand{\R}{\mathbb{R}}
\newcommand{\C}{\mathbb{C}}
\newcommand{\stine}{sparse Stinespring isometry}
\newcommand{\singly}{identical-coordinate}
\renewcommand{\d}{\mathrm{d}}
\renewcommand{\emptyset}{\varnothing}
\newcommand{\range}[1]{[#1]}
\def\Tr{\operatorname{Tr}}
\def\:{\hbox{\bf:}}
\DeclareMathOperator{\poly}{poly}
\DeclareMathOperator{\spn}{span}
\DeclareMathOperator{\anc}{anc}
\DeclareMathOperator{\diag}{diag}
\newcommand{\linop}[1]{\mathsf{#1}}
\DeclareMathOperator{\num}{num}
\DeclareMathOperator{\Next}{Next}
\DeclareMathOperator{\Prev}{Prev}
\DeclareMathOperator{\NextIdx}{NextIdx}
\DeclareMathOperator{\PrevIdx}{PrevIdx}
\begin{document}

\title{Efficient simulation of sparse Markovian quantum dynamics}

\author{Andrew M.\ Childs and Tongyang Li \\
\small{Department of Computer Science, Institute for Advanced Computer Studies, and \\ Joint Center for Quantum Information and Computer Science, University of Maryland}}

\date{}

\maketitle


\begin{abstract}
Quantum algorithms for simulating Hamiltonian dynamics have been extensively developed, but there has been much less work on quantum algorithms for simulating the dynamics of open quantum systems. We give the first efficient quantum algorithms for simulating Markovian quantum dynamics generated by Lindbladians that are not necessarily local. We introduce two approaches to simulating sparse Lindbladians. First, we show how to simulate Lindbladians that act within small invariant subspaces using a quantum algorithm to implement sparse Stinespring isometries. Second, we develop a method for simulating sparse Lindblad operators by concatenating a sequence of short-time evolutions. We also show limitations on Lindbladian simulation by proving a no--fast-forwarding theorem for simulating sparse Lindbladians in black-box models.
\end{abstract}

\section{Introduction}
\label{sec:intro}

The original motivation for quantum computers came from the observation that such a device would be ideally suited to simulating quantum systems \cite{feynman1982simulating}.  Over the past two decades, there has been substantial work on the development of quantum algorithms for simulating Hamiltonian dynamics.  Lloyd \cite{lloyd1996universal} gave the first explicit quantum algorithm for efficiently simulating local Hamiltonians.  Aharonov and Ta-Shma \cite{aharonov2003adiabatic} introduced the more general notion of sparse Hamiltonians and showed that they can also be simulated efficiently. The complexity of sparse Hamiltonian simulation was subsequently improved using approaches based on product formulas \cite{Chi04,berry2007efficient,childs2010simulating}, discrete-time quantum walks \cite{Chi10,BC12}, and methods for implementing linear combinations of unitaries \cite{childs2012hamiltonian,berry2014exponential,berry2015simulating,berry2015hamiltonian,low2016optimal,BN16}.

Hamiltonian dynamics represent an idealized scenario in which the system is perfectly isolated. More generally, a quantum system coupled to an inaccessible environment can evolve nonunitarily.  Such open quantum systems arise naturally in areas including quantum statistical mechanics and quantum optics, and in the description of realistic quantum information processors that are subject to noise.  However, there has been relatively little work on quantum algorithms for simulating open quantum systems.

In this paper, we focus on quantum algorithms for simulating Markovian quantum dynamics, a well-studied special case that describes a situation in which a system is coupled to a large, memoryless environment.  For an $N$-dimensional system with density matrix $\rho$, such dynamics can be described by a Lindblad equation \cite{lindblad1976generators} of the form
\begin{align}\label{eqn:Lindblad}
\frac{\d{\rho}}{\d{t}} =
\mathcal{L}(\rho)=-i[H,\rho]+\sum_{j}\big(L_{j}\rho L_{j}^{\dagger}-\frac{1}{2}(L_{j}^{\dagger}L_{j}\rho+\rho L_{j}^{\dagger}L_{j})\big),
\end{align}
where $H$ is an $N \times N$ Hermitian matrix and the $L_{j}$ are $N \times N$ matrices called \emph{Lindblad operators}.  The superoperator $\mathcal{L}$ that generates the dynamics is called the \emph{Lindbladian}. We say that a Lindbladian $\mathcal{L}$ can be efficiently simulated if, for any $t,\epsilon>0$, there exists a quantum operation $\mathcal{E}$ consisting of $\poly(n,t,1/\epsilon)$ gates such that $\|\mathcal{E}-e^{\mathcal{L}t}\|_{\diamond}<\epsilon$, where $\|\cdot\|_{\diamond}$ is the diamond norm (see \defn{diamond-norm} below).

Of course, one possible approach to simulating open quantum systems is to explicitly simulate the environment \cite{terhal2000problem,wang2011quantum}.  However, this introduces considerable overhead, and it is challenging to analyze such an algorithm rigorously. To the best of our knowledge, the only case for which explicit, efficient algorithms have been presented is the setting of open quantum systems with local interactions (i.e., with Lindblad operators that act nontrivially on a constant number of qubits). Kliesch et al.\ \cite{kliesch2011dissipative} gave the first efficient algorithm for this case. That work has been extended to some non-Markovian open systems \cite{sweke2016digital}.

More generally, it may be useful to have efficient algorithms for simulating Lindbladians that are not necessarily local. Such a framework might be applied to develop quantum algorithms based on Markovian dynamics, just as sparse Hamiltonian simulation can be used to implement adiabatic optimization \cite{aharonov2003adiabatic}, quantum walk algorithms \cite{CCDFGS03,FGG07}, and the quantum linear systems algorithm \cite{HHL09}.  It might also be useful for simulating realistic systems that are not necessarily described by local Lindbladians, just as sparse Hamiltonian simulation has been useful in the context of quantum chemistry (see for example \cite{babbush2015exponentially} and references therein).  (We give a simple example of such a Lindbladian, a truncated version of a damped harmonic oscillator, in \sec{dho}.)

Just as in Hamiltonian simulation, a simple counting argument shows that Lindbladians must have some special structure to be efficiently simulated. In particular, it would be natural to develop an analog of sparse Hamiltonian simulation for Markovian dynamics. In this paper, we develop two approaches to simulating sparse Lindbladian dynamics.

First, we develop a simulation framework that we call the \stine{} framework. As in sparse Hamiltonian simulation, we divide the Lindbladian into a sum of terms, each of which generates evolution within a low-dimensional subspace. To simulate one such term, we implement it by a sparse Stinespring isometry, which we show how to implement efficiently in \thm{sparse-Stinespring-isometry}. The quantum algorithm first implements the isometry using an ancilla system and then uncomputes this ancilla to obtain the correct isometry. While the first step is reasonably straightforward, the second is more technically challenging, requiring careful application of the orthogonality properties of the Stinespring isometry to uncompute the ancilla without disturbing the effect of the first step.

Second, we develop a method for simulating Lindbladians with sparse Lindblad operators (\thm{k-sparse-generator}). We simulate such Lindbladians by concatenating a sequence of short-time evolutions, where the error in the implementation of each piece is quadratic in the evolution time. By generalizing efficient implementation of sparse unitaries \cite{jordan2009efficient}, we implement these pieces by approximately implementing a sparse map that we show is close to the desired quantum operation.

We apply these two methods to simulate five classes of sparse Lindbladians, which we characterize in terms of a matrix that we call the \emph{overcomplete GKS matrix}, denoted $A$. When the Lindbladian acts on an $N$-dimensional system, $A$ is an $N^{2}\times N^{2}$ matrix.

These simulations are summarized in \tab{main-result}.  We call the first class \emph{identical-coordinate Lindbladians} because the nonzero entries of the overcomplete GKS matrices have two identical coordinates for the row, and similarly for the column. In particular, if all Lindblad operators of a Lindbladian are diagonal, then this Lindbladian is identical-coordinate. In the second class, \emph{sparse-diagonal Lindbladians}, the overcomplete GKS matrix $A$ is diagonal, and the diagonal of $A$ is described by a $d$-sparse matrix. If the matrix $A$ is diagonal but its diagonal is not described by a sparse matrix, then it may be difficult to simulate in general. However, we define a class of \emph{dense-diagonal Lindbladians} for which the entries of the diagonal can be partitioned into $N$ sets of identical entries, such that with appropriate query access to partial sums of the entries, these Lindbladians can be simulated efficiently. The fourth class of efficiently simulatable Lindbladians, which we call \emph{1-ket-sparse Lindbladians}, has nonzero off-diagonal entries, but the sparsity is special as the permutation is independent of the second coordinate of both rows and columns. In the fifth class of Lindbladians, the Lindblad operators are sparse.

More generally, our results show how to efficiently simulate any Lindbladian that can be expressed as a positive linear combination of (efficient unitary transformations of) Lindbladians from the five classes represented in \tab{main-result},  together with local Lindbladians.

\begin{table}
\centering
\begin{tabular}{|c|c|c|c|c|}
\hline
class & nonzero entries & \# nonzeros & rank & result \\ \hline\hline
Identical-coordinate & $A_{(k,k),(l,l)}$ & $\Theta(N^{2})$ & $\Theta(N)$ & \thm{singly-sparse} \\ \hline
Sparse-diagonal & $A_{(k,l),(k,l)}$ & $\Theta(dN)$ & $\Theta(dN)$ & \thm{d-sparse} \\ \hline
Dense-diagonal & $A_{(k,l),(k,l)}$ & $N^{2}$ & $N^{2}$ & \thm{equal-diagonal} \\ \hline
1-ket-sparse & $A_{(k,l),(k,l)},A_{(k,l),(\nu(k),l)}$ & $\Theta(N)$ & $\Theta(N)$ & \thm{1-k-sparse} \\ \hline
Sparse Lindblad operator & $A_{(\nu_{i}(k),k),(\nu_{j}(l),l)}$ & $\Theta(N^{2})$ & $O(1)$ & \thm{k-sparse-generator} \\ \hline
\end{tabular}
\caption{Summary of the main results of this paper. The second column describes the nonzero entries of the matrix $A$ characterizing the overcomplete GKS form presented in \eqn{GKS'}.  Here $\nu$ is a permutation that encodes the location of the nonzero entries.  The third and fourth columns describe the number of nonzero entries and the rank of $A$, respectively, where $N$ is the dimension of the Hilbert space and $d$ is the sparsity.}
\label{tab:main-result}
\end{table}

We describe two applications of our results. First, we show how to efficiently simulate a truncated damped quantum harmonic oscillator, which is not described by a local Lindbladian. Second, we give an efficient implementation of open quantum walks---a mutual generalization of classical Markov chains and quantum walks---on sparse graphs.

Finally, we also show a limitation on Lindbladian simulation by proving a no--fast-forwarding theorem for Markovian quantum dynamics (\thm{no-fast-forward}). For two natural black-box models of sparse Lindbladians, we show that the complexity of a simulation for time $t$ is at least linear in $t$.

The remainder of this paper is organized as follows. In \sec{prelim}, we define the overcomplete GKS form, invariantly $d$-sparse Lindbladians, and $d$-sparse Stinespring isometries, and establish an equivalence between the latter two notions. We also discuss product formulas for superoperators. Then, in \sec{GKS'}, we establish our \stine{} framework, which reduces the Lindbladian simulation problem to the task of finding the Gram vectors of a certain matrix and efficiently implementing a sparse Stinespring isometry defined in terms of these Gram vectors. In \sec{main}, we apply this framework to efficiently simulate four classes of nonlocal Lindbladians. Then in \sec{short-time}, we show how to simulate Lindbladians with sparse Lindblad operators. In \sec{open-quantum-walk}, we describe how to use our results to efficiently simulate open quantum walks. We then turn to limitations on simulation in \sec{no-fast-forwarding}, where we prove a no--fast-forwarding theorem for Lindbladians. Finally, we conclude in \sec{conclusion} and present some open problems about simulating open quantum systems.

\section{Preliminaries}
\label{sec:prelim}

\subsection{Overcomplete GKS form}

Gorini, Kossakowski, and Sudarshan \cite{gorini1976completely} gave an equivalent characterization of Markovian quantum dynamics that is a convenient alternative to the Lindblad equation \eqn{Lindblad}. Given a basis $\{\sigma_{i}\}_{i=1}^{N^{2}-1}$ of traceless operators on $N\times N$ density matrices, $\mathcal{L}$ is the generator of a Markovian open system evolution if and only if it can written as
\begin{align}\label{eqn:GKS}
\mathcal{L}(\rho)=-i[H,\rho]+\sum_{i,j=1}^{N^{2}-1}A_{ij}\big([\sigma_{i}\rho,\sigma_{j}^{\dagger}]+[\sigma_{i},\rho\sigma_{j}^{\dagger}]\big)
\end{align}
for some $N \times N$ Hermitian matrix $H$ and some $(N^2-1)\times(N^2-1)$ positive semidefinite matrix $A$.

It is more convenient for our purposes to choose a basis of matrix elements in the computational basis. Although the resulting form is no longer unique, it provides a natural setting to describe sparse Lindbladians. The resulting \emph{overcomplete GKS form} is as follows.  Here and throughout this paper, let $\range{N} := \{0,1,\ldots,N-1\}$.

\begin{theorem}\label{thm:GKS'}
$\mathcal{L}$ is a Lindbladian if and only if there exists an $N^{2}\times N^{2}$ positive semidefinite matrix $A$ with entries $A_{(k,l),(k',l')}$ for $k,l,k',l' \in [N]$ such that
\begin{align}\label{eqn:GKS'}
\mathcal{L}(\rho)=-i[H,\rho]+\sum_{k,l,k',l' \in [N]} A_{(k,l),(k',l')}\big(2\<l|\rho|l'\>|k\>\<k'|-\delta_{kk'}|l'\>\<l|\rho-\delta_{kk'}\rho|l'\>\<l|\big).
\end{align}
\end{theorem}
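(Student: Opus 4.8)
The plan is to recognize \eqn{GKS'} as the ordinary Lindblad form \eqn{Lindblad} rewritten in the basis of matrix units $E_{kl}:=\ket{k}\bra{l}$, $k,l\in[N]$, which spans the full space of $N\times N$ matrices. (Unlike the traceless $\sigma_{i}$ appearing in \eqn{GKS}, these $E_{kl}$ include the identity direction, which is the source of the non-uniqueness noted just above the theorem.) The only facts needed are the two elementary identities
\[
  E_{kl}\,\rho\,E_{k'l'}^{\dagger}=\langle l|\rho|l'\rangle\,\ket{k}\bra{k'},
  \qquad
  E_{k'l'}^{\dagger}E_{kl}=\delta_{kk'}\,\ket{l'}\bra{l},
\]
obtained by direct evaluation. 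Substituting them, the right-hand side of \eqn{GKS'} becomes
\[
  -i[H,\rho]+\sum_{k,l,k',l'\in[N]}A_{(k,l),(k',l')}
    \bigl(2\,E_{kl}\,\rho\,E_{k'l'}^{\dagger}-E_{k'l'}^{\dagger}E_{kl}\,\rho-\rho\,E_{k'l'}^{\dagger}E_{kl}\bigr),
\]
i.e.\ a GKS-type generator whose ``GKS matrix'' $A$ has been expanded in the $E$-basis. Thus the theorem reduces to the statement that a generator of this shape is a Lindbladian precisely when $A$ may be taken positive semidefinite.

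For the \emph{if} direction, assume $A\succeq 0$ and fix a spectral decomposition $A=\sum_{\alpha}\lambda_{\alpha}\,|v_{\alpha}\rangle\langle v_{\alpha}|$ with $\lambda_{\alpha}\ge 0$ and $|v_{\alpha}\rangle\in\C^{N^{2}}$ having components $(v_{\alpha})_{kl}$. Put $L_{\alpha}:=\sqrt{2\lambda_{\alpha}}\,\sum_{k,l}(v_{\alpha})_{kl}\,E_{kl}$. Substituting into $\sum_{\alpha}\bigl(L_{\alpha}\rho L_{\alpha}^{\dagger}-\tfrac12 L_{\alpha}^{\dagger}L_{\alpha}\rho-\tfrac12\rho L_{\alpha}^{\dagger}L_{\alpha}\bigr)$ and using bilinearity together with the two identities above reproduces the dissipator of \eqn{GKS'} term by term, the factor $2$ in $\sqrt{2\lambda_{\alpha}}$ matching the coefficient $2$ in \eqn{GKS'}. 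Hence \eqn{GKS'} has the form \eqn{Lindblad} with the same $H$, so $\mathcal{L}$ is a Lindbladian.

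For the \emph{only if} direction, assume $\mathcal{L}$ is a Lindbladian, so it has the form \eqn{Lindblad} (equivalently the GKS form \eqn{GKS}). Expand each Lindblad operator in matrix units, $L_{j}=\sum_{k,l}(L_{j})_{kl}\,E_{kl}$, and define
\[
  A_{(k,l),(k',l')}:=\tfrac12\sum_{j}(L_{j})_{kl}\,\overline{(L_{j})_{k'l'}}.
\]
Then $A$ is positive semidefinite, being a sum of the rank-one positive semidefinite matrices $\tfrac12\,|u_{j}\rangle\langle u_{j}|$ with $(u_{j})_{kl}:=(L_{j})_{kl}$; and inserting $L_{j}=\sum_{k,l}(L_{j})_{kl}E_{kl}$ into \eqn{Lindblad} and using the two identities above yields exactly \eqn{GKS'} for this $A$. (Alternatively, starting from \eqn{GKS} with a traceless basis $\{\sigma_{i}\}$ and positive semidefinite $A'$, one may take $A:=C A' C^{\dagger}$, where $C$ is the $N^{2}\times(N^{2}-1)$ matrix whose $i$-th column lists the matrix-unit coordinates of $\sigma_{i}$; positive semidefiniteness of $A'$ passes to $A$.)

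I do not expect a genuine obstacle here: the argument is a change of basis plus the spectral theorem, with the factor of $2$ the only bookkeeping hazard. The one conceptual point worth emphasizing is that the matrix units are not traceless, so the $L_{\alpha}$ built in the \emph{if} direction need not be traceless and $A$ is not the honest (traceless) GKS matrix $A'$; this is precisely the non-uniqueness remarked on just before the theorem---shifting some $L_{j}$ by a multiple of the identity alters $A$ but, after absorbing a Hermitian correction into $H$, leaves $\mathcal{L}$ unchanged. Since \eqn{Lindblad} places no traceless restriction on the $L_{j}$, this costs nothing.
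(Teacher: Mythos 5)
Your proposal is correct and follows essentially the same route as the paper: expand the Lindblad operators in the matrix-unit basis $|k\rangle\langle l|$, identify $A_{(k,l),(k',l')}=\tfrac12\sum_j a_{j;(k,l)}a_{j;(k',l')}^{*}$ as a manifestly positive semidefinite Gram-type matrix, and invert the construction via the spectral decomposition of $A$ for the converse. Your write-up merely spells out the bookkeeping (the two matrix-unit identities and the factor of $2$) that the paper's proof leaves implicit.
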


\begin{proof}
For each $j$ in \eqn{Lindblad}, write the Lindblad operator $L_{j}$ in the computational basis in the form
\begin{align}\label{eqn:Lindblad-ket-bra}
L_{j}=\sum_{k,l\in\range{N}}a_{j;(k,l)}|k\>\<l|.
\end{align}
Then
\begin{align}
&\sum_{j}\big(L_{j}\rho L_{j}^{\dagger}-\frac{1}{2}(L_{j}^{\dagger}L_{j}\rho+\rho L_{j}^{\dagger}L_{j})\big) \nn
&\quad=\frac{1}{2}\sum_{j}\sum_{k,l,k',l'\in\range{N}}a_{j;(k,l)}a_{j;(k',l')}^{*}\big(2\<l|\rho|l'\>|k\>\<k'|-\delta_{kk'}(|l'\>\<l|\rho+\rho|l'\>\<l|)\big).
\end{align}

Define an $N^{2}\times N^{2}$ matrix $A$ with composite indices by
\begin{align}\label{eqn:GKS'-1}
A_{(k,l),(k',l')}:=\frac{1}{2}\sum_{j}a_{j;(k,l)}a_{j;(k',l')}^{*}.
\end{align}
Clearly, $A$ must be positive semidefinite. On the other hand, if $A$ is positive semidefinite, it can always be written in the above form using the spectral decomposition.
\end{proof}

\subsection{Sparse Stinespring isometries}

We develop a framework for simulating sparse Lindbladians by implementing sparse Stinespring isometries. We begin by defining notions of sparsity for Lindbladians and quantum operations.

\begin{definition}\label{defn:d-sparse-map}
A linear map $\mathcal{T}$ on density matrices is \emph{$d$-sparse} if for any pair $(x,y) \in \range{N}^2$,
\begin{align}\label{eqn:d-sparse-map}
\mathcal{T}(|x\>\<y|)\in \spn\{|x'\>\<y'| : x'\in S_{x},\, y'\in S_{y}\},
\end{align}
where $S_{x} \subseteq \range{N}$ is the set of basis states to which $|x\>$ can transit and $\max_{x}|S_{x}|\leq d$. If for arbitrary $x\neq y$, either $S_{x}\cap S_{y}=\emptyset$ or $S_{x}=S_{y}$, then we say $\mathcal{T}$ is \emph{invariantly $d$-sparse}.
\end{definition}

Invariantly sparse operations are especially simple since their evolution is confined to low-dimensional subspaces.
In particular, observe that for any invariantly $d$-sparse Lindbladian $\mathcal{L}$ and any time $t>0$, $\mathcal{E}=e^{\mathcal{L}t}$ is an invariantly $d$-sparse quantum operation.

In our \stine{} framework, we focus on simulating invariantly $d$-sparse Lindbladians. It is helpful to have a similar notion of invariant sparsity for Stinespring isometries.

\begin{definition}\label{defn:d-sparse-Stine}
An isometry $V$ is \emph{invariantly $d$-sparse} if for any $x \in \range{N}$,
\begin{align}\label{eqn:stinespring-defn}
V|x\>=\sum_{i \in [r_{x}]}|\nu^{i}(x)\>|\phi_{x,i}\>
\end{align}
for some (not necessarily normalized) ancilla states $|\phi_{x,i}\>$, where $\nu$ is a permutation of order at most $d$ and $r_{x}\leq d$ is the order of $x$.
\end{definition}

The assumption that $\nu$ is a permutation of order at most $d$ ensures that the quantum operation obtained by applying $V$ and tracing out the ancilla is invariantly $d$-sparse. More precisely, we have $|S_{x}|=r_{x}$ for each $x$, and we may assign a cyclic order of all $r_{x}$ elements in $S_{x}$ according to the order of their indices. Thus $\nu$ can be viewed as a ``neighbor function'' where the first neighbor of $x$ is $\nu(x)$, the second neighbor of $x$ is $\nu^{2}(x)$ (which is also the first neighbor of $\nu(x)$), etc. Then we have $S_{x}=\{x,\nu(x),\ldots,\nu^{r_{x}-1}(x)\}$, which contains all $r_{x}-1$ neighbors of $x$ together with itself.

\begin{example}
With $N=7$, suppose $S_{0}=S_{1}=S_{4}=S_{5}=\{0,1,4,5\}$ and $S_{2}=S_{3}=S_{6}=\{2,3,6\}$. Then we can take the elements of $S_{0}$ (or $S_{1},S_{4},S_{5}$) in the order $0\rightarrow 1\rightarrow 4\rightarrow 5\rightarrow 0$ and of $S_{2}$ (or $S_{3},S_{6}$) in the order $2\rightarrow 3\rightarrow 6\rightarrow 2$. In other words, we can define the permutation $\nu$ to be $\nu(0)=1,\nu(1)=4,\nu(2)=3,\nu(3)=6,\nu(4)=5,\nu(5)=0,\nu(6)=2$.
\end{example}

Our definitions of invariantly $d$-sparse quantum operations and invariantly $d$-sparse Stinespring isometries are equivalent in the following sense:

\begin{proposition}\label{prop:d-sparse-equivalence}
A quantum operation $\mathcal{E}$ is invariantly $d$-sparse if and only if there exists an invariantly $d$-sparse Stinespring isometry $V$ such that
\begin{align}\label{eqn:stinespring-q-operation}
\mathcal{E}(\rho)=\Tr_{\anc}[V\rho V^{\dagger}]
\end{align}
for all density matrices $\rho$.
\end{proposition}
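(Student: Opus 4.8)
The plan is to prove the two implications separately; the forward direction (a sparse isometry yields a sparse operation) is routine, and the converse carries the content. For the forward direction, suppose $V$ is an invariantly $d$-sparse Stinespring isometry, so $V|x\>=\sum_{i\in[r_x]}|\nu^{i}(x)\>|\phi_{x,i}\>$ as in \eqn{stinespring-defn}, and set $\mathcal{E}(\rho):=\Tr_{\anc}[V\rho V^{\dagger}]$. Since $V^{\dagger}V=I$, the map $\mathcal{E}$ is completely positive and trace-preserving, hence a quantum operation. I would then expand
\begin{align}
\mathcal{E}(|x\>\<y|)=\sum_{i\in[r_x]}\sum_{j\in[r_y]}\<\phi_{y,j}|\phi_{x,i}\>\,|\nu^{i}(x)\>\<\nu^{j}(y)|,
\end{align}
which lies in $\spn\{|x'\>\<y'|:x'\in S_x,\,y'\in S_y\}$ for $S_x=\{x,\nu(x),\dots,\nu^{r_x-1}(x)\}$ the $\nu$-orbit of $x$; linearity extends this to every $\rho$. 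Because $\nu$ has order at most $d$ we have $|S_x|=r_x\le d$, and because the orbits of a permutation partition $\range{N}$, any two sets $S_x,S_y$ are equal or disjoint. Hence $\mathcal{E}$ is invariantly $d$-sparse.

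For the converse, let $\mathcal{E}$ be an invariantly $d$-sparse quantum operation, witnessed by sets $\{S_x\}$ with $x\in S_x$ (as in the discussion following \defn{d-sparse-Stine}), with $\max_x|S_x|\le d$ and the pairwise equal-or-disjoint property, and fix an operator-sum representation $\mathcal{E}(\rho)=\sum_m K_m\rho K_m^{\dagger}$. The crux is to show that every Kraus operator respects the structure: $K_m|x\>\in W_x:=\spn\{|x'\>:x'\in S_x\}$ for all $m$. I would argue this from positivity. The operator $\mathcal{E}(|x\>\<x|)=\sum_m K_m|x\>\<x|K_m^{\dagger}$ is positive semidefinite and lies in $\spn\{|x'\>\<y'|:x',y'\in S_x\}$, hence has zero expectation on $W_x^{\perp}$ and therefore (being positive) annihilates $W_x^{\perp}$, i.e.\ its range lies in $W_x$. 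Since $0\preceq K_m|x\>\<x|K_m^{\dagger}\preceq\mathcal{E}(|x\>\<x|)$ in the Loewner order, the range of each rank-one term---the line spanned by $K_m|x\>$---is contained in the range of $\mathcal{E}(|x\>\<x|)$, hence in $W_x$.

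Given this, I would take the canonical Stinespring dilation $V|x\>:=\sum_m(K_m|x\>)\otimes|m\>_{\anc}$ with $\{|m\>\}$ an orthonormal ancilla basis. Then $V$ is an isometry because $\sum_m K_m^{\dagger}K_m=I$ (trace preservation), it satisfies $\Tr_{\anc}[V\rho V^{\dagger}]=\mathcal{E}(\rho)$ by construction, and by the previous paragraph $V|x\>=\sum_{x'\in S_x}|x'\>\otimes|\psi_{x,x'}\>$ for some ancilla vectors $|\psi_{x,x'}\>$. It remains to realize the sets $S_x$ as the cycles of a single permutation of order at most $d$. Since $x\in S_x$, the equal-or-disjoint property forces $S_y=S_x$ for every $y\in S_x$, so the distinct sets among the $S_x$ partition $\range{N}$ into blocks of size at most $d$; choosing an arbitrary cyclic order on each block defines a permutation $\nu$ whose cycle through $x$ is exactly $S_x=\{x,\nu(x),\dots,\nu^{r_x-1}(x)\}$ with $r_x=|S_x|\le d$. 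Relabeling via $|\phi_{x,i}\>:=|\psi_{x,\nu^{i}(x)}\>$ then puts $V$ in the form \eqn{stinespring-defn}, so $V$ is an invariantly $d$-sparse Stinespring isometry.

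I expect the Kraus-operator step to be the main obstacle, as it is the only place where the positivity of $\mathcal{E}$---not merely the fact that $\mathcal{E}$ maps basis operators into sparse spans---is used, via Loewner domination of each rank-one term by $\mathcal{E}(|x\>\<x|)$. Verifying that $\mathcal{E}$ is completely positive and trace-preserving in the forward direction and the bookkeeping that converts $\{S_x\}$ into $\nu$ are then routine consequences of the definitions.
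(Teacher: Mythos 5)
Your proposal is correct, but your converse direction takes a genuinely different route from the paper's. The paper works at the level of the Choi matrix: it writes $\mathcal{E}(|x\>\<y|)=\sum_{i,j}a_{ij}^{xy}|\nu^{i}(x)\>\<\nu^{j}(y)|$, shows that positivity of the Choi matrix forces the coefficient matrix $M_{\mathcal{E}}=\sum_{x,y}\big(\sum_{i,j}a_{ij}^{xy}|i\>\<j|\big)\otimes|x\>\<y|$ to be positive semidefinite, and then takes the ancilla states $|\phi_{x,i}\>$ to be Gram vectors of $M_{\mathcal{E}}$, with the trace-preservation identities supplying the isometry conditions. You instead fix an arbitrary Kraus decomposition, prove via Loewner domination ($0\preceq K_{m}|x\>\<x|K_{m}^{\dagger}\preceq\mathcal{E}(|x\>\<x|)$, so the range of the former is contained in that of the latter) that every Kraus operator individually maps $|x\>$ into $\spn\{|x'\>:x'\in S_{x}\}$, and then read off the sparse form from the canonical dilation $V|x\>=\sum_{m}K_{m}|x\>\otimes|m\>$. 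Both arguments are sound; yours is the more standard operator-theoretic one, and the Loewner step is a clean way to push the sparsity of $\mathcal{E}$ down to each Kraus operator, a fact the paper never needs to make explicit. What the paper's version buys is the object it constructs along the way: $M_{\mathcal{E}}$ is singled out as the Gram matrix of $\mathcal{E}$ in \defn{Gram}, and the simulation framework of \thm{sparse-Stinespring-isometry} proceeds by computing and preparing its Gram vectors, so the Choi-matrix proof doubles as setup for the algorithm. Your construction recovers the same data (the inner products $\<\psi_{y,y'}|\psi_{x,x'}\>$ are exactly the coefficients $a_{ij}^{xy}$), so nothing is lost mathematically, but the Gram-matrix viewpoint would still have to be extracted before the later sections could be applied. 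One hypothesis you invoke and should keep explicit, as the paper does implicitly via $S_{x}=\{x,\nu(x),\ldots,\nu^{r_{x}-1}(x)\}$, is that $x\in S_{x}$: this is what makes the equal-or-disjoint sets partition $\range{N}$ into the cycles of a single permutation $\nu$.
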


\begin{proof}
The ``if'' part is trivial since a $d$-sparse Stinespring isometry can only map $|x\>\<y|$ into $\spn\{|x'\>\<y'|:x'\in S_{x},y'\in S_{y}\}$, where $S_{x}=\{x,\nu(x),\ldots,\nu^{r_{x}-1}(x)\}$ for arbitrary $x$. Also, for arbitrary $x\neq y$, either $S_{x}\cap S_{y}=\emptyset$ or $S_{x}=S_{y}$.

It remains to prove the ``only if'' part. Assume that a quantum operation $\mathcal{E}$ is $d$-sparse. Since for arbitrary $x\neq y$ either $S_{x}\cap S_{y}=\emptyset$ or $S_{x}=S_{y}$, we may define a permutation $\nu$ of order at most $d$ for any input such that $S_{x}=\{x,\nu(x),\ldots,\nu^{r_{x}-1}(x)\}$ for arbitrary $x$, where $r_{x}=|S_{x}|\leq d$ is the order of $x$. Without loss of generality, we may assume that $r_{x}=d$ for each $x$. Thus we may write
\begin{align}\label{eqn:1-sparse-quantum-operation}
\mathcal{E}(|x\>\<y|)=\sum_{i,j=0}^{d-1}a_{ij}^{xy}|\nu^{i}(x)\>\<\nu^{j}(y)|
\end{align}
for all $x,y \in \range{N}$, where $a_{ij}^{xy} \in \C$ for all $i,j\in\range{d}$ and $x,y\in\range{N}$. Since $\mathcal{E}$ is trace-preserving,
\begin{align}
\Tr\big[\mathcal{E}(|x\>\<x|)\big]
&=1 \quad \Rightarrow \quad \sum_{j=0}^{d-1}a_{jj}^{xx}=1\quad\forall\,x \in \range{N};  \label{eqn:restriction-1} \\
\Tr\big[\mathcal{E}(|x\>\<\nu^{i}(x)|)\big]
&=0 \quad \Rightarrow \quad \sum_{j=0}^{d-1}a_{i+j,j}^{x\nu^{i}(x)}=0\quad\forall\,x \in \range{N},\, i\in\{1,\ldots,d-1\}. \label{eqn:restriction-2}
\end{align}
By \eqn{1-sparse-quantum-operation}, the Choi matrix of $\mathcal{E}$, defined as $J(\mathcal{E}):=(\mathcal{E}\otimes\mathcal{I})(\sum_{x,y}|x\>\<y|\otimes|x\>\<y|)$, is
\begin{align}\label{eqn:Choi}
J(\mathcal{E})=\sum_{x,y=0}^{N-1}\big(\sum_{i,j=0}^{d-1}a_{ij}^{xy}|\nu^{i}(x)\>\<\nu^{j}(y)|\big)\otimes|x\>\<y|.
\end{align}
The Choi matrix of any quantum operation is positive semidefinite. In other words, for any state $|\varphi\>=\sum_{k,l=0}^{N-1}c_{kl}|k\>|l\>$, we have $\<\varphi|J(\mathcal{E})|\varphi\>\geq 0$. By \eqn{Choi}, this is equivalent to
\begin{align}\label{eqn:Choi-2}
\sum_{x,y=0}^{N-1}\sum_{i,j=0}^{d-1}c_{\nu^{i}(x),x}a_{ij}^{xy}c_{\nu^{j}(y),y}\geq 0.
\end{align}

Denote
\begin{align}\label{eqn:matrix-fy}
M_{\mathcal{E}}:=\sum_{x,y=0}^{N-1}\big(\sum_{i,j=0}^{d-1}a_{ij}^{xy}|i\>\<j|\big)\otimes|x\>\<y|, \quad
|c\>:=\sum_{z=0}^{N-1}\sum_{k=0}^{d-1}c_{\nu^{k}(z),z}|k\>|z\>,
\end{align}
where we call $M_{\mathcal{E}}$ the Gram matrix of $\mathcal{E}$ (see \defn{Gram} below). Equation \eqn{Choi-2} is equivalent to
\begin{align}\label{eqn:Choi-3}
\<c|M_{\mathcal{E}}|c\>\geq 0.
\end{align}
Since \eqn{Choi-2} holds for any $|\varphi\>\in\C^{N\times N}$, equation \eqn{Choi-3} also holds for any $|c\>\in\C^{dN}$. In other words, $M_{\mathcal{E}}$ is positive semidefinite.

Since every positive semidefinite matrix is a Gram matrix, $M_{\mathcal{E}}$ is a Gram matrix. In other words, there exist vectors $\{|\phi_{x,i}\>\}_{x \in \range{N},\, i \in \range{d}}$ such that for all $x,y\in\range{N}$ and $i,j\in\range{N}$,
\begin{align}\label{eqn:inner-product}
\<\phi_{y,j}|\phi_{x,i}\> &=a_{ij}^{xy}.
\end{align}
For each $x$, define $V|x\>:=\sum_{i=0}^{d-1}|\nu^{i}(x)\>|\phi_{x,i}\>$ as in \eqn{stinespring-defn}. Then by \eqn{restriction-1}, \eqn{restriction-2}, and \eqn{inner-product}, $V$ is a $d$-sparse Stinespring isometry:
\begin{align}
\|V|x\>\|^{2}&=\sum_{j=0}^{d-1}\||\phi_{x,i}\>\|^{2}=\sum_{j=0}^{d-1}a_{jj}^{xx}=1\quad\forall\,x\in\range{N}; \\
\<\nu^{i}(x)|V^{\dagger}V|x\>&=\sum_{j=0}^{d-1}\<\phi_{\nu^{i}(x),j}|\phi_{x,i+j}\>=\sum_{j=0}^{d-1}a_{i+j,j}^{x\nu^{i}(x)}=0\quad\forall\,x\in\range{N},\, i\in\{1,\ldots,d-1\}.
\end{align}
Furthermore, for arbitrary $x,y\in\range{N}$,
\begin{align}
\Tr_{\anc}[V|x\>\<y|V^{\dagger}]&=\Tr_{\anc}\big[\sum_{i,j=0}^{d-1}|\nu^{i}(x)\>|\phi_{x,i}\>\<\nu^{j}(y)|\<\phi_{y,j}|\big] \\
&=\sum_{i,j=0}^{d-1}a_{ij}^{xy}|\nu^{i}(x)\>\<\nu^{j}(y)| \\
&=\mathcal{E}(|x\>\<y|).
\end{align}
Consequently,
\begin{align}
\mathcal{E}(\rho)=\Tr_{\anc}[V\rho V^{\dagger}]
\end{align}
for all density matrices $\rho$, which completes the proof.
\end{proof}

We refer to the matrix $M_{\mathcal{E}}$ appearing in the proof of \prop{d-sparse-equivalence} as the Gram matrix of $\mathcal{E}$:
\begin{definition}\label{defn:Gram}
Given an invariantly $d$-sparse Lindbladian $\mathcal{L}$ and a time $t>0$, the Choi matrix of $\mathcal{E}=e^{\mathcal{L}t}$ can be written as
\begin{align}
J(\mathcal{E})=\sum_{x,y=0}^{N-1}\big(\sum_{i,j=0}^{d-1}a_{ij}^{xy}|\nu^{i}(x)\>\<\nu^{j}(y)|\big)\otimes|x\>\<y|.
\end{align}
Then the \emph{Gram matrix of $\mathcal{E}$} is
\begin{align}\label{eqn:M}
M_{\mathcal{E}}:=\sum_{x,y=0}^{N-1}\big(\sum_{i,j=0}^{d-1}a_{ij}^{xy}|i\>\<j|\big)\otimes|x\>\<y|.
\end{align}
\end{definition}

\subsection{Product formulas for Markovian quantum dynamics}

Product formulas are a useful tool for Hamiltonian simulation because they ensure the additivity of efficient Hamiltonian simulation: if $H_{1}$ and $H_{2}$ can be efficiently simulated, then $H_{1}+H_{2}$ can be efficiently simulated. In particular, reference \cite{berry2007efficient} has carefully analyzed the complexity of Hamiltonian simulation using high-order Suzuki product formulas \cite{suzuki1990fractal,suzuki1991general}.

Often the spectral norm is applied to characterize the magnitude of Hamiltonians or the distance between the unitary evolutions they generate.  While the diamond norm has a clearer operational meaning, the spectral norm distance suffices to characterize unitary dynamics since it is within a constant factor of the diamond norm distance \cite[Lemma 7]{berry2015hamiltonian}.  Recall that the diamond norm is defined as follows:

\begin{definition}[\cite{kitaev1997quantum}]\label{defn:diamond-norm}
Let $\mathcal{T}$ be a superoperator in $\linop{L}(\linop{L}(\C^{N}))$, where $\linop{L}(X)$ is the set of linear operators acting on the space $X$. Define the \emph{diamond norm} of $\mathcal{T}$ as
\begin{align}
\|\mathcal{T}\|_{\diamond}:=\max_{\|\rho\|_{1}=1}\|\mathcal{T}\otimes\mathcal{I}_{N}(\rho)\|_{1},
\end{align}
where $\|\cdot\|_{1}$ is the trace norm (i.e., the sum of the absolute values of the eigenvalues), $\mathcal{I}_{N}$ is the identity superoperator in $\linop{L}(\linop{L}(\C^{N}))$, and $\rho$ is an arbitrary density matrix in $\linop{L}(\C^{N})$.
\end{definition}

Error estimates for first- \cite[Theorem 1]{kliesch2011dissipative} and second-order \cite[Lemma 4]{werner2016positive} product formulas have been generalized to Markovian quantum dynamics. In the following, we provide an explicit error bound for a second-order Lindbladian product formula. 


\begin{proposition}\label{prop:trotter-formula-second}
Fix a constant $m$, let $\mathcal{L}_{1},\mathcal{L}_{2},\ldots,\mathcal{L}_{m}$ be Lindbladians, and let $L:=\max_{i\in\range{n}}\|\mathcal{L}_{i}\|_{\diamond}$. Then for any $0<\epsilon<mtL$ and any positive integer $r\geq\frac{4(mtL)^{3/2}}{\sqrt{\epsilon}}$,
\begin{align}\label{eqn:Trotter-2}
\Big\|\exp\big(t\sum_{i=1}^{m}\mathcal{L}_{i}\big)-\Big(\prod_{i=1}^{m}e^{t\mathcal{L}_{i}/2r}\prod_{j=m}^{1}e^{t\mathcal{L}_{j}/2r}\Big)^{r}\Big\|_{\diamond}\leq\epsilon.
\end{align}
\end{proposition}

\begin{proof}
Let $A:=\exp\big(\lambda\sum_{i=1}^{m}\mathcal{L}_{i}\big)$ and $B:=\prod_{i=1}^{m}e^{\lambda\mathcal{L}_{i}/2}\prod_{j=m}^{1}e^{\lambda\mathcal{L}_{j}/2}$ with $\lambda:=t/r$.
Since quantum operations have unit diamond norm, we have $\|e^{t\mathcal{L}_{i}}\|_{\diamond}=1$ for any $i\in[m]$ and $t\geq 0$. In addition, since quantum operations are closed under composition, $\|A\|_{\diamond}=\|B\|_{\diamond}=1$. Therefore,
\begin{align}
\|A^{r}-B^{r}\|_{\diamond}&=\|A^{r}-A^{r-1}B+A^{r-1}B-A^{r-2}B^{2}+\cdots+AB^{r-1}-B^{r}\|_{\diamond} \\
&\leq\|A^{r-1}(A-B)\|_{\diamond}+\|A^{r-2}(A-B)B\|_{\diamond}+\cdots+\|(A-B)B^{r-1}\|_{\diamond} \\
&\leq r\|A-B\|_{\diamond}. \label{eqn:trotter-2-fourth}
\end{align}
This motivates us to focus on bounding $\|A-B\|_{\diamond}$.

We now confirm that $B$ indeed provides a second-order approximation to $A$. Both 
expressions have the identity as the zeroth-order term and $\lambda\sum_{i=1}^{m}\mathcal{L}_{i}$ as the first-order term. The second-order term of $A$ is
\begin{align}\label{eqn:trotter-2-first}
\frac{\lambda^{2}}{2}\left(\sum_{i=1}^{m}\mathcal{L}_{i}\right)^{2}=\frac{\lambda^{2}}{2}\left(\sum_{i=1}^{m}\mathcal{L}_{i}^{2}+\sum_{1\leq i<j\leq m}(\mathcal{L}_{i}\mathcal{L}_{j}+\mathcal{L}_{j}\mathcal{L}_{i})\right),
\end{align}
while the second-order term of $B$ is
\begin{align}\label{eqn:trotter-2-approx}
2\sum_{i=1}^{m}\frac{(\lambda\mathcal{L}_{i}/2)^{2}}{2} + \sum_{i=1}^{m} (\lambda\mathcal{L}_{i}/2)^{2}+4\sum_{i\neq j} (\lambda\mathcal{L}_{i}/2)(\lambda\mathcal{L}_{j}/2),
\end{align}
which is equal to \eqn{trotter-2-first}.
Here the first term of \eqn{trotter-2-approx} comes from the second-order terms in the Taylor expansions of the two factors of $e^{\lambda\mathcal{L}_{i}/2}$ (and the zeroth-order contributions of other factors) for each $i$, the second term comes from multiplying the first-order term of the Taylor expansion of $e^{\lambda\mathcal{L}_{i}/2}$ in the forward product with the same term in the backward product (again for each $i$), and the third term comes from multiplying a first-order term of a Taylor expansion of $e^{\lambda\mathcal{L}_{i}/2}$ with another first-order term of a Taylor expansion of $e^{\lambda\mathcal{L}_{j}/2}$ where $i\neq j$ (noting that each $\{i,j\}$ pair appears $4$ times, since each index can come from either the forward or the backward product).

Since $A-B$ only has terms of order three and higher in $t$, we can write
\begin{align}\label{eqn:trotter-2-second}
A-B=\exp\big(\lambda\sum_{i=1}^{m}\mathcal{L}_{i}\big)-\prod_{i=1}^{m}e^{\lambda\mathcal{L}_{i}/2}\prod_{j=m}^{1}e^{\lambda\mathcal{L}_{j}/2}=\sum_{k=3}^{\infty}\lambda^{k}\sum_{l=1}^{L_{k}}C_{k,l}\prod_{q=1}^{k}\mathcal{L}_{j_{l_q}}
\end{align}
for some constants $C_{k,l}$ and numbers of terms $L_{k}$ that depend on $m$ and $k$, where $\prod_{q=1}^{k}\mathcal{L}_{j_{l_q}}$ represents the $k$th-order terms in the Taylor series expansion. Specifically, for the first term on the left-hand side of \eqn{trotter-2-second}, expanding $(\mathcal{L}_{1}+\cdots+\mathcal{L}_{m})^{k}$ yields $m^{k}$ terms (with a prefactor of $\lambda^{k}$). On the other hand, the expansion of $\prod_{i=1}^{m}e^{\lambda\mathcal{L}_{i}/2}\prod_{j=m}^{1}e^{\lambda\mathcal{L}_{j}/2}$ can be obtained by expanding each individual exponential, and the number of terms with prefactor $\lambda^{k}$ is equivalent to the number of non-negative integer tuples $(\alpha_1,\ldots,\alpha_{2m})$ such that $\alpha_1+\cdots+\alpha_{2m}=k$. There are $\binom{k+2m-1}{2m-1}$ such tuples, and since $k\geq 3$, we have
\begin{align}
\binom{k+2m-1}{2m-1}=\binom{k+2m-1}{k}\leq 2m/1 \cdot (2m+1)/2 \cdot ((2m+2)/3)^{k-2}\leq (2^{k}-1)m^{k}.
\end{align}
Therefore, at most $(2^{k}-1)m^{k}$ terms with coefficient $\lambda^{k}$ contribute to the second term of the left-hand side of \eqn{trotter-2-second}. In total, there are at most $(2m)^{k}$ terms with coefficient $\lambda^{k}$. Recalling that $L:=\max_{i\in\range{n}}\|\mathcal{L}_{i}\|_{\diamond}$, we have
\begin{align}
\left\|\sum_{k=3}^{\infty}\lambda^{k}\sum_{l=1}^{L_{k}}C_{k,l}\prod_{q=1}^{k}\mathcal{L}_{j_{l_q}}\right\|_{\diamond}&\leq\sum_{k=3}^{\infty}(\lambda L)^{k}L_{k}\leq\sum_{k=3}^{\infty}(\lambda L)^{k}(2m)^{k}=\frac{(2m\lambda L)^{3}}{1-2m\lambda L}\leq 16m^{3}\lambda^{3}L^{3}, \label{eqn:trotter-2-third}
\end{align}
assuming $2m\lambda L\leq 1/2$.

Combining \eqn{trotter-2-fourth}, \eqn{trotter-2-second}, and \eqn{trotter-2-third}, and recalling that $\lambda=t/r$, we have
\begin{align}\label{eqn:Trotter-2-fifth}
\Big\|\exp\big(t\sum_{i=1}^{m}\mathcal{L}_{i}\big)-\Big(\prod_{i=1}^{m}e^{t\mathcal{L}_{i}/2r}\prod_{j=m}^{1}e^{t\mathcal{L}_{j}/2r}\Big)^{r}\Big\|_{\diamond}\leq 16m^{3}t^{3}L^{3}/r^{2}.
\end{align}
To obtain the desired conclusion, it suffices to take $16m^{3}t^{3}L^{3}/r^{2}\leq\epsilon$, i.e., $r\geq\frac{4(mtL)^{3/2}}{\sqrt{\epsilon}}$. In addition, note that $2m\lambda L\leq 1/2$ is equivalent to $r\geq 4mtL$, which holds provided $\epsilon\leq mtL$.
\end{proof}

Since the complexity of Hamiltonian simulation can be reduced using high-order product formulas, it is natural to consider an analogous strategy for Lindbladians. However, because Lindbladian dynamics are irreversible, this is only possible if the coefficients in the product formula are positive. Suzuki proved that product formulas of order $3$ and higher must include negative coefficients \cite[Theorem 3]{suzuki1991general}, so we cannot simulate Lindbladians using higher-order product formulas.

\prop{trotter-formula-second} shows that efficient Lindbladian simulation is closed under positive linear combination. In other words, if $\mathcal{L}_{1},\mathcal{L}_{2},\ldots,\mathcal{L}_{m}$ are (polynomially many) Lindbladians that can be efficiently simulated, then $\sum_{i=1}^{m}\mathcal{L}_{i}$ can be efficiently simulated. Note that unlike the case of Hamiltonian simulation, we are restricted to positive linear combinations since Lindbladian dynamics are not in general invertible.

If we can efficiently simulate a Hamiltonian $H$ with corresponding Lindbladian $\mathcal{H}(\rho) = -i[H,\rho]$ and a Lindbladian $\mathcal{L}$ with zero Hamiltonian, then we can efficiently simulate $\mathcal{H+L}$ by \prop{trotter-formula-second}. Thus we assume in the rest of the paper that the Hamiltonian parts of \eqn{Lindblad}, \eqn{GKS}, and \eqn{GKS'} are zero.

\section{Sparse Stinespring isometry framework}
\label{sec:GKS'}

\subsection{A brief introduction}

We now introduce the \stine{} framework, which can efficiently simulate many classes of Lindbladians that are invariantly sparse in the sense of \defn{d-sparse-map}.

As in sparse Hamiltonian simulation, the \stine{} framework expresses the Lindbladian as a sum of terms, each of which generates evolution within a low-dimensional subspace. To simulate one such Lindbladian $\mathcal{L}$ acting for time $t>0$, first we explicitly compute (in superposition) a classical description of the action of the quantum operation $e^{\mathcal{L}t}$ on that low-dimensional subspace. Specifically, we compute the coefficients of the Gram matrix $M$ of $e^{\mathcal{L}t}$ as defined in \defn{Gram}.

Second, we decompose the Gram matrix $M$ into Gram vectors within low-dimensional subspaces. By \prop{d-sparse-equivalence}, the Gram vectors of $M$ are the ancilla states $|\phi_{x,i}\>/\||\phi_{x,i}\>\|$ of a sparse Stinespring isometry $V$ that implements the quantum operation $e^{\mathcal{L}t}$.

Finally, we efficiently implement $V$ by \thm{sparse-Stinespring-isometry}. The implementation has two stages. First we efficiently implement the isometry using an ancilla system; then we efficiently uncompute the ancilla to obtain the correct isometry. As discussed in the introduction, the second step is more technically challenging, as it requires careful application of the orthogonality properties of the Stinespring isometry to uncompute the ancilla without disturbing the effect of the first stage.

The main difficulty in applying the \stine{} framework is to efficiently decompose $M$ into its Gram vectors and to efficiently prepare each Gram vector.  We demonstrate how to do this in several cases (namely, local Lindbladians and the first four classes presented in \tab{main-result}).
\begin{itemize}
\item In \thm{local-Lindbladians} we show that for a local Lindbladian, this can be done because $M$ has low rank. We present the details in \algo{low-rank-Gram} in \append{low-rank-Gram}.
\item In \thm{singly-sparse} we show that for an \singly{} Lindbladian, this can be done since all matrix elements of the density matrix are invariant under $\mathcal{L}$, i.e., $\mathcal{L}(|x\>\<y|)\propto|x\>\<y|$.
\item In \lem{1-sparse-a} we show that for a strongly 1-sparse-diagonal Lindbladian with neighbor function $\nu$, this can be done since $\{|u\>\<u|,|\nu(u)\>\<\nu(u)|\}$ is a two-dimensional invariant subspace for any $u$, and all off-diagonal terms are invariant.  This facilitates the simulation of sparse-diagonal Lindbladians in \thm{d-sparse}.
\item In \thm{equal-diagonal} we show that for a certain type of dense-diagonal Lindbladian defined by a set of coefficients $\{a_{k}\}$, this can be done since $\{|u\>\<u|,\frac{\sum a_{k}|k\>\<k|}{\sum a_{k}}\}$ is a two-dimensional invariant subspace for any $u$.
\item In \thm{1-k-sparse} we show that for a 1-ket-sparse Lindbladian with neighbor function $\nu$, this can be done since $\{|u\>\<u|,|u\>\<\nu(u)|,|\nu(u)\>\<u|,|\nu(u)\>\<\nu(u)|\}$ is a four-dimensional invariant subspace for any $u$, and all other terms are invariant.
\end{itemize}

\subsection{Efficient implementation of $d$-sparse Stinespring isometries}

The following theorem gives a quantum algorithm to implement a sparse Stinespring isometry given the ability to efficiently prepare its Gram vectors:
\begin{theorem}\label{thm:sparse-Stinespring-isometry}
Let $\mathcal{E}$ be an invariantly $d$-sparse quantum operation as in \defn{d-sparse-map}, and denote its $d$-sparse Stinespring isometry by $V$ where $V|x\>=\sum_{i=0}^{r_{x}-1}|\nu^{i}(x)\>|\phi_{x,i}\>$ as in \defn{d-sparse-Stine}. Suppose we are given a black box that outputs $S_{x}$ on input $x$. Furthermore, suppose that $|\phi_{x,i}\>/\||\phi_{x,i}\>\|$ can be prepared by $O(c_{\phi})$ 2-qubit gates for arbitrary $x\in\range{N},\,i\in\range{r_{x}}$. Then $\mathcal{E}$ can be implemented by $O(d^{2}(\log N+c_{\phi}))$ 2-qubit gates and $O(1)$ queries to the black box.
\end{theorem}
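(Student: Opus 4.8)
The plan is to implement $V$ in two stages, both realized by standard sparse-access techniques. In the first stage I would build a larger isometry $W$ that makes the ``which-neighbor'' index $i$ explicit in an auxiliary register, and in the second stage I would uncompute that register, leaving exactly $V$.

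Concretely, for the first stage: on input $\ket{x}$, query the black box to obtain $S_x$, and compute (in superposition, into fresh registers) the ordered list of its elements $x = \nu^0(x), \nu^1(x), \ldots, \nu^{r_x-1}(x)$, together with the associated norms $\||\phi_{x,i}\>\|$. Using a $d$-sparse-state-preparation subroutine (a controlled rotation cascade of depth $O(d)$ on $O(\log d)$ qubits) I would prepare $\sum_{i=0}^{r_x-1} \||\phi_{x,i}\>\| \ket{i}$ in an index register, then use $i$ and $x$ to look up $\nu^i(x)$ and overwrite the data register with $\ket{\nu^i(x)}$, and finally apply the assumed $O(c_\phi)$-gate preparation of $|\phi_{x,i}\>/\||\phi_{x,i}\>\|$ controlled on $(x,i)$ into the ancilla register. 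This produces
\begin{align}
W\ket{x} = \sum_{i=0}^{r_x-1} \ket{i}\,\ket{\nu^i(x)}\,|\phi_{x,i}\>,
\end{align}
which is a genuine isometry because $\sum_i \||\phi_{x,i}\>\|^2 = 1$ (trace preservation of $\mathcal{E}$, cf.\ \eqn{restriction-1}), and because the index register makes the terms manifestly orthogonal. Each of the $O(d)$ lookups costs $O(\log N)$ gates, the state preparation costs $O(d\log d)$, and the $d$ controlled ancilla preparations cost $O(d\,c_\phi)$; reversible arithmetic for the neighbor list is $O(d\,\mathrm{polylog})$, so the first stage is $O(d(\log N + c_\phi))$ gates.

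The second, harder stage is to erase the index register $\ket{i}$ without touching the rest. The key observation is that $\{V\ket{x} : x\}$ spans an invariant subspace on which the map $\ket{\nu^i(x)}|\phi_{x,i}\> \mapsto \ket{i}\ket{\nu^i(x)}|\phi_{x,i}\>$ is well-defined and isometric — precisely because the orthogonality relations $\sum_j \<\phi_{\nu^i(x),j}|\phi_{x,i+j}\> = 0$ for $i\ge 1$ (from \eqn{restriction-2}) guarantee that within a fixed sparsity class $S_x$, the vectors $V\ket{x}, V\ket{\nu(x)}, \ldots$ are mutually orthogonal, so no collision forces two different $i$-values onto the same $(\nu^i(x), |\phi_{x,i}\>)$. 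Thus $W = U_{\mathrm{idx}} V$ for a partial isometry $U_{\mathrm{idx}}$ that writes the index, and $V = U_{\mathrm{idx}}^\dagger W$. To implement $U_{\mathrm{idx}}^\dagger$ coherently I would, for each element $z$ of $S_x$, prepare the test state $\sum_{j=0}^{r_z-1}\ket{j}|\phi_{\nu^j(z),\, \text{shifted}}\>$-type reference and use amplitude-estimation-free ``compute the index by orthogonal projection'' — more plainly, since the data register already holds $\nu^i(x)$ and we know (via a second black-box query on the data value) the cyclic list containing it, we can recompute $i$ as the position of the current data value in $S_x$'s ordering, XOR it into a scratch copy of the index register, and verify it matches; then the original index register is uncomputed by running the state-preparation cascade of stage one in reverse. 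The subtlety is that ``the position of $\nu^i(x)$ in the list'' is the same for every $x$ in the class, so this recomputation is consistent across the superposition, and it disturbs nothing because it is a classical function of the (untouched) data register. This recomputation costs another $O(d)$ lookups and $O(d\log d)$ arithmetic, i.e.\ $O(d(\log N + c_\phi))$ gates and $O(1)$ further black-box queries. Finally, tracing out the ancilla register gives $\mathcal{E}(\rho) = \Tr_{\anc}[V\rho V^\dagger]$ by \prop{d-sparse-equivalence}.

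The main obstacle I anticipate is making the second stage rigorous: one must argue that recomputing $i$ from the data value is a legitimate, reversible operation on the image of $W$ despite the fact that, on inputs \emph{outside} that image, the would-be inverse $U_{\mathrm{idx}}^\dagger$ is only a partial isometry. The resolution is exactly the orthogonality structure noted above — the invariant-$d$-sparse condition forces the relevant Stinespring vectors to live in orthogonal one-per-class ``fibers,'' so the index is a deterministic function of (sparsity class, data value) on the support we care about — and the bookkeeping is what turns the $O(d(\log N + c_\phi))$ per-stage cost into the stated $O(d^2(\log N + c_\phi))$ once we account for having to handle all $d$ neighbors' worth of data when classes can overlap. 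I would present stage one as a lemma, stage two as the core argument, and conclude by collecting gate counts.
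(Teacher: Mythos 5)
Your first stage is essentially the paper's $U_{\rightarrow}$: prepare $\sum_i \||\phi_{x,i}\>\|\,|i\>$, shift the data register to $|\nu^i(x)\>$, and load the ancilla states, at cost $O(d(\log N + c_\phi))$. That part is fine. The gap is in your second stage, and it is exactly the point the theorem turns on. You claim that on the image of $W$ the index $i$ is ``a deterministic function of (sparsity class, data value),'' so it can be recomputed from the data register and uncomputed. This is false. The circuit must be a single unitary acting correctly on \emph{every} basis state of the class, and on input $|\nu^k(x)\>$ stage one produces $\sum_i |i\>|\nu^{i+k}(x)\>|\phi_{\nu^k(x),i}\>$: the data value $\nu^j(x)$ is paired with index $j$ when the input is $|x\>$ but with index $j-k$ when the input is $|\nu^k(x)\>$. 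Already for $d=2$, $W|x\> = |0\>|x\>|\phi_{x,0}\> + |1\>|\nu(x)\>|\phi_{x,1}\>$ while $W|\nu(x)\> = |0\>|\nu(x)\>|\phi_{\nu(x),0}\> + |1\>|x\>|\phi_{\nu(x),1}\>$, so the data value $\nu(x)$ occurs with index $1$ in one branch and index $0$ in the other. Your ``compute the position of the data value in the ordering, XOR, verify'' step therefore writes the wrong value on every input of the class other than the representative, the index register is not erased, and tracing it out destroys the coherences $\mathcal{E}(|\nu^k(x)\>\<\nu^{k'}(x)|)$ that the theorem must preserve. The orthogonality of the vectors $V|x\>, V|\nu(x)\>, \ldots$ that you invoke does not rescue this: the only place the offset $k$ is recorded is in the ancilla states $|\phi_{\nu^k(x),i}\>$, and these are \emph{not} individually orthogonal across different $k$ --- only the full fiber states $|\Phi_{x,k}\> = \sum_i |\nu^{i+k}(x)\>|\phi_{\nu^k(x),i}\>$ are mutually orthogonal.

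This is why the paper's $U_{\leftarrow}$ cannot be a classical recompute-and-erase: it is built from the gadgets $\tilde{U}_{x,k}$, which conjugate a permutation controlled on the whole fiber state $|\Phi_{x,k}\>$ by permutations controlled jointly on the data register and on the \emph{ancilla states themselves}, and it is the trace-preservation identities $\sum_j \<\phi_{x,j+k}|\phi_{\nu^k(x),j}\> = 0$ (\eqn{trace-preserving}) that guarantee each $\tilde{U}_{x,k}$ shifts the index on its own fiber while acting as the identity on all the others. That mechanism is also where the quadratic cost comes from: $d-1$ gadgets, each containing $O(d)$ state-controlled permutations, giving $O(d^2(\log N + c_\phi))$. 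In your write-up the $d^2$ is asserted to make the totals match rather than produced by any concrete construction. To repair the proof you would need to replace the ``recompute $i$ from the data value'' step with an erasure procedure that is controlled on the joint data--ancilla state and justified by the orthogonality relations above.
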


Without loss of generality, we can assume that $r_{x}=d$ for any $x$ since the proof is identical if $r_{x}<d$.

It suffices to show that for each $x$, we can implement a unitary gate $U_{S_{x}}$ acting as
\begin{align}\label{eqn:U}
U_{S_{x}}|\nu^{k}(x)\>|0\>&=\sum_{j=0}^{d-1}|\nu^{j+k}(x)\>|\phi_{\nu^{k}(x),j}\> \quad\forall\,k\in\range{d}
\end{align}
using $O(d^{2}(\log N+c_{\phi}))$ 2-qubit gates. Then we can implement the following procedure:
\begin{align}
|x\>|0\>&\xmapsto{\text{compute}}|x\>|0\>|S_{x}\>|U_{S_{x}}\> \\
&\xmapsto{\text{by \eqn{U}}}U_{S_{x}}|x\>|0\>|S_{x}\>|U_{S_{x}}\> \\
&\xmapsto{\text{uncompute}}U_{S_{x}}|x\>|0\>=\sum_{j=0}^{d-1}|\nu^{i}(x)\>|\phi_{x,i}\>,
\end{align}
where $|U_{S_{x}}\>$ represents a classical description of $U_{S_{x}}$ and $|S_{x}\>$ is computed/uncomputed by the black box.

We implement $U_{S_{x}}$ by a two-stage quantum circuit. First we prepare a unitary $U_{\rightarrow}$ such that for each $k\in\range{d}$,
\begin{align}\label{eqn:phase-1}
U_{\rightarrow}|0\>|\nu^{k}(x)\>|0\>=\sum_{i=0}^{d-1}|i\>|\nu^{i+k}(x)\>|\phi_{\nu^{k}(x),i}\>.
\end{align}
Then we uncompute the ancilla: we implement a unitary $U_{\leftarrow}$ such that for each $k\in\range{d}$,
\begin{align}\label{eqn:phase-2}
U_{\leftarrow}\sum_{i=0}^{d-1}|i\>|\nu^{i+k}(x)\>|\phi_{\nu^{k}(x),i}\>=\sum_{i=0}^{d-1}|0\>|\nu^{i+k}(x)\>|\phi_{\nu^{k}(x),i}\>.
\end{align}
After both steps, we have
\begin{align}\label{eqn:U'}
U_{\leftarrow}U_{\rightarrow}|0\>|\nu^{k}(x)\>|0\>=|0\>\sum_{j=0}^{d-1}|\nu^{j+k}(x)\>|\phi_{\nu^{k}(x),j}\> \quad\forall\,k\in\range{d},
\end{align}
which gives the unitary $U_{S_{x}}$ that we want to implement.

\subsection{Proof of \thm{sparse-Stinespring-isometry}}

\subsubsection{Notations}

By assumption, for arbitrary $k,i\in\range{d}$ there exists a unitary gate $\Phi_{\nu^{k}(x),i}$ satisfying
\begin{align}\label{eqn:U_x,i}
\Phi_{\nu^{k}(x),i}|0\>&=\frac{|\phi_{\nu^{k}(x),i}\>}{\||\phi_{\nu^{k}(x),i}\>\|},
\end{align}
where $\Phi_{\nu^{k}(x),i}$ can be implemented by $O(c_{\phi})$ 2-qubit gates.

In addition, for each $x\in\range{d}$ we define a unitary gate $W_{x}$ acting as
\begin{align}
W_{x}|0\>&=\sum_{i=0}^{d-1}\||\phi_{x,i}\>\| \, |i\>.
\end{align}
This gate can be implemented using $O(d)$ 2-qubit gates \cite{shende2005synthesis}.

For any $x\in\range{N}$ and any unitary gate $U$, let $\wedge_{x}(U) := |x\>\<x| \otimes U + (I-|x\>\<x|) \otimes I$ be the gate that performs $U$ on the target register when the control register is in the state $|x\>$. See \fig{building-block-1} for an example.

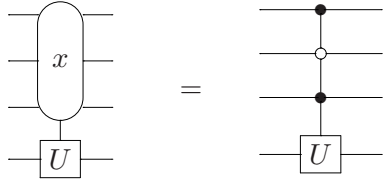
\begin{figure}[H]
\centering
\begin{subfigure}{0.5\textheight}
\centering
\Qcircuit @C=1em @R=.7em {
   & \multimeasure{2}{x} & \qw \\
   & \ghost{x} & \qw \\
   & \ghost{x} \qwx[1] & \qw \\
   & \gate{U} & \qw
}
\end{subfigure}
\qquad=\qquad
\begin{subfigure}{0.5\textheight}
\centering
\Qcircuit @C=1.4em @R=1.2em {
   & \ctrl{1} & \qw \\
   & \ctrlo{1} & \qw \\
   & \ctrl{1} & \qw \\
   & \gate{U} & \qw
}
\end{subfigure}
\caption{The gate $\wedge_{x}(U)$ with $x=101$.}
\label{fig:building-block-1}
\end{figure}

More generally, we consider unitary gates that are also controlled on a given quantum state.  For any $x \in [N]$, quantum state $|\phi\>$, and unitary gate $U$, let $\wedge_{x,|\phi\>}(U) := |x\>\<x| \otimes |\phi\>\<\phi| \otimes U + (I - |x\>\<x| \otimes |\phi\>\<\phi|) \otimes I$ be the gate that performs $U$ on the target register only if the first control register is $|x\>$ and the second control register is in the state $|\phi\>$.  An implementation of such a gate is shown in \fig{statecontrol}, where $\Phi$ is a unitary operation satisfying $\Phi|0\> = |\phi\>$.  We analogously define $\wedge_{|\phi\>}(U) := |\phi\>\<\phi| \otimes I + (I - |\phi\>\<\phi|) \otimes I$ (i.e., the application of $U$ is only controlled on the quantum state $|\phi\>$).

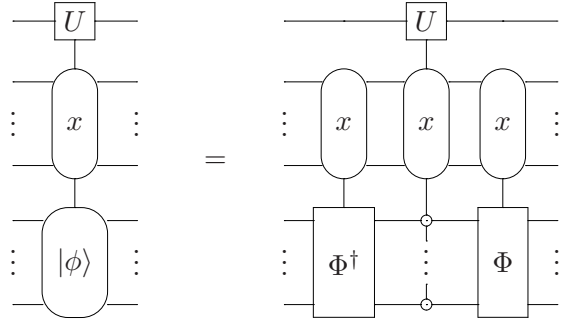
\begin{figure}[H]
\centering
\begin{subfigure}{0.5\textheight}
\centering
\Qcircuit @C=1em @R=1em {
   & \gate{U} \qwx[1] & \qw \\
   & \multimeasure{2}{x} & \qw \\
\raisebox{6pt}{\vdots} & & \raisebox{6pt}{\vdots} \\
   & \ghost{x} \qwx[1] & \qw \\
   & \multimeasure{2}{|\phi\>} & \qw \\
\raisebox{6pt}{\vdots} & & \raisebox{6pt}{\vdots} \\
   & \ghost{|\phi\>} & \qw
}
\end{subfigure}
\qquad=\qquad
\begin{subfigure}{0.5\textheight}
\centering
\Qcircuit @C=1em @R=1em {
   & \qw & \gate{U} \qwx[1] & \qw & \qw \\
   & \multimeasure{2}{x} & \multimeasure{2}{x} & \multimeasure{2}{x} & \qw \\
\raisebox{6pt}{\vdots} & & & & \raisebox{6pt}{\vdots} \\
   & \ghost{x} \qwx[1] & \ghost{x} \qwx[1] & \ghost{x} \qwx[1] & \qw \\
   & \multigate{2}{\Phi^{\dagger}} & \ctrlo{0} {\ar @{-}+<0em,-0.7em>} &  \multigate{2}{\Phi} & \qw \\
\raisebox{6pt}{\vdots} & & \raisebox{6pt}{\vdots} & & \raisebox{6pt}{\vdots} \\
   & \ghost{\Phi^{\dagger}} & \ctrlo{0} {\ar @{-}+<0em,0.7em>} & \ghost{\Phi} & \qw
}
\end{subfigure}
\caption{The gate $\wedge_{x,|\phi\>}(U)$.}
\label{fig:statecontrol}
\end{figure}

Let $P\colon \range{d} \to \range{d}$ be the permutation
\begin{align}
P|j\>:=|j-1\bmod d\> \quad\forall\, j\in\range{d}.
\end{align}
Furthermore, for arbitrary $k\in\range{d}$, define
\begin{align}
|\Phi_{x,k}\>:=\sum_{i=0}^{d-1}|\nu^{i+k}(x)\>|\phi_{\nu^{k}(x),i}\>.
\end{align}
Since $|\phi_{\nu^{k}(x),i}\>/\||\phi_{\nu^{k}(x),i}\>\|$ can be prepared by $O(c_{\phi})$ 2-qubit gates for each $i$ and $k$, the state $|\Phi_{x,k}\>$ can be prepared by $O(dc_{\phi})$ 2-qubit gates for each $k$.

Using the controlled gates defined above, for each $k\in\{1,2,\ldots,d-1\}$ we define
\begin{align}
\tilde{U}_{x,k}&:=\wedge_{\nu^{k}(x),|\phi_{\nu^{k}(x),0}\>}(P^{-k}) \, \wedge_{\nu^{k+1}(x),|\phi_{\nu^{k}(x),1}\>}(P^{-(k+1)})\cdots\wedge_{\nu^{k-1}(x),|\phi_{\nu^{k}(x),d-1}\>}(P^{-(k-1)}) \\
&\quad \wedge_{|\Phi_{x,k}\>}(P^{-k}) \, \wedge_{\nu^{k-1}(x),|\phi_{\nu^{k}(x),d-1}\>}(P^{k-1}) \cdots \wedge_{\nu^{k+1}(x),|\phi_{\nu^{k}(x),1}\>}(P^{k+1}) \, \wedge_{\nu^{k}(x),|\phi_{\nu^{k}(x),0}\>}(P^{k}), \nonumber
\end{align}
as shown in \fig{U_x,k}.
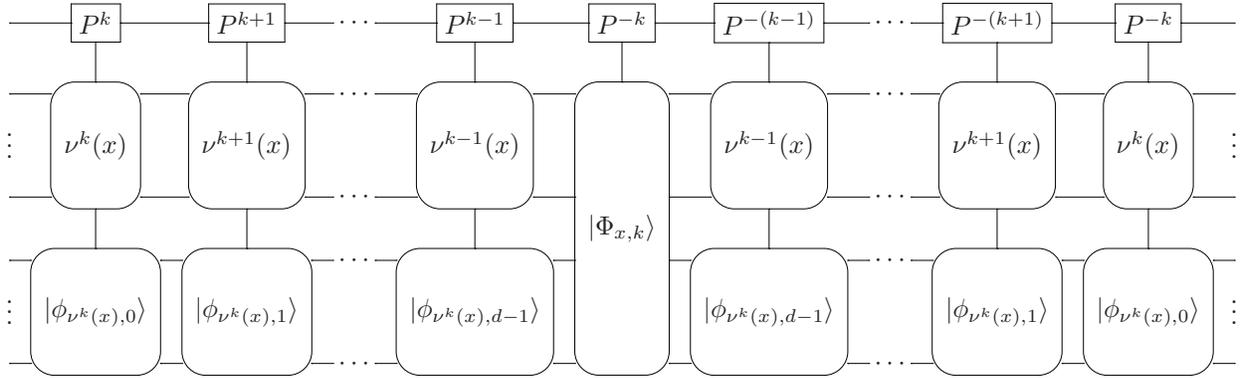
\begin{figure}[H]
{\small
  \setlength{\abovedisplayskip}{6pt}
  \setlength{\belowdisplayskip}{\abovedisplayskip}
  \setlength{\abovedisplayshortskip}{0pt}
  \setlength{\belowdisplayshortskip}{3pt}
\begin{align*}
\Qcircuit @C=0.8em @R=1.5em {
& \gate{P^{k}} \qwx[1] & \gate{P^{k+1}} \qwx[1] & \qw & \cdots & & \gate{P^{k-1}} \qwx[1] & \gate{P^{-k}} \qwx[1] & \gate{P^{-(k-1)}} \qwx[1] & \qw & \cdots & & \gate{P^{-(k+1)}} \qwx[1] & \gate{P^{-k}} \qwx[1] & \qw \\
& \multimeasure{2}{\nu^{k}(x)} & \multimeasure{2}{\nu^{k+1}(x)} & \qw &  \cdots & & \multimeasure{2}{\nu^{k-1}(x)} & \multimeasure{5}{|\Phi_{x,k}\>} & \multimeasure{2}{\nu^{k-1}(x)} & \qw & \cdots & & \multimeasure{2}{\nu^{k+1}(x)} & \multimeasure{2}{\nu^{k}(x)} & \qw \\
\raisebox{6pt}{\vdots} & & & & & & & & & & & & & & \raisebox{6pt}{\vdots} \\
& \ghost{\nu^{k}(x)} \qwx[1] & \ghost{\nu^{k+1}(x)} \qwx[1] & \qw & \cdots & & \ghost{\nu^{k-1}(x)} \qwx[1] & \ghost{|\Phi_{x,k}\>} & \ghost{\nu^{k-1}(x)} \qwx[1] & \qw & \cdots & & \ghost{\nu^{k+1}(x)} \qwx[1] & \ghost{\nu^{k}(x)} \qwx[1] & \qw \\
& \multimeasure{2}{|\phi_{\nu^{k}(x),0}\>} & \multimeasure{2}{|\phi_{\nu^{k}(x),1}\>} & \qw & \cdots & & \multimeasure{2}{|\phi_{\nu^{k}(x),d-1}\>} & \ghost{|\Phi_{x,k}\>} & \multimeasure{2}{|\phi_{\nu^{k}(x),d-1}\>} & \qw &  \cdots & & \multimeasure{2}{|\phi_{\nu^{k}(x),1}\>} & \multimeasure{2}{|\phi_{\nu^{k}(x),0}\>} & \qw \\
\raisebox{6pt}{\vdots} & & & & & & & & & & & & & & \raisebox{6pt}{\vdots} \\
& \ghost{|\phi_{\nu^{k}(x),0}\>} & \ghost{|\phi_{\nu^{k}(x),1}\>} & \qw & \cdots & & \ghost{|\phi_{\nu^{k}(x),d-1}\>} & \ghost{|\Phi_{x,k}\>} & \ghost{|\phi_{\nu^{k}(x),d-1}\>} & \qw & \cdots & & \ghost{|\phi_{\nu^{k}(x),1}\>} & \ghost{|\phi_{\nu^{k}(x),0}\>} & \qw \\
}
\end{align*}
\caption{The quantum circuit for $\tilde{U}_{x,k}$.}
\label{fig:U_x,k}
}%
\end{figure}

\subsubsection{Proof}

\begin{proof}[Proof of \thm{sparse-Stinespring-isometry}]
Define $U_{\rightarrow}$ and $U_{\leftarrow}$ to be the quantum circuits shown in \fig{phase-1} and \fig{phase-2}, respectively.
\begin{figure}[htbp]
{\small
  \setlength{\abovedisplayskip}{6pt}
  \setlength{\belowdisplayskip}{\abovedisplayskip}
  \setlength{\abovedisplayshortskip}{0pt}
  \setlength{\belowdisplayshortskip}{3pt}
\begin{align*}
\Qcircuit @C=0.8em @R=1.5em {
\lstick{|0\>} & \gate{W_{x}} \qwx[1] & \gate{W_{\nu(x)}} \qwx[1] & \qw & \cdots & & \gate{W_{\nu^{d-1}(x)}} \qwx[1] & \measure{0} \qwx[1] & \measure{1} \qwx[1] & \qw & \cdots & & \measure{d-1} \qwx[1] & \measure{1} \qwx[1] & \measure{2} \qwx[1] & \qw & \cdots & & \measure{d-1} \qwx[1] & \qw \\
& \multimeasure{2}{x} & \multimeasure{2}{\nu(x)} & \qw & \cdots & & \multimeasure{2}{\nu^{d-1}(x)} & \multimeasure{2}{x} & \multimeasure{2}{x} & \qw & \cdots & & \multimeasure{2}{\nu^{d-1}(x)} & \multigate{2}{\nu} & \multigate{2}{\nu^{2}} & \qw & \cdots & & \multigate{2}{\nu^{d-1}} & \qw \\
\raisebox{6pt}{\vdots} & & & & & & & & & & & & & & & & & & & \raisebox{6pt}{\vdots} \\
& \ghost{x} & \ghost{\nu(x)} & \qw & \cdots & &\ghost{\nu^{d-1}(x)} & \ghost{x} \qwx[1] & \ghost{x} \qwx[1] & \qw & \cdots & & \ghost{\nu^{d-1}(x)} \qwx[1] & \ghost{\nu} & \ghost{\nu^{2}} & \qw & \cdots & & \ghost{\nu^{d-1}} & \qw \\
\lstick{|0\>} & \qw & \qw & \qw & \cdots & & \qw & \multigate{2}{\Phi_{x,0}} & \multigate{2}{\Phi_{x,1}} & \qw & \cdots & & \multigate{2}{\Phi_{\nu^{d-1}(x),d-1}} & \qw & \qw & \qw & \cdots & & \qw & \qw \\
\raisebox{6pt}{\vdots} & & & & & & & & & & & & & & & & & & & \raisebox{6pt}{\vdots} \\
\lstick{|0\>} & \qw & \qw & \qw & \cdots & & \qw & \ghost{\Phi_{x,0}} & \ghost{\Phi_{x,1}} & \qw  & \cdots & & \ghost{\Phi_{\nu^{d-1}(x),d-1}} & \qw & \qw & \qw & \cdots & & \qw & \qw \\
}
\end{align*}
\caption{The quantum circuit for $U_{\rightarrow}$.}
\label{fig:phase-1}
}%
\end{figure}
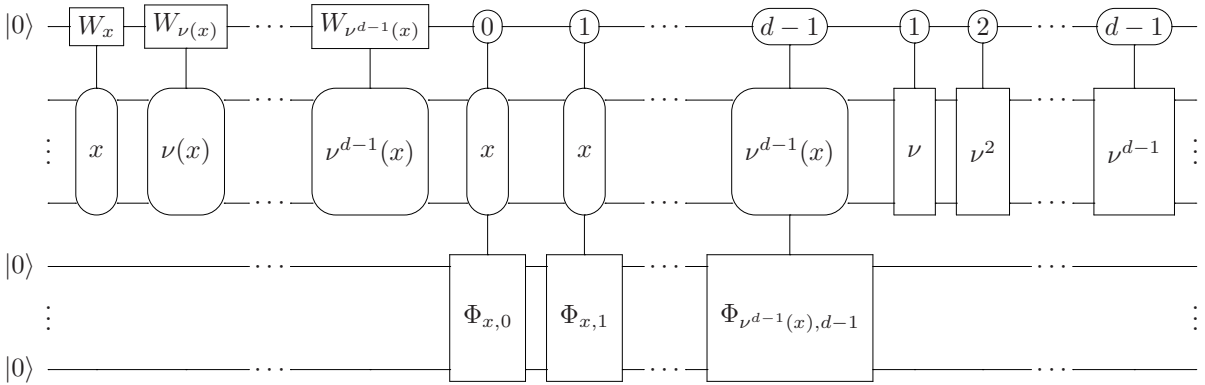

\begin{figure}[htbp]
{
  \setlength{\abovedisplayskip}{6pt}
  \setlength{\belowdisplayskip}{\abovedisplayskip}
  \setlength{\abovedisplayshortskip}{0pt}
  \setlength{\belowdisplayshortskip}{3pt}
\begin{align*}
\Qcircuit @C=0.9em @R=1.5em {
& \multigate{6}{\tilde{U}_{x,1}} & \multigate{6}{\tilde{U}_{x,2}} & \qw & \cdots & & \multigate{6}{\tilde{U}_{x,d-1}} & \gate{P} \qwx[1] & \gate{P^{2}} \qwx[1] & \qw & \cdots & & \gate{P^{d-1}} \qwx[1] & \qw \\
& \ghost{\tilde{U}_{x,1}} & \ghost{\tilde{U}_{x,2}} & \qw & \cdots & & \ghost{\tilde{U}_{x,d-1}} & \multimeasure{2}{x} & \multimeasure{2}{\nu(x)} & \qw & \cdots & & \multimeasure{2}{\nu^{d-1}(x)} & \qw \\
\raisebox{6pt}{\vdots} & & & & & & & & & & & & & \raisebox{6pt}{\vdots} \\
& \ghost{\tilde{U}_{x,1}} & \ghost{\tilde{U}_{x,2}} & \qw & \cdots & & \ghost{\tilde{U}_{x,d-1}} & \ghost{x} & \ghost{\nu(x)} & \qw & \cdots & &\ghost{\nu^{d-1}(x)} & \qw \\
& \ghost{\tilde{U}_{x,1}} & \ghost{\tilde{U}_{x,2}} & \qw & \cdots & & \ghost{\tilde{U}_{x,d-1}} & \qw & \qw & \qw & \cdots & & \qw & \qw \\
\raisebox{6pt}{\vdots} & & & & & & & & & & & & & \raisebox{6pt}{\vdots} \\
& \ghost{\tilde{U}_{x,1}} & \ghost{\tilde{U}_{x,2}}& \qw & \cdots & & \ghost{\tilde{U}_{x,d-1}} & \qw & \qw & \qw & \cdots & & \qw & \qw \\
}
\end{align*}
\caption{The quantum circuit for $U_{\leftarrow}$.}
\label{fig:phase-2}
}%
\end{figure}
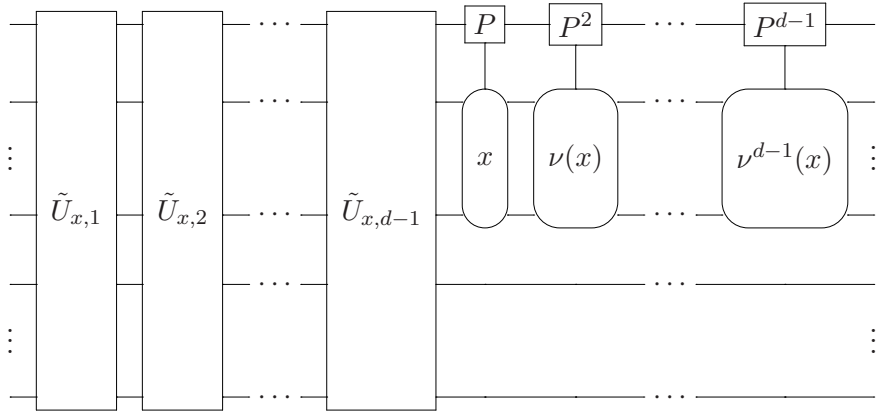

We show below that $U_{\leftarrow}U_{\rightarrow}$ satisfies \eqn{U'}. Using the explicit implementations of $U_{\rightarrow}$ and $U_{\leftarrow}$ shown above, the desired unitary $U_{S_{x}}$ in \eqn{U} can be implemented efficiently.

First consider $U_{\rightarrow}$. It has the effect claimed in \eqn{phase-1} because for any $k\in\range{d}$, it acts as
\begin{align}
|0\>|\nu^{k}(x)\>|0\>&\xmapsto{\wedge_{\nu^{k}(x)}(W_{\nu^{k}(x)})} \sum_{i=0}^{d-1}\||\phi_{\nu^{k}(x),i}\>\||i\>|\nu^{k}(x)\>|0\> \\
&\xmapsto{\Phi_{\nu^{k}(x),0},\dots,\Phi_{\nu^{k}(x),d-1}} \sum_{i=0}^{d-1}|i\>|\nu^{k}(x)\>|\phi_{\nu^{k}(x),i}\> \\
&\xmapsto{\nu,\dots,\nu^{d-1}} \sum_{i=0}^{d-1}|i\>|\nu^{i+k}(x)\>|\phi_{\nu^{k}(x),i}\>.
\end{align}

Now we show that $U_{\leftarrow}$ has the effect claimed in \eqn{phase-2}. First observe that, by \eqn{stinespring-defn}, for each $k\in\{1,2,\ldots,d-1\}$ we have
\begin{align}\label{eqn:trace-preserving}
\Tr[\mathcal{E}(|\nu^{k}(x)\>\<x|)]=0\quad\Rightarrow\quad\sum_{j=0}^{d-1}\<\phi_{x,j+k}|\phi_{\nu^{k}(x),j}\>=0.
\end{align}

On the one hand, applying $\tilde{U}_{x,1}$ to $\sum_{i=0}^{d-1}|i\>|\nu^{i+1}(x)\>|\phi_{\nu(x),i}\>$ gives
\begingroup
\allowdisplaybreaks
\begin{align}
\sum_{i=0}^{d-1}|i\>|\nu^{i+1}(x)\>|\phi_{\nu(x),i}\>&\xmapsto{\wedge_{\nu(x),|\phi_{\nu(x),0}\>}(P^{1})}  |-1\>|\nu(x)\>|\phi_{\nu(x),0}\>+\sum_{i=1}^{d-1}|i\>|\nu^{i+1}(x)\>|\phi_{\nu(x),i}\> \\
&\ \vdots \nonumber \\
&\xmapsto{\wedge_{\nu^{0}(x),|\phi_{\nu(x),d-1}\>}(P^{0})}  |-1\>\sum_{i=0}^{d-1}|\nu^{i+1}(x)\>|\phi_{\nu(x),i}\>= |-1\>|\Phi_{x,1}\> \\
&\xmapsto{\wedge_{|\Phi_{x,1}\>}(P^{-1})}  |0\>\sum_{i=0}^{d-1}|\nu^{i+1}(x)\>|\phi_{\nu(x),i}\> \\
&\xmapsto{\wedge_{\nu^{0}(x),|\phi_{\nu(x),d-1}\>}(P^{-0})}  \sum_{i=0}^{d-2}|0\>|\nu^{i+1}(x)\>|\phi_{\nu(x),i}\>+|d\>|\nu^{d}(x)\>|\phi_{\nu(x),d-1}\> \\
&\ \vdots \nonumber \\
&\xmapsto{\wedge_{\nu(x),|\phi_{\nu(x),0}\>}(P^{-1})}  \sum_{i=0}^{d-2}|i+1\>|\nu^{i+1}(x)\>|\phi_{\nu(x),i}\>.
\end{align}%
\endgroup
\normalsize
In other words,
\begin{align}
\tilde{U}_{x,1}\sum_{i=0}^{d-1}|i\>|\nu^{i+1}(x)\>|\phi_{\nu(x),i}\>=\sum_{i=0}^{d-1}|i+1\>|\nu^{i+1}(x)\>|\phi_{\nu(x),i}\>.
\end{align}

On the other hand, for any $k\neq 1$, the circuit before the $\wedge_{|\Phi_{x,1}\>}(P^{-1})$ gate in $\tilde{U}_{x,1}$ maps $\sum_{i=0}^{d-1}|i\>|\nu^{i+k}(x)\>|\phi_{\nu^{k}(x),i}\>$ to
\begin{align}
\sum_{i=0}^{d-1}|i\>|\nu^{i+k}(x)\>|\phi_{\nu^{k}(x),i}\>&= \sum_{i=0}^{d-1}|i\>|\nu^{i+k}(x)\>\<\phi_{\nu(x),i+k-1}|\phi_{\nu^{k}(x),i}\>|\phi_{\nu(x),i+k-1}\> \nonumber \\
&\quad + \sum_{i=0}^{d-1}|i\>|\nu^{i+k}(x)\>\big(|\phi_{\nu^{k}(x),i}\>-\<\phi_{\nu(x),i+k-1}|\phi_{\nu^{k}(x),i}\>|\phi_{\nu(x),i+k-1}\>\big) \\
&\mapsto |-1\>\sum_{i=0}^{d-1}|\nu^{i+k}(x)\>\<\phi_{\nu(x),i+k-1}|\phi_{\nu^{k}(x),i}\>|\phi_{\nu(x),i+k-1}\> \nonumber \\
&\quad + \sum_{i=0}^{d-1}|i\>|\nu^{i+k}(x)\>\big(|\phi_{\nu^{k}(x),i}\>-\<\phi_{\nu(x),i+k-1}|\phi_{\nu^{k}(x),i}\>|\phi_{\nu(x),i+k-1}\>\big). \label{eqn:intermediate}
\end{align}
By \eqn{trace-preserving},
\begin{align}\label{eqn:orthogonal-1}
\<\Phi_{x,1}|\big(\sum_{i=0}^{d-1}|\nu^{i+k}(x)\>\<\phi_{\nu(x),i+k-1}|\phi_{\nu^{k}(x),i}\>|\phi_{\nu(x),i+k-1}\>\big)=\sum_{i=0}^{d-1}\<\phi_{\nu(x),i+k-1}|\phi_{\nu^{k}(x),i}\>=0.
\end{align}
Furthermore,
\begin{align}
& \<\Phi_{x,1}|\big(|\nu^{i+k}(x)\>(|\phi_{\nu^{k}(x),i}\>-\<\phi_{\nu(x),i+k-1}|\phi_{\nu^{k}(x),i}\>|\phi_{\nu(x),i+k-1}\>)\big) \nonumber \\
&\quad= \<\phi_{\nu(x),i+k-1}|\phi_{\nu^{k}(x),i}\>-\<\phi_{\nu(x),i+k-1}|\phi_{\nu^{k}(x),i}\>=0. \label{eqn:orthogonal-2}
\end{align}
Therefore, by \eqn{orthogonal-1} and \eqn{orthogonal-2} we know that the $\wedge_{|\Phi_{x,1}\>}(P^{-1})$ gate in $\tilde{U}_{x,1}$ does not change the state in \eqn{intermediate}. Finally, the part of $\tilde{U}_{x,1}$ after the $\wedge_{|\Phi_{x,1}\>}(P^{-1})$ gate in $\tilde{U}_{x,1}$ maps \eqn{intermediate} back to $\sum_{i=0}^{d-1}|i\>|\nu^{i+k}(x)\>|\phi_{\nu^{k}(x),i}\>$.

Consequently, $\tilde{U}_{x,1}$ maps $\sum_{i=0}^{d-1}|i\>|\nu^{i+1}(x)\>|\phi_{\nu(x),i}\>$ to $\sum_{i=0}^{d-1}|i+1\>|\nu^{i+1}(x)\>|\phi_{\nu(x),i}\>$, while all other $d-1$ possible inputs $\sum_{i=0}^{d-1}|i\>|\nu^{i+k}(x)\>|\phi_{\nu^{k}(x),i}\>$ (for $k\neq 1$) remain fixed. The same happens for all $\tilde{U}_{x,k}$ with $k\in\{1,\ldots,d-1\}$. Thus the overall effect of $U_{\leftarrow}$ is
\begin{align}
\sum_{i=0}^{d-1}|i\>|\nu^{i+k}(x)\>|\phi_{\nu^{k}(x),i}\>&\xmapsto{\tilde{U}_{x,1},\dots,\tilde{U}_{x,d-1}} \sum_{i=0}^{d-1}|i+k\>|\nu^{i+k}(x)\>|\phi_{\nu^{k}(x),i}\> \\
&\xmapsto{\wedge_{x}(P),\dots,\wedge_{\nu^{d-1}(x)}(P^{d-1})} \sum_{i=0}^{d-1}|0\>|\nu^{i+k}(x)\>|\phi_{\nu^{k}(x),i}\>, \label{eqn:phase-2'}
\end{align}
which is the claimed output, \eqn{phase-2}. Therefore, by \eqn{phase-1} and \eqn{phase-2}, we have
\begin{align}
U_{\leftarrow}U_{\rightarrow}|0\>|\nu^{k}(x)\>|0\>=|0\>\sum_{j=0}^{d-1}|\nu^{j+k}(x)\>|\phi_{\nu^{k}(x),j}\> \quad\forall\,k\in\range{d},
\end{align}
as claimed in \eqn{U'}.

By Lemma 7.11 of \cite{barenco1995elementary}, gates of the form $\wedge_{\nu^{i+k}(x),|\phi_{\nu^{k}(x),i}\>}(P^{i+k})$ can be implemented using $O(\log N+c_{\phi})$ 2-qubit gates for each $i\in[d-1]$ and $k\in[d-1]$, and gates of the form $\wedge_{|\Phi_{x,k}\>}(P^{-k})$ can be implemented using $O(\log N+dc_{\phi})$ 2-qubit gates for each $k\in[d-1]$. As a result, $O(d(\log N+c_{\phi}))+O(\log N+dc_{\phi})=O(d(\log N+c_{\phi}))$ 2-qubit gates suffice to implement $\tilde{U}_{x,k}$ for each $k\in[d-1]$, and $O(d^{2}(\log N+c_{\phi}))$ 2-qubit gates suffice to implement $U_{\leftarrow}$. In addition, $O(d^{2}(\log dN+c_{\phi}))+O(d(\log N+d))=O(d^{2}(\log N+c_{\phi}))$ 2-qubit gates suffice to implement $U_{\rightarrow}$, where we use the fact that $d\leq N$ and $\log dN\leq 2\log N$. Therefore, $U_{S_{x}}$ can be implemented using $O(d^{2}(\log N+c_{\phi}))$ 2-qubit gates.
\end{proof}

\section{Applications of the \stine{} framework}
\label{sec:main}

We now apply the \stine{} framework to give simulations of specific classes of Lindbladians. We show how these methods can efficiently simulate local Lindbladians, subsuming previous work \cite{kliesch2011dissipative,sweke2014simulation}. We also give efficient simulations for four classes of nonlocal Lindbladians.

\subsection{Local Lindbladians}
\label{sec:local}

First we show that our sparse Stinespring isometry framework can efficiently simulate local Lindbladians.

Consider a Lindbladian $\mathcal{L}$ that only acts nontrivially on $c$ qubits of a $(\log N)$-qubit system. Assuming without loss of generality that it acts on the first $c$ qubits, such a Lindbladian has the form
\begin{align}\label{eqn:locality}
\mathcal{L}=\mathcal{L}_{c}\otimes\mathcal{I}_{-c},
\end{align}
where $\mathcal{L}_{c}$ is a Lindbladian on the first $c$ qubits and $\mathcal{I}_{-c}$ is the identity on the remaining $(\log N)-c$ qubits. For arbitrary integer $x\in[N]$ and positive integer $c$, let $x_{c}$ denote the first $c$ bits in the binary representation of $x$, and let $x_{-c}$ denote the remaining $(\log N)-c$ bits. We can efficiently simulate such Lindbladians:

\begin{lemma}\label{lem:local-Lindbladians}
A Lindbladian acting nontrivially on at most $c$ qubits can be simulated for any time $t>0$ using $O(2^{2c}(\log N+2^{6c}))$ 2-qubit gates.
\end{lemma}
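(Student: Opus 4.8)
The plan is to recognize $\mathcal{E}:=e^{\mathcal{L}t}$ as an invariantly $2^{c}$-sparse quantum operation whose Gram matrix has rank at most $2^{2c}$ (a constant, independent of $N$), and then to invoke \thm{sparse-Stinespring-isometry}. Since $\mathcal{L}=\mathcal{L}_{c}\otimes\mathcal{I}_{-c}$, we have $\mathcal{E}=e^{\mathcal{L}_{c}t}\otimes\mathcal{I}_{-c}$. Writing each computational basis state as $|x\>=|x_{c}\>|x_{-c}\>$, the operation $\mathcal{E}$ can only move amplitude among basis states with the same last $(\log N)-c$ bits, so $S_{x}=\{\,x':x'_{-c}=x_{-c}\,\}$ has size exactly $2^{c}$, and for $x\neq y$ either $x_{-c}=y_{-c}$ (hence $S_{x}=S_{y}$) or $x_{-c}\neq y_{-c}$ (hence $S_{x}\cap S_{y}=\emptyset$). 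Thus $\mathcal{E}$ is invariantly $2^{c}$-sparse, with neighbor permutation $\nu$ taken to cycle through the $2^{c}$ values of the first $c$ bits while fixing the rest. The black box returning $S_{x}$ is then a simple reversible circuit depending only on $x_{-c}$, implementable with $O(2^{c}\log N)$ $2$-qubit gates, comfortably within budget.

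Next I would analyze the Gram matrix $M_{\mathcal{E}}$ of \defn{Gram}. Because $\mathcal{E}$ factors through $x_{-c}$, the coefficients $a_{ij}^{xy}$ depend only on $(x_{c},y_{c},i,j)$ (for $x_{-c}=y_{-c}$; they are irrelevant otherwise), so $M_{\mathcal{E}}$ is block diagonal over the value of $x_{-c}$, and within each block equals one fixed positive semidefinite matrix $M_{c}$ of dimension $2^{c}\cdot 2^{c}=2^{2c}$ --- essentially the ``Gram matrix of $e^{\mathcal{L}_{c}t}$''. Hence $\operatorname{rank}(M_{\mathcal{E}})=\operatorname{rank}(M_{c})\le 2^{2c}$, so the Gram vectors $|\phi_{x,i}\>$ of the associated sparse Stinespring isometry can all be chosen in a common ancilla space of dimension at most $2^{2c}$, i.e., on $2c$ qubits, and $|\phi_{x,i}\>$ depends only on $(x_{c},i)\in[2^{c}]^{2}$, so there are only $O(2^{2c})$ distinct Gram vectors.

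To bound $c_{\phi}$ (the cost of preparing a normalized Gram vector) I would: classically compute $e^{\mathcal{L}_{c}t}$ to precision $\epsilon$ --- a matrix exponential of a $2^{2c}\times 2^{2c}$ matrix, whose cost is dominated by multiplying $2^{2c}\times 2^{2c}$ matrices and is therefore $O(2^{6c})$ up to $\operatorname{polylog}(1/\epsilon,t)$ factors; form $M_{c}$ and take its spectral (or Cholesky) decomposition, again $O(2^{6c})$; read off the Gram vectors; and finally prepare the desired $2c$-qubit state with $O(2^{2c})$ $2$-qubit gates via the state-synthesis method of \cite{shende2005synthesis}. Folding this (reversible) classical subroutine into the preparation gives $c_{\phi}=O(2^{6c})$. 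Applying \thm{sparse-Stinespring-isometry} with $d=2^{c}$ then implements $\mathcal{E}$ using $O(d^{2}(\log N+c_{\phi}))=O(2^{2c}(\log N+2^{6c}))$ $2$-qubit gates plus $O(1)$ black-box queries (a further $O(2^{c}\log N)$ gates), which proves the lemma, suppressing the $\operatorname{polylog}(1/\epsilon)$ dependence needed for $\|\mathcal{E}-e^{\mathcal{L}t}\|_{\diamond}<\epsilon$.

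\textbf{Main obstacle.} The conceptual heart is the second step: showing $M_{\mathcal{E}}$ is block diagonal with a single constant-size block $M_{c}$ and --- crucially --- that the Gram vectors can be chosen \emph{consistently} (depending only on $x_{c}$ and $i$) so one $2c$-qubit ancilla register serves all $x$, which is exactly what makes the hypotheses of \thm{sparse-Stinespring-isometry} hold with $c_{\phi}$ independent of $N$. The remaining work --- the reversible classical matrix-exponential-and-decomposition subroutine and its gate/precision accounting --- is routine but must be done carefully to confirm the $O(2^{6c})$ bound and to ensure the approximation error is controlled in diamond norm rather than merely entrywise.
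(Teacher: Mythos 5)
Your proposal follows essentially the same route as the paper: observe that $\mathcal{L}=\mathcal{L}_{c}\otimes\mathcal{I}_{-c}$ makes $e^{\mathcal{L}t}$ invariantly $2^{c}$-sparse, note that its Gram matrix has rank at most $2^{2c}$ with a constant-size full-rank principal submatrix, prepare each Gram vector at cost $O(2^{6c})$ (the paper does this via \algo{low-rank-Gram}, whose $O(r^{3})$ cost with $r=2^{2c}$ matches your diagonalization estimate), and apply \thm{sparse-Stinespring-isometry} with $d=2^{c}$ and $c_{\phi}=O(2^{6c})$. One small correction to your second step: $M_{\mathcal{E}}$ is not block diagonal over $x_{-c}$ --- the entries $a_{ij}^{xy}$ with $x_{-c}\neq y_{-c}$ are nonzero and do matter for the coherences between sectors --- rather $M_{\mathcal{E}}=M_{c}\otimes J$ with $J$ the all-ones matrix on the $x_{-c}$ index, which is what actually yields $\operatorname{rank}(M_{\mathcal{E}})=\operatorname{rank}(M_{c})\leq 2^{2c}$ (a block-diagonal structure would give rank $(N/2^{c})\operatorname{rank}(M_{c})$) and forces the Gram vectors to depend only on $(x_{c},i)$, exactly as you conclude.
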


\begin{proof}
For a Lindbladian of the form \eqn{locality}, for arbitrary $x_{c},y_{c}\in [2^{c}]$ and $x_{-c},y_{-c}\in [2^{(\log N)-c}]$, we have
\begin{align}
\mathcal{L}\big(|x_{c},x_{-c}\>\<y_{c},y_{-c}|\big)\in\linop{L}(\C^{2^{c}})\otimes |x_{-c}\>\<y_{-c}|.
\end{align}
Thus, $\mathcal{L}$ is invariantly $2^{c}$-sparse. Furthermore, the Gram matrix of $e^{\mathcal{L}t}$ has rank $2^{2c}$ at any time $t>0$, and its first $2^{2c}$ rows and columns constitute a principal submatrix. By \algo{low-rank-Gram} in \append{low-rank-Gram}, for any $x\in\range{N}$ we can prepare the Gram vector $v_{x}$ as a quantum state in $O(2^{6c})$ time. Finally, by \thm{sparse-Stinespring-isometry}, $\mathcal{L}$ can be simulated for any time $t>0$ using $O\big(2^{2c}(\log N+2^{6c})\big)$ 2-qubit gates.
\end{proof}

A $c$-local Lindbladian is sum of Lindbladians, each of which acts nontrivially on at most $c$ qubits. Since there are ${\log N\choose c}$ different choices of the $c$ qubits, the number of Lindbladians in the sum is at most ${\log N\choose c}$. Thus by \lem{local-Lindbladians} and \prop{trotter-formula-second} with $m={\log N\choose c}\leq(\log N)^{c}$ and $L=\|\mathcal{L}\|_{\diamond}$, we have:

\begin{theorem}\label{thm:local-Lindbladians}
A $c$-local Lindbladian $\mathcal{L}$ can be efficiently simulated for any time $t>0$ within error $\epsilon$ using
\begin{align}
O\Big(\frac{\tau^{1.5}}{\sqrt{\epsilon}}\cdot 2^{2c}(\log N+2^{6c})(\log N)^{2.5c}\Big)
\end{align}
2-qubit gates, where $\tau:=\|\mathcal{L}\|_{\diamond}t$.
\end{theorem}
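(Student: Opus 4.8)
The plan is to combine the single-block simulation cost from \lem{local-Lindbladians} with the second-order product formula of \prop{trotter-formula-second}, exactly as the paragraph before the statement indicates. First I would observe that a $c$-local Lindbladian decomposes as $\mathcal{L} = \sum_{i=1}^{m} \mathcal{L}_i$, where each $\mathcal{L}_i$ acts nontrivially on one of the $\binom{\log N}{c}$ possible choices of $c$ qubits, so $m = \binom{\log N}{c} \le (\log N)^c$. Each $\mathcal{L}_i$ satisfies the hypotheses of \lem{local-Lindbladians}, so for any time $s>0$ the evolution $e^{\mathcal{L}_i s}$ can be implemented using $O\big(2^{2c}(\log N + 2^{6c})\big)$ two-qubit gates. (Note the cost in \lem{local-Lindbladians} is independent of the evolution time $s$, since the Gram matrix is computed classically for whatever time is required; this is what makes the accounting clean.)

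Next I would apply \prop{trotter-formula-second} with this $m$ and with $L := \max_i \|\mathcal{L}_i\|_\diamond \le \|\mathcal{L}\|_\diamond$. Writing $\tau := \|\mathcal{L}\|_\diamond t$, the proposition guarantees that taking $r = \big\lceil 25 (m t L)^{3/2}/\sqrt{\epsilon}\,\big\rceil = O\big(\tau^{1.5} (\log N)^{1.5c}/\sqrt{\epsilon}\big)$ repetitions of the symmetric product $\prod_{i=1}^{m} e^{t\mathcal{L}_i/r} \prod_{j=m}^{1} e^{t\mathcal{L}_j/r}$ yields a channel within diamond-norm error $\epsilon$ of $e^{\mathcal{L}t}$. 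Each of the $r$ repetitions contains $2m$ short-time evolutions, and each such evolution is one application of \lem{local-Lindbladians} at cost $O\big(2^{2c}(\log N + 2^{6c})\big)$ two-qubit gates. Multiplying, the total gate count is
\begin{align}
O(r \cdot m \cdot 2^{2c}(\log N + 2^{6c})) = O\Big(\frac{\tau^{1.5}}{\sqrt{\epsilon}} \cdot (\log N)^{1.5c} \cdot (\log N)^{c} \cdot 2^{2c}(\log N + 2^{6c})\Big),
\end{align}
and combining the two powers of $\log N$ gives the claimed $O\big(\tau^{1.5}\epsilon^{-1/2}\, 2^{2c}(\log N + 2^{6c})(\log N)^{2.5c}\big)$.

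There is essentially no hard obstacle here — this is a bookkeeping argument — but the one point requiring a small amount of care is checking that the hypotheses of \prop{trotter-formula-second} are genuinely met: namely that $c$ is a constant (so $m$ is a fixed, time-independent number as the proposition requires), that each $\mathcal{L}_i$ is indeed a Lindbladian (so that $e^{t\mathcal{L}_i/r}$ is a legitimate quantum operation and the product-formula error bound applies), and that we really may bound $L \le \|\mathcal{L}\|_\diamond$ — the last of which follows since each $\mathcal{L}_i$ is a summand of $\mathcal{L}$ and restriction to an invariant tensor factor does not increase the diamond norm. I would also note explicitly that the additive composition of efficient simulations used here is exactly the closure property stated after \prop{trotter-formula-second}, so no new ideas beyond \lem{local-Lindbladians} and that proposition are needed.
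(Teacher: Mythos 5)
Your proposal is correct and follows essentially the same route as the paper: decompose $\mathcal{L}$ into $m=\binom{\log N}{c}\leq(\log N)^{c}$ local terms, simulate each term with \lem{local-Lindbladians}, and combine them via \prop{trotter-formula-second} with $L=\|\mathcal{L}\|_{\diamond}$, which gives exactly the stated gate count. Your additional remarks on verifying the hypotheses of the product formula are more explicit than the paper's one-line argument but do not change the substance.
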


\subsection{Identical-coordinate Lindbladians}

Suppose the matrix $A$ in \eqn{GKS'} satisfies
\begin{alignat}{2}
A_{(i,i),(i,i)}&=a_{i} &\quad& \forall\,i\in\range{N} \label{eqn:identical-coordinate-defn-1} \\
A_{(i,i),(j,j)}&=c_{i}+c_{j} && \forall\,i,j\in\range{N}, i\neq j \label{eqn:identical-coordinate-defn-2} \\
A_{(k,l),(k',l')}&=0 && \text{otherwise}, \label{eqn:identical-coordinate-defn-3}
\end{alignat}
where $\{a_{i}\}_{i=0}^{N-1}$, $\{c_{i}\}_{i=0}^{N-1}$ are real numbers that satisfy
\begin{alignat}{2}
a_{i}&\geq 0 &\quad& \forall\,i\in\range{N} \label{eqn:identical-coordinate-defn-4} \\
|c_{i}+c_{j}|&\leq \sqrt{a_{i}a_{j}} && \forall\,i,j\in\range{N}. \label{eqn:identical-coordinate-defn-5}
\end{alignat}
Then we call the corresponding Lindbladian $\emph{\singly}$, since nonzero elements can only appear if the two row coordinates are identical, and similarly for their columns. By \eqn{GKS'-1}, this property holds if all Lindblad operators are diagonal. Equation \eqn{identical-coordinate-instance-2} is an instance of \singly{} Lindbladians.

Identical-coordinate Lindbladians can be efficiently simulated as follows:

\begin{theorem}\label{thm:singly-sparse}
Given a black box that takes $x\in\range{N}$ as input and outputs $a_{x}$ and $c_{x}$, the \singly{} Lindbladian defined by \eqn{identical-coordinate-defn-1}--\eqn{identical-coordinate-defn-5} can be simulated for any time $t>0$ using $O(\log N)$ 2-qubit gates and $O(1)$ queries to the black box, with no error.
\end{theorem}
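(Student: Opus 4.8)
The plan is to observe that $\mathcal{L}$ acts diagonally on the matrix-element basis, read off the Gram vectors of $e^{\mathcal{L}t}$ by hand, and invoke \thm{sparse-Stinespring-isometry} with $d=1$. For the first step, set $H=0$ in \eqn{GKS'}; since $A_{(k,l),(k',l')}$ vanishes unless $k=l$ and $k'=l'$, collecting terms gives $\mathcal{L}(\rho)=\sum_{p,q}2A_{(p,p),(q,q)}\<p|\rho|q\>\,|p\>\<q|-\sum_{p}a_p(|p\>\<p|\rho+\rho|p\>\<p|)$, hence for all $x,y\in\range{N}$,
\begin{align*}
\mathcal{L}(|x\>\<y|)=\lambda_{xy}\,|x\>\<y|,\qquad \lambda_{xy}:=2A_{(x,x),(y,y)}-a_x-a_y .
\end{align*}
By \eqn{identical-coordinate-defn-1}--\eqn{identical-coordinate-defn-3} we get $\lambda_{xx}=0$ and $\lambda_{xy}=\beta_x+\beta_y$ for $x\neq y$, where $\beta_z:=2c_z-a_z$. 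Therefore $\mathcal{E}:=e^{\mathcal{L}t}$ satisfies $\mathcal{E}(|x\>\<y|)=e^{\lambda_{xy}t}|x\>\<y|$; in particular $\mathcal{E}$ is invariantly $1$-sparse ($S_x=\{x\}$, trivial neighbor function, $r_x=1$), so its Stinespring isometry has the form $V|x\>=|x\>|\phi_{x,0}\>$ as in \defn{d-sparse-Stine}, and it suffices to produce its Gram vectors and feed them to \thm{sparse-Stinespring-isometry}.

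For the Gram vectors: with $d=1$ the Gram matrix of \defn{Gram} is $|0\>\<0|\otimes G$ with $G_{xy}=e^{\lambda_{xy}t}$, i.e.\ $G_{xx}=1$ and $G_{xy}=e^{t\beta_x}e^{t\beta_y}$ for $x\neq y$. I would write $G$ as the manifestly positive-semidefinite sum
\begin{align*}
G=|w\>\<w|+\sum_{x\in\range{N}}\gamma_x|x\>\<x|,\qquad |w\>:=\sum_{x\in\range{N}}e^{t\beta_x}|x\>,\quad \gamma_x:=1-e^{2t\beta_x},
\end{align*}
noting $\gamma_x\geq 0$ because \eqn{identical-coordinate-defn-4}--\eqn{identical-coordinate-defn-5} with $i=j=x$ give $2|c_x|\leq a_x$, hence $\beta_x\leq 0$ and $e^{2t\beta_x}\leq 1$. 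Reading off the Gram vectors of this decomposition yields, in an $(N+1)$-dimensional ancilla space $\spn\{|\star\>\}\oplus\C^{N}$,
\begin{align*}
|\phi_{x,0}\>=e^{t\beta_x}|\star\>+\sqrt{1-e^{2t\beta_x}}\,|x\>,
\end{align*}
and a direct check gives $\<\phi_{y,0}|\phi_{x,0}\>=e^{t(\beta_x+\beta_y)}+\sqrt{\gamma_x\gamma_y}\,\delta_{xy}=G_{xy}$ together with $\||\phi_{x,0}\>\|=1$; by \prop{d-sparse-equivalence} these are precisely the ancilla states of $V$.

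For preparation and assembly: encode the ancilla as a flag qubit plus a $\ceil{\log N}$-qubit register, with $|\star\>=|1\>|0\cdots0\>$ and $|x\>=|0\>|x\>$. To prepare $|\phi_{x,0}\>$, query the black box for $a_x,c_x$, compute the rotation angle $\theta_x$ with $\cos\theta_x=e^{t\beta_x}$, apply the corresponding single-qubit rotation to the flag, copy $x$ into the register controlled on the flag being $0$ (the only step costing $\Theta(\log N)$ gates), and uncompute the arithmetic and the query; this uses $O(1)$ queries and $O(\log N)$ 2-qubit gates, so $c_\phi=O(\log N)$. The black box returning $S_x$ is trivial here. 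Applying \thm{sparse-Stinespring-isometry} with $d=1$ then implements $\mathcal{E}=e^{\mathcal{L}t}$ using $O\big(d^2(\log N+c_\phi)\big)=O(\log N)$ 2-qubit gates and $O(1)$ queries, with no approximation made anywhere (since $\mathcal{L}$ is diagonal and the Gram vectors are exact). Since $r_x=1$, one could equally well implement $V|x\>=|x\>|\phi_{x,0}\>$ directly; the ancilla-uncomputation that makes \thm{sparse-Stinespring-isometry} delicate is vacuous.

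The only substantive point is the middle step---spotting the rank-one-plus-diagonal decomposition of $G$ and recognizing that conditions \eqn{identical-coordinate-defn-4}--\eqn{identical-coordinate-defn-5} are exactly what forces $\gamma_x\geq 0$, equivalently what makes $\mathcal{E}$ a legitimate (contractive) channel. The remaining steps are routine bookkeeping, though I would flag that the ``no error'' claim presupposes the idealized black-box/exact-arithmetic setting used throughout the paper.
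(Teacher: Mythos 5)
Your proposal is correct and follows essentially the same route as the paper: both compute $\mathcal{L}(|x\>\<y|)=(2(c_x+c_y)-a_x-a_y)|x\>\<y|$ for $x\neq y$ (and $0$ for $x=y$), and both choose the ancilla states $b_{x,t}|\star\>+\sqrt{1-b_{x,t}^{2}}\,|x\>$ with $b_{x,t}=e^{-(a_x-2c_x)t}$, which is exactly your $|\phi_{x,0}\>$. Your explicit rank-one-plus-diagonal decomposition of the Gram matrix is just a more systematic way of motivating the same construction the paper writes down directly.
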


\begin{proof}
By \eqn{GKS'}, when $u\neq v$,
\begin{align}\label{eqn:u-neq-v}
\mathcal{L}(|u\>\<v|)=-\big(a_{u}+a_{v}-2(c_{u}+c_{v})\big)|u\>\<v|;
\end{align}
when $u=v$,
\begin{align}\label{eqn:u-equal-v}
\mathcal{L}(|u\>\<u|)=0.
\end{align}
Therefore, coherences $|u\>\<v|$ with $u\neq v$ decay exponentially in time \eqn{u-neq-v}:
\begin{align}\label{eqn:u-neq-v-exp}
e^{\mathcal{L}t}(|u\>\<v|)=e^{-(a_{u}+a_{v}-2(c_{u}+c_{v}))t}|u\>\<v|.
\end{align}
When $u=v$, the state $|u\>\<u|$ is fixed by \eqn{u-equal-v}:
\begin{align}\label{eqn:u-equal-v-exp}
e^{\mathcal{L}t}(|u\>\<u|)=|u\>\<u|.
\end{align}

To implement the Stinespring isometry at time $t$, it suffices to find ancilla states $\{|\phi_{x}\>\}_{x=0}^{N-1}$ such that for arbitrary $x\neq y$,
\begin{align}
\<\phi_{x}|\phi_{x}\>&=1; \label{eqn:A-restriction-1-singly} \\
\<\phi_{x}|\phi_{y}\>&=e^{-(a_{x}+a_{y}-2(c_{x}+c_{y}))t}. \label{eqn:A-restriction-2-singly}
\end{align}

For convenience, denote
\begin{align}
b_{x,t}:=e^{-(a_{x}-2c_{x})t}.
\end{align}
Since $a_{x}\geq 2c_{x}$ by assumption, we have $0\leq b_{x,t}\leq 1$ for arbitrary $x$ and $t>0$.

Consider the following construction of the ancilla states for each $x\in\range{N}$:
\begin{align}
|\phi_{x}\>=b_{x,t}|0\ldots0\>_{1,\ldots,n}|0\>_{n+1}+\sqrt{1-b_{x,t}^{2}}|x\>_{1,\ldots,n}|1\>_{n+1}. \label{eqn:defn-x-singly}
\end{align}
Clearly \eqn{A-restriction-1-singly} and \eqn{A-restriction-2-singly} hold. Furthermore, since $|\phi_{x}\>$ lies in a known two-dimensional subspace, it can be prepared using $O(\log N)$ 2-qubit gates for arbitrary $x$ with $O(1)$ queries to the black box. Therefore, by \thm{sparse-Stinespring-isometry}, we can simulate the \singly{} Lindbladian using $O(\log N)$ 2-qubit gates.
\end{proof}

\subsection{Diagonal Lindbladians}

Next, we consider Lindbladians with diagonal matrix $A$ in \eqn{GKS'}, i.e., only terms of the form $A_{(k,l),(k,l)}$ can be nonzero. Letting
\begin{align}
a_{k,l}:=A_{(k,l),(k,l)},
\end{align}
we see that in this case equation \eqn{GKS'} can be simplified to the following form:
\begin{align}\label{eqn:GKS'-diagonal}
\mathcal{L}(\rho)=\sum_{k,l=0}^{N-1}a_{k,l}\big(2\<l|\rho|l\>|k\>\<k|-|l\>\<l|\rho-\rho|l\>\<l|\big),
\end{align}
where $a_{k,l}\geq 0$ for all $k,l\in\range{N}$. We call such a Lindbladian \emph{diagonal}.

By \eqn{GKS'-diagonal},
\begin{alignat}{2}
\mathcal{L}(|u\>\<u|)&=2\sum_{k=0}^{N-1}a_{k,u}\big(|k\>\<k|-|u\>\<u|\big) && \forall\,u \label{eqn:GKS'-diagonal-1} \\
\mathcal{L}(|u\>\<v|)&=-\sum_{k=0}^{N-1}(a_{k,u}+a_{k,v})|u\>\<v| &\quad& \forall\,u\neq v. \label{eqn:GKS'-diagonal-2}
\end{alignat}
Intuitively, \eqn{GKS'-diagonal-1} shows that diagonal terms follow a classical continuous-time Markov process, while \eqn{GKS'-diagonal-2} shows that off-diagonal terms simply decay.

\subsubsection{Sparse-diagonal Lindbladians}
\label{sec:sparse-diagonal}

If the matrix $a$ with entries $a_{k,l}$ for $k,l\in\range{N}$ is $d$-sparse (i.e., each row and column has at most $d$ nonzero entries), where $d\in\N$ is a fixed constant, we say that $\mathcal{L}$ is \emph{$d$-sparse-diagonal}. Such Lindbladians can be efficiently simulated:

\begin{theorem}\label{thm:d-sparse}
If the matrix $a$ is $d$-sparse, then given a black box for $a$ that takes a row index or column index as input and outputs all nonzero entries in that row/column together with their positions, the Lindbladian defined by \eqn{GKS'-diagonal} can be simulated for any time $t>0$ within error $\epsilon$ using
\begin{align}
O\Big(\frac{d^{5}\tau^{1.5}}{\sqrt{\epsilon}}\log N\Big)
\end{align}
2-qubit gates and $O\big(\frac{d^{5}\tau^{1.5}}{\sqrt{\epsilon}}\big)$ queries to the black box, where $\tau:=\|\mathcal{L}\|_{\diamond}t$.
\end{theorem}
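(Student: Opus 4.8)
The plan is to reduce the $d$-sparse-diagonal case to the strongly $1$-sparse-diagonal case handled in \lem{1-sparse-a}, and then glue the pieces together using the second-order product formula \prop{trotter-formula-second}, in direct analogy with the standard treatment of sparse Hamiltonian simulation.

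First I would decompose $\mathcal{L}$ into $O(d^{2})$ strongly $1$-sparse-diagonal Lindbladians. Since the nonnegative matrix $a$ is $d$-sparse, the undirected graph obtained by placing an edge $\{k,l\}$ whenever $a_{k,l}\neq 0$ or $a_{l,k}\neq 0$ has maximum degree at most $2d$; after discarding self-loops, which contribute nothing to \eqn{GKS'-diagonal} because $2a_{k,k}(|k\>\<k|-|k\>\<k|)=0$, I would edge-color this graph with $m=O(d^{2})$ colors using the standard deterministic procedure from sparse Hamiltonian simulation \cite{aharonov2003adiabatic}, arranged so that the color of any edge, and hence the matched partner $\nu_{c}(x)$ of a vertex $x$ within a color class $c$, can be computed with $O(1)$ queries to the black box for $a$. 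Restricting $a$ to a color class $c$ yields a diagonal Lindbladian $\mathcal{L}_{c}$ of the form \eqn{GKS'-diagonal} with nonnegative coefficients whose nonzero pattern is that of an involution $\nu_{c}$ (a set of disjoint transpositions), so each $\mathcal{L}_{c}$ is strongly $1$-sparse-diagonal, each is itself a legitimate Lindbladian, and $\mathcal{L}=\sum_{c=1}^{m}\mathcal{L}_{c}$.

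Next I would simulate each $\mathcal{L}_{c}$ via \lem{1-sparse-a}. Writing $\nu=\nu_{c}$, equations \eqn{GKS'-diagonal-1}--\eqn{GKS'-diagonal-2} show that $\spn\{|u\>\<u|,|\nu(u)\>\<\nu(u)|\}$ is a two-dimensional invariant subspace on which $\mathcal{L}_{c}$ acts as an explicit $2\times 2$ continuous-time Markov generator, while every coherence $|u\>\<v|$ with $v\notin\{u,\nu(u)\}$ is an eigenvector with an explicit real decay rate; hence $e^{\mathcal{L}_{c}t'}$ is invariantly $2$-sparse, its Gram matrix (\defn{Gram}) is given in closed form by $O(1)$ entries of $a$, its Gram vectors lie in a fixed constant-dimensional space and can be prepared with $O(\log N)$ $2$-qubit gates, and the oracle $x\mapsto S_{x}=\{x,\nu(x)\}$ required by \thm{sparse-Stinespring-isometry} costs $O(1)$ queries to the black box for $a$. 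Applying \thm{sparse-Stinespring-isometry} with $d=2$ then implements $e^{\mathcal{L}_{c}t'}$, for any $t'>0$, with $O(\log N)$ $2$-qubit gates and $O(1)$ queries. To combine the pieces I would use the bound $L:=\max_{c}\|\mathcal{L}_{c}\|_{\diamond}=O(\|\mathcal{L}\|_{\diamond})$, which follows because both quantities are controlled by $\max_{k\neq l}a_{k,l}$ (a strongly $1$-sparse-diagonal Lindbladian has diamond norm of order its largest entry, while $e^{\mathcal{L}t}$ already decays the coherence $|k\>\<l|$ at a rate at least $a_{k,l}$). Then \prop{trotter-formula-second} with $m=O(d^{2})$ and $r=\Theta\big((mtL)^{3/2}/\sqrt{\epsilon}\big)=\Theta\big(d^{3}\tau^{3/2}/\sqrt{\epsilon}\big)$, where $\tau=\|\mathcal{L}\|_{\diamond}t$, approximates $e^{\mathcal{L}t}$ to within $\epsilon$ using $2mr=O\big(d^{5}\tau^{3/2}/\sqrt{\epsilon}\big)$ short evolutions $e^{\pm t\mathcal{L}_{c}/r}$, each realized as above; this gives $O\big(d^{5}\tau^{1.5}\epsilon^{-1/2}\log N\big)$ $2$-qubit gates and $O\big(d^{5}\tau^{1.5}\epsilon^{-1/2}\big)$ queries, as claimed.

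I expect the main obstacle to be the decomposition step together with the norm estimate: the edge-coloring must be organized so that a color query, and in particular the matched partner $\nu_{c}(x)$, is computable with $O(1)$ rather than $O(d)$ queries to the black box for $a$ (this $O(d^{2})$ color count is precisely what produces the $d^{5}$ in the final bound), and one must establish $\|\mathcal{L}_{c}\|_{\diamond}=O(\|\mathcal{L}\|_{\diamond})$ so that the product-formula cost can be expressed purely in terms of $\tau$. By contrast, the exact simulation of a single strongly $1$-sparse-diagonal piece is the cleanest ingredient, since there the dynamics fully decouples into independent invariant blocks of dimension at most two.
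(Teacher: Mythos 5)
Your overall strategy---decompose $a$ into $O(d^{2})$ strongly $1$-sparse-diagonal pieces, simulate each piece exactly via \lem{1-sparse-a} and \thm{sparse-Stinespring-isometry}, and recombine with \prop{trotter-formula-second} using $m=O(d^{2})$ and $r=\Theta((mtL)^{3/2}/\sqrt{\epsilon})$---is essentially the paper's, and your final gate and query counts match. The decomposition differs only cosmetically: the paper colors the \emph{directed} graph of $a$ into $d^{2}$ $1$-sparse pieces and then spends $3$ further colors to reach strongly $1$-sparse (involution) form, for $3d^{2}$ pieces in total, whereas you symmetrize first and edge-color directly into matchings; either route works, and you correctly identify local computability of the coloring as the delicate point.

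There is, however, one genuine error: you discard the diagonal entries $a_{k,k}$ on the grounds that ``$2a_{k,k}(|k\>\<k|-|k\>\<k|)=0$.'' That identity only shows that $a_{k,k}$ acts trivially on the populations, i.e., on \eqn{GKS'-diagonal-1}. It does \emph{not} act trivially on the coherences: by \eqn{GKS'-diagonal-2}, $\mathcal{L}(|u\>\<v|)=-\sum_{k}(a_{k,u}+a_{k,v})|u\>\<v|$ for $u\neq v$, and the $k=u$ and $k=v$ terms contribute a dephasing rate $-(a_{u,u}+a_{v,v})$ coming precisely from the diagonal of $a$. Dropping these terms changes the decay rate of every off-diagonal matrix element by $a_{u,u}+a_{v,v}$, an $O(1)$ discrepancy that does not shrink as the number of Trotter steps grows, so the resulting circuit simulates the wrong Lindbladian. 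The repair is exactly what the paper does before decomposing: peel off the diagonal part of $a$ as one additional strongly $1$-sparse-diagonal Lindbladian (take $\nu=\mathrm{id}$; note that the ancilla construction of \lem{1-sparse-a}, e.g.\ $b_{x}=e^{-(a_{\nu(x),x}+a_{x,x})t}$, already accommodates $a_{x,x}\neq 0$, and the pure-dephasing case is also covered by the identical-coordinate construction of \thm{singly-sparse}). Simulating this extra term exactly with $O(\log N)$ gates and including it as one more summand in the product formula leaves all the asymptotic bounds unchanged.
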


First consider the special case where the matrix $a$ is \emph{strongly 1-sparse}, i.e., there exists an involution $\nu$ such that for all $x\in\range{N}$, $a_{x,y}\neq 0$ only if $y\in\{x,\nu(x)\}$. Letting $G=(V,E)$ be the underlying undirected graph of $a$, where rows/columns represent vertices and $(v_{x},v_{y})\in E$ if and only if $a_{x,y}\neq 0$, then $G$ consists only of isolated vertices (if $\nu(x)=x$, $v_{x}$ is an isolated vertex) and isolated edges (if $\nu(x)\neq x$, $(x,\nu(x))\in E$ and $x,\nu(x)$ have no other neighbors). We call $\nu$ the neighbor function as in \defn{d-sparse-Stine}. We first show \thm{d-sparse} in this special case:

\begin{lemma}\label{lem:1-sparse-a}
Suppose the matrix $a$ is strongly 1-sparse and we are given a black box for the matrix $a$ that takes input $x\in\range{N}$ and outputs $\nu(x)$, $a_{x,x}$, and $a_{\nu(x),x}$. Then the Lindbladian defined by \eqn{GKS'-diagonal} can be simulated for any time $t>0$ by $O(\log N)$ 2-qubit gates and $O(1)$ queries to the black box, with no error.
\end{lemma}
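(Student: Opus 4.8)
The plan is to exhibit, for the channel $\mathcal{E}=e^{\mathcal{L}t}$, an explicit invariantly $2$-sparse Stinespring isometry $V$ with neighbor function $\nu$ whose Gram vectors are preparable \emph{exactly} by $O(\log N)$ two-qubit gates using $O(1)$ black-box queries, and then to invoke \thm{sparse-Stinespring-isometry} with $d=2$. I would start by reading the dynamics off \eqn{GKS'-diagonal-1}--\eqn{GKS'-diagonal-2} together with strong $1$-sparsity. For an edge $\{u,\nu(u)\}$ (so $\nu(u)\neq u$) put $p:=a_{\nu(u),u}$, $q:=a_{u,\nu(u)}$; then $\spn\{|u\>\<u|,|\nu(u)\>\<\nu(u)|\}$ is invariant and the two populations evolve under the two-state continuous-time Markov chain with rates $2p$ (out of $u$) and $2q$ (out of $\nu(u)$), isolated vertices (and edges with $p=q=0$) have fixed population, and every coherence simply decays, $e^{\mathcal{L}t}(|x\>\<y|)=e^{-\gamma_{xy}t}|x\>\<y|$ for $x\neq y$, where $\gamma_{xy}=r_x+r_y$ with $r_x:=\sum_k a_{k,x}=a_{x,x}+a_{\nu(x),x}$. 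Thus $\mathcal{E}$ is invariantly $2$-sparse with $S_x=\{x,\nu(x)\}$, and its Gram-matrix coefficients (\defn{Gram}) are $a_{00}^{uu}=\tfrac{q+p\,e^{-2(p+q)t}}{p+q}$ (read as $1$ if $p+q=0$), $a_{11}^{uu}=1-a_{00}^{uu}$, $a_{01}^{uu}=a_{10}^{uu}=0$ (and symmetrically for $\nu(u)$), while for $x\neq y$ only $a_{00}^{xy}=e^{-\gamma_{xy}t}$ is nonzero; all of these follow from $O(1)$ arithmetic on the black-box outputs for $x$ and $\nu(x)$.

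The real difficulty is that $V$ must be \emph{globally} consistent rather than assembled edge by edge, because $\mathcal{E}$ does not kill coherences between states of different edges. The feature that rescues the construction is that the decay rate factorizes, $e^{-\gamma_{xy}t}=e^{-r_xt}\,e^{-r_yt}$. I would therefore use an ancilla $E_1\otimes E_2$, with $E_1$ carrying a fixed orthonormal system $|g\>,|h_1\>,\dots,|h_4\>$ ($O(1)$ qubits) and $E_2$ of $\lceil\log N\rceil$ qubits holding an edge label $\ell(x):=\min\{x,\nu(x)\}$, and define
\begin{align}
V|u\>=e^{-r_ut}\,|u\>|g\>|0\>+\alpha_u\,|u\>|h_1\>|\ell(u)\>+\beta_u\,|\nu(u)\>|h_2\>|\ell(u)\>,
\end{align}
with $\alpha_u:=\sqrt{a_{00}^{uu}-e^{-2r_ut}}$ and $\beta_u:=\sqrt{a_{11}^{uu}}$, symmetrically for $V|\nu(u)\>$ with $|h_3\>,|h_4\>$ replacing $|h_1\>,|h_2\>$, and $V|x\>=e^{-r_xt}|x\>|g\>|0\>+\sqrt{1-e^{-2r_xt}}\,|x\>|h_1\>|\ell(x)\>$ when $x$ is isolated. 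The one nonformal point is that $\alpha_u^2\geq0$: since $r_u=a_{u,u}+p\geq p$ we have $a_{00}^{uu}-e^{-2r_ut}\geq a_{00}^{uu}-e^{-2pt}=:g(t)$, and $g(0)=0$ while $g'(t)=2p\bigl(e^{-2pt}-e^{-2(p+q)t}\bigr)\geq0$, so $g\geq0$.

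It then remains to check, by direct computation, that $V$ is an isometry and that $\Tr_{\anc}[V|x\>\<y|V^\dagger]=e^{\mathcal{L}t}(|x\>\<y|)$ for all $x,y$. The isometry property is immediate across edges (disjoint system supports) and within an edge follows from $\<h_i|g\>=0$, $\<h_i|h_j\>=\delta_{ij}$, and the choice of $\alpha_u,\beta_u$. For the channel: between two different edges the $|h_i\>\otimes|\ell(\cdot)\>$ corrections are annihilated by the distinct $E_2$-labels while the common $|g\>|0\>$ term contributes exactly $e^{-r_xt}e^{-r_yt}=e^{-\gamma_{xy}t}$ to the $|x\>\<y|$ coefficient; within an edge the $|h_i\>$ parts (orthogonal to $|g\>|0\>$ and to one another) contribute precisely $a_{00}^{uu}|u\>\<u|+a_{11}^{uu}|\nu(u)\>\<\nu(u)|$ for the populations and nothing off-diagonal, while the coherence again reduces to its $|g\>|0\>$ part, giving $e^{-\gamma_{u,\nu(u)}t}|u\>\<\nu(u)|$. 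Reading $V|x\>=\sum_i|\nu^i(x)\>|\phi_{x,i}\>$ off the displayed formula, each normalized Gram vector $|\phi_{x,i}\>/\||\phi_{x,i}\>\|$ is a computational-basis state, or a two-term superposition, of the $O(\log N)$-qubit ancilla with amplitudes produced by $O(1)$ arithmetic, hence preparable with $c_\phi=O(\log N)$ two-qubit gates; the $S_x$-oracle required by \thm{sparse-Stinespring-isometry} is one call to the given black box. Substituting $d=2$ and $c_\phi=O(\log N)$ into \thm{sparse-Stinespring-isometry} yields $O\bigl(d^2(\log N+c_\phi)\bigr)=O(\log N)$ two-qubit gates and $O(1)$ queries, with no error since every step is exact. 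I expect the global-consistency issue to be the main obstacle; the factorization $\gamma_{xy}=r_x+r_y$ and the inequality $a_{00}^{uu}\geq e^{-2r_ut}$ are exactly what let a single isometry reproduce both the per-edge Markov dynamics and the inter-edge coherence decay.
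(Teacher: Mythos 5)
Your proposal is correct and follows essentially the same route as the paper: both reduce to \thm{sparse-Stinespring-isometry} with $d=2$, compute the same two-state Markov dynamics and coherence decay, and build the Gram vectors from the factorization $e^{-\gamma_{xy}t}=e^{-r_xt}e^{-r_yt}$ by giving every $|\phi_{x,0}\>$ a shared ancilla component of amplitude $e^{-r_xt}$ plus an orthogonal, vertex-labeled remainder (your $|g\>|0\>$ versus $|h_i\>|\ell(u)\>$ split is the paper's $b_x|0\cdots0\>|1\>|0\>+c_x|x\>|1\>|1\>$ in different clothing), with the same positivity check $a_{00}^{uu}\ge e^{-2r_ut}$. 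The only nit is that your formula $r_x=a_{x,x}+a_{\nu(x),x}$ double-counts when $\nu(x)=x$; like the paper, you should either take $r_x=\sum_k a_{k,x}$ throughout or assume $\nu(x)\neq x$ without loss of generality.
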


To handle the general case, we show that any $d$-sparse-diagonal Lindbladian can be decomposed into at most $3d^{2}$ strongly 1-sparse-diagonal Lindbladians:
\begin{lemma}\label{lem:a-decomposition}
Every $d$-sparse-diagonal Lindbladian can be written as the sum of at most $3d^{2}$ strongly 1-sparse-diagonal Lindbladians with constant overhead in queries using the black box in \thm{d-sparse}.
\end{lemma}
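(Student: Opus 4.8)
The plan is to reduce the statement to a purely combinatorial decomposition of the overcomplete GKS matrix $a$. Since $a \mapsto \mathcal{L}$ in \eqn{GKS'-diagonal} is linear, and a diagonal Lindbladian is a valid Lindbladian precisely when its GKS matrix has nonnegative entries, it suffices to write $a = \sum_{c=1}^{m} a^{(c)}$ with $m \le 3d^2$, where each $a^{(c)}$ has nonnegative entries and is \emph{strongly $1$-sparse}, i.e.\ there is an involution $\nu_c$ with $a^{(c)}_{x,y}=0$ unless $y \in \{x, \nu_c(x)\}$. The only extra requirement is that, for each $c$, one can compute $\nu_c(x)$, $a^{(c)}_{x,x}$, and $a^{(c)}_{\nu_c(x),x}$ from $x$ using $O(1)$ queries to the row/column black box of \thm{d-sparse}; then \lem{1-sparse-a} applies to each summand and the lemma follows.

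First I would perform a $d^2$-way split by local rank. View each off-diagonal nonzero entry $(k,l)$ of $a$ as a directed arc $l\to k$. Ordering the nonzeros within each row and within each column by index, the arc $l\to k$ is the $i$-th nonzero of row $k$ and the $j$-th nonzero of column $l$ for a unique $(i,j)\in\range{d}^2$, and I assign it to class $(i,j)$. Given $x$, fetching row $x$ and column $x$ and then, for the candidate arcs, fetching the complementary column/row to confirm their rank, costs $O(1)$ queries and identifies the (at most one) class-$(i,j)$ arc out of $x$ and the (at most one) into $x$. Inside class $(i,j)$ every vertex has in- and out-degree at most $1$, so the class is a disjoint union of directed simple paths and cycles — in particular a graph of maximum degree $2$. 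The diagonal entries of $a$ can be placed into class $(1,1)$, since a strongly $1$-sparse matrix may carry an arbitrary nonnegative diagonal.

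Next I would split each class into at most three strongly $1$-sparse matrices. A union of paths and cycles has chromatic index at most $3$ (two colours suffice for paths and even cycles; a third is needed only for odd cycles; the two opposite arcs of a length-$2$ cycle $x\leftrightarrow y$ count as the single edge $\{x,y\}$). Properly $3$-edge-colour each class; each colour yields a matching $M$, hence an involution $\nu_c$ (with $\nu_c(x)=y$ for $\{x,y\}\in M$ and $\nu_c(x)=x$ otherwise), and I let $a^{(c)}$ carry exactly the entries of that colour, with the diagonal of class $(1,1)$ attached to its first colour. Then $a^{(c)}_{x,y}\neq 0$ implies $y\in\{x,\nu_c(x)\}$ and $\sum_c a^{(c)}=a$, so there are at most $d^2\cdot 3=3d^2$ pieces, each with nonnegative entries. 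Reading off $\nu_c(x)$ and the rates $a^{(c)}_{x,x}, a^{(c)}_{\nu_c(x),x}$ then amounts to identifying $x$'s two incident class-$(i,j)$ arcs (done in the first step) together with their colours.

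The step I expect to be the main obstacle is exactly this last point: determining the colour of an arc within its path or cycle \emph{without} traversing the whole component, so that the black box for each $a^{(c)}$ really costs only $O(1)$ original queries. I would handle it with a deterministic local edge-colouring rule for paths and cycles expressed in terms of the vertex indices, using that $O(1)$ hops along a class-$(i,j)$ component cost $O(1)$ row/column queries; this is where the argument needs genuine care, in contrast to the routine checks ($\sum_c a^{(c)}=a$, nonnegativity of each $a^{(c)}$, and that each $a^{(c)}$ satisfies the hypotheses of \lem{1-sparse-a}). (One could instead colour the degree-$\le 2d$ symmetrized graph directly by Vizing's theorem with $2d+1$ colours, which is fewer pieces for $d\ge 2$, but that colouring is not local, so I would avoid it here.)
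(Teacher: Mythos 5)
Your proposal is correct and follows essentially the same route as the paper: the paper first assigns each arc $(u,v)$ the colour $d\cdot\NextIdx(u,v)+\PrevIdx(u,v)$ (your $(i,j)$ class by row/column rank, giving $d^2$ classes each of in- and out-degree at most one), and then three-colours the resulting unions of directed paths and cycles exactly as you describe, yielding $3d^2$ strongly $1$-sparse pieces. The locality issue you flag for the alternating path/cycle colouring is genuine, but the paper addresses it at the same (or lesser) level of detail, simply asserting that the colouring requires only constant query overhead, so your proposal is no less complete than the published argument.
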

\begin{proof}[Proof of \thm{d-sparse}]
The theorem follows directly from \lem{1-sparse-a}, \lem{a-decomposition}, and \prop{trotter-formula-second} with $m=3d^{2}$ and $L=\|\mathcal{L}\|_{\diamond}$.
\end{proof}

The proof of \lem{a-decomposition} is in \append{sparse-diagonal}. We prove \lem{1-sparse-a} here.

\begin{proof}[Proof of \lem{1-sparse-a}]
By \thm{sparse-Stinespring-isometry}, it suffices to show that for any time $t>0$, there exists a 1-sparse Stinespring isometry $V$ whose ancilla states can be prepared using $O(\log N)$ 2-qubit gates, such that
\begin{align}
e^{\mathcal{L}t}(\rho)=\Tr_{\anc}[V\rho V^{\dagger}].
\end{align}

Without loss of generality, assume $\nu(x)\neq x$ for all $x\in\range{N}$. Since $a$ is strongly 1-sparse, by \eqn{GKS'-diagonal} we have
\begin{alignat}{2}
\mathcal{L}(|x\>\<x|)&=2a_{\nu(x),x}|\nu(x)\>\<\nu(x)|-2a_{\nu(x),x}|x\>\<x| &\quad& \forall\,x; \label{eqn:diagonal-A1} \\
\mathcal{L}(|x\>\<y|)&=-(a_{\nu(x),x}+a_{\nu(y),y}+a_{x,x}+a_{y,y})|x\>\<y| && \forall\,x\neq y. \label{eqn:diagonal-A2}
\end{alignat}
Because
\begin{align}
\exp\Big[\left(
      \begin{array}{cc}
        -\alpha & \alpha \\
        \beta & -\beta
      \end{array} \right)t\Big]=\left(
      \begin{array}{cc}
        \frac{\beta}{\alpha+\beta}+\frac{\alpha}{\alpha+\beta}e^{-(\alpha+\beta)t} & \frac{\alpha}{\alpha+\beta}-\frac{\alpha}{\alpha+\beta}e^{-(\alpha+\beta)t} \\
        \frac{\beta}{\alpha+\beta}-\frac{\beta}{\alpha+\beta}e^{-(\alpha+\beta)t} & \frac{\alpha}{\alpha+\beta}+\frac{\beta}{\alpha+\beta}e^{-(\alpha+\beta)t}
      \end{array} \right),
\end{align}
\eqn{diagonal-A1} implies
\begin{align}
e^{\mathcal{L}t}(|x\>\<x|)&=\Big(\frac{a_{x,\nu(x)}}{a_{x,\nu(x)}+a_{\nu(x),x}}+\frac{a_{\nu(x),x}}{a_{x,\nu(x)}+a_{\nu(x),x}}e^{-2(a_{x,\nu(x)}+a_{\nu(x),x})t}\Big)|x\>\<x| \nonumber \\
&\quad+\frac{a_{\nu(x),x}}{a_{x,\nu(x)}+a_{\nu(x),x}}\big(1-e^{-2(a_{x,\nu(x)}+a_{\nu(x),x})t}\big)|\nu(x)\>\<\nu(x)|.
\end{align}
By \eqn{diagonal-A2}, we have
\begin{align}
e^{\mathcal{L}t}(|x\>\<y|)=e^{-(a_{\nu(x),x}+a_{\nu(y),y}+a_{x,x}+a_{y,y})t}|x\>\<y|\quad\forall\,x\neq y.
\end{align}

By \defn{d-sparse-Stine}, it suffices to find ancilla states $\{|\phi_{x,1}\>,|\phi_{x,2}\>\}_{x=0}^{N-1}$ such that for arbitrary $x\neq y$,
\begin{align}
	\<\phi_{x,1}|\phi_{x,1}\>&=\frac{a_{x,\nu(x)}}{a_{x,\nu(x)}+a_{\nu(x),x}}+\frac{a_{\nu(x),x}}{a_{x,\nu(x)}+a_{\nu(x),x}}e^{-2(a_{x,\nu(x)}+a_{\nu(x),x})t}; \label{eqn:A-restriction-1-diag} \\
	\<\phi_{x,2}|\phi_{x,2}\>&=\frac{a_{\nu(x),x}}{a_{x,\nu(x)}+a_{\nu(x),x}}\big(1-e^{-2(a_{x,\nu(x)}+a_{\nu(x),x})t}\big); \label{eqn:A-restriction-2-diag} \\
	\<\phi_{x,1}|\phi_{x,2}\>&=0; \label{eqn:A-restriction-3-diag} \\
	\<\phi_{y,1}|\phi_{x,1}\>&=e^{-(a_{\nu(x),x}+a_{\nu(y),y}+a_{x,x}+a_{y,y})t}; \label{eqn:A-restriction-4-diag} \\
	\<\phi_{y,1}|\phi_{x,2}\>&=0; \label{eqn:A-restriction-5-diag} \\
	\<\phi_{y,2}|\phi_{x,2}\>&=0. \label{eqn:A-restriction-6-diag}
\end{align}

Consider the following construction of the ancilla states for each $x\in\range{N}$:
\begin{align}
|\phi_{x,1}\>&=b_{x}|0\ldots0\>_{1,\ldots,n}|1\>_{n+1}|0\>_{n+2}+c_{x}|x\>_{1,\ldots,n}|1\>_{n+1}|1\>_{n+2} \label{eqn:defn-x1-diag} \\
|\phi_{x,2}\>&=a_{x}|x\>_{1,\ldots,n}|0\>_{n+1}|0\>_{n+2}, \label{eqn:defn-x2-diag}
\end{align}
where $a_{x}$, $b_{x}$, and $c_{x}$ are defined as
\begin{align}
a_{x}&:=\sqrt{\frac{a_{\nu(x),x}}{a_{x,\nu(x)}+a_{\nu(x),x}}\big(1-e^{-2(a_{x,\nu(x)}+a_{\nu(x),x})t}\big)} \label{eqn:sparse-diagonal-a} \\
b_{x}&:=e^{-(a_{\nu(x),x}+a_{x,x})t} \label{eqn:sparse-diagonal-b} \\
c_{x}&:=\sqrt{\frac{a_{x,\nu(x)}}{a_{x,\nu(x)}+a_{\nu(x),x}}+\frac{a_{\nu(x),x}}{a_{x,\nu(x)}+a_{\nu(x),x}}e^{-2(a_{x,\nu(x)}+a_{\nu(x),x})t}-e^{-2(a_{\nu(x),x}+a_{x,x})t}} \label{eqn:sparse-diagonal-c}
\end{align}
for all $x\in\range{N}$. The entry inside the square root in $c_{x}$ is positive because from
\begin{align}
&\frac{a_{x,\nu(x)}}{a_{x,\nu(x)}+a_{\nu(x),x}}e^{2a_{\nu(x),x}t}+\frac{a_{\nu(x),x}}{a_{x,\nu(x)}+a_{\nu(x),x}}e^{-2a_{x,\nu(x)}t} \nonumber \\
&\quad\geq\frac{a_{x,\nu(x)}}{a_{x,\nu(x)}+a_{\nu(x),x}}(1+2a_{\nu(x),x}t)+\frac{a_{\nu(x),x}}{a_{x,\nu(x)}+a_{\nu(x),x}}(1-2a_{x,\nu(x)}t) \\
&\quad= 1 \\
&\quad\geq e^{-2a_{x,x}}t,
\end{align}
we have
\begin{align}
\frac{a_{x,\nu(x)}}{a_{x,\nu(x)}+a_{\nu(x),x}}+\frac{a_{\nu(x),x}}{a_{x,\nu(x)}+a_{\nu(x),x}}e^{-2(a_{x,\nu(x)}+a_{\nu(x),x})t}\geq e^{-2(a_{\nu(x),x}+a_{x,x})t}.
\end{align}

The definitions of $|\phi_{x,1}\>$ and $|\phi_{x,2}\>$ as in \eqn{defn-x1-diag} and \eqn{defn-x2-diag} directly give \eqn{A-restriction-2-diag}--\eqn{A-restriction-6-diag}. As for \eqn{A-restriction-1-diag}, we have
\begin{align}
\<\phi_{x,1}|\phi_{x,1}\>&=b_{x}^{2}+c_{x}^{2} \\
&=\frac{a_{x,\nu(x)}}{a_{x,\nu(x)}+a_{\nu(x),x}}+\frac{a_{\nu(x),x}}{a_{x,\nu(x)}+a_{\nu(x),x}}e^{-2(a_{x,\nu(x)}+a_{\nu(x),x})t}.
\end{align}
In addition, $\||\phi_{x,1}\>\|^{2}+\||\phi_{x,2}\>\|^{2}=1$ and $\<\phi_{\nu(x),2}|\phi_{x,1}\>+\<\phi_{\nu(x),1}|\phi_{x,2}\>=0$ are satisfied for all $x$. Therefore, \eqn{defn-x1-diag} and \eqn{defn-x2-diag} give a construction of the 1-sparse Stinespring isometry as claimed.

The state $|\phi_{x,2}\>/\||\phi_{x,2}\>\|$ can be directly prepared using $O(\log N)$ 2-qubit gates. The state $|\phi_{x,1}\>/\||\phi_{x,1}\>\|$ can be prepared by $O(\log N)$ 2-qubit gates and $O(1)$ queries using the following procedure:
\begin{align}
&|0\ldots0\>_{1,\ldots,n}|0\>_{n+1}|0\>_{n+2} \nonumber \\
&\quad\mapsto \frac{1}{\sqrt{b_{x}^{2}+c_{x}^{2}}}\big(b_{x}|0\ldots0\>_{1,\ldots,n}|1\>_{n+1}|0\>_{n+2}+c_{x}|0\ldots0\>_{1,\ldots,n}|1\>_{n+1}|1\>_{n+2}\big) \\
&\quad\mapsto \frac{1}{\sqrt{b_{x}^{2}+c_{x}^{2}}}\big(b_{x}|0\ldots0\>_{1,\ldots,n}|1\>_{n+1}|0\>_{n+2}+c_{x}|x\>_{1,\ldots,n}|1\>_{n+1}|1\>_{n+2}\big). \label{eqn:diagonal-state-preparation}
\end{align}
This completes the proof.
\end{proof}

\subsubsection{Dense-diagonal Lindbladians}
\label{sec:dense-diagonal}

If the matrix $a$ is dense, a counting argument shows it is hard to simulate $\mathcal{L}$ in general. However, we can efficiently simulate $\mathcal{L}$ in the following special case:

\begin{theorem}\label{thm:equal-diagonal}
Suppose the entries of $a$ are independent of the column index, i.e., $a_{k,l} =: a_{k}$ is independent of $l$ for all $k$. Then, given a black box that can compute $\sum_{k=k_{1}}^{k_{2}}a_{k}$ for arbitrary $k_{1},k_{2}\in\range{N}, k_{1}\leq k_{2}$, the Lindbladian $\mathcal{L}$ defined by \eqn{GKS'-diagonal} can be simulated for any time $t>0$ using $\poly(\log N)$ 2-qubit gates, with no error.
\end{theorem}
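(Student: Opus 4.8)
The plan is to compute $e^{\mathcal{L}t}$ in closed form, observe that it is an elementary channel, and then write down a short circuit for it.

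\emph{Step 1: closed form for $e^{\mathcal{L}t}$.} Put $S:=\sum_{k\in\range{N}}a_{k}$. If $S=0$ then $\mathcal{L}=0$ and there is nothing to prove, so assume $S>0$ and set $\sigma:=\frac{1}{S}\sum_{k\in\range{N}}a_{k}|k\>\<k|$, a density matrix. Since $\sum_{l}|l\>\<l|=I$ and $\sum_{l}\<l|\rho|l\>=\Tr\rho$, the hypothesis $a_{k,l}=a_{k}$ collapses \eqn{GKS'-diagonal} to $\mathcal{L}(\rho)=2S\big(\Tr[\rho]\,\sigma-\rho\big)$. As $\Tr\mathcal{L}(\rho)=0$, the trace is conserved, so this linear equation integrates to
\begin{align}
e^{\mathcal{L}t}(\rho)=e^{-2St}\rho+\big(1-e^{-2St}\big)\Tr[\rho]\,\sigma=:(1-p)\,\rho+p\,\Tr[\rho]\,\sigma,
\end{align}
where $p:=1-e^{-2St}\in[0,1)$ (using $a_{k}\ge0$, so $S\ge0$). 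Equivalently, this is the solution of the $2\times2$ system on each invariant subspace $\spn\{|u\>\<u|,\sigma\}$ noted in the introduction, together with the decay $e^{\mathcal{L}t}(|u\>\<v|)=e^{-2St}|u\>\<v|$ of off-diagonal terms from \eqn{GKS'-diagonal-2}. Thus $\mathcal{E}:=e^{\mathcal{L}t}$ is a convex mixture of the identity channel and the ``reset'' channel $\rho\mapsto\Tr[\rho]\,\sigma$, and it remains only to implement $\mathcal{E}$ exactly by $\poly(\log N)$ gates.

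\emph{Step 2: the circuit.} Use the $(\log N)$-qubit system register $A$ (holding $\rho$), a coin qubit $C$, and $(\log N)$-qubit ancillas $D$ (a ``dump'') and $E$ (a purifier), with $C,D,E$ initialized to $|0\>$. Perform: (i) rotate $C$ to $\sqrt{1-p}\,|0\>+\sqrt{p}\,|1\>$; (ii) apply $\mathrm{SWAP}_{A,D}$ controlled on $C=|1\>$, so that in the branch $C=|1\>$ the register $A$ becomes $|0\>$ and $\rho$ sits in $D$; (iii) controlled on $C=|1\>$, prepare the purification $|\psi\>:=\sum_{k\in\range{N}}\sqrt{a_{k}/S}\,|k\>_{A}|k\>_{E}$ from $|0\>_{A}|0\>_{E}$; (iv) discard $C$, $D$, $E$. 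Tracing out $C$ kills every term off-diagonal in $C$, and the two surviving branches leave, on $ADE$, the state $(1-p)\,\rho_{A}\otimes|0\>\<0|_{D}\otimes|0\>\<0|_{E}+p\,|\psi\>\<\psi|_{AE}\otimes\rho_{D}$; tracing out $D$ replaces $\rho_{D}$ by the scalar $\Tr[\rho]$ — which is exactly how the $\Tr[\rho]$ of the reset channel appears — and tracing out $E$ replaces $|\psi\>\<\psi|_{AE}$ by $\sigma_{A}$, leaving $(1-p)\rho+p\,\Tr[\rho]\,\sigma=\mathcal{E}(\rho)$ on $A$. By linearity it suffices to check this on $\rho=|u\>\<v|$, where for $u\ne v$ the factor $\Tr[|u\>\<v|]=\langle v|u\rangle=0$ kills the reset branch and reproduces $e^{-2St}|u\>\<v|$.

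\emph{Step 3: complexity, and the main obstacle.} The coin rotation costs $O(1)$ gates, the controlled $\mathrm{SWAP}_{A,D}$ costs $O(\log N)$, and step (iii) is the controlled preparation of $\sum_{k}\sqrt{a_{k}/S}\,|k\>$ on $A$ followed by $O(\log N)$ controlled CNOTs copying $A$ into $E$. The state $\sum_{k}\sqrt{a_{k}/S}\,|k\>$ over $N=2^{n}$ basis states is built by the standard binary-tree state-preparation procedure: $n=\log N$ rounds of (multi-)controlled single-qubit rotations whose angles are computed from ratios of the partial sums $\sum_{k\in I}a_{k}$ returned by the black box, costing $\poly(\log N)$ gates and $O(\log N)$ queries in total. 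Hence $\mathcal{E}=e^{\mathcal{L}t}$ is implemented with $\poly(\log N)$ 2-qubit gates and $\poly(\log N)$ queries, with no error beyond exact single-qubit rotations, consistent with the statement. The real content is Step 1 — recognizing that, although $\mathcal{E}$ is only $N$-sparse (indeed $\mathcal{E}(|x\>\<x|)$ is supported on all of $\{|k\>\<k|\}$, so \thm{sparse-Stinespring-isometry} with constant $d$ does not apply), it is just an identity/reset mixture. The technical care is in Step 2: the system cannot be conditionally discarded, so the dump register and controlled SWAP are needed, and one must check both that the coherences in $C$ do no harm (they vanish when $C$ is traced out) and that tracing the dump register is precisely what produces the $\Tr[\rho]$ of the reset channel. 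Efficient preparation of the otherwise-dense state $\sigma$ is possible only because of the partial-sum oracle, which is exactly what the hypothesis provides.
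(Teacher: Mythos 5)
Your proposal is correct and follows essentially the same route as the paper: derive the closed-form evolution (the paper does this via the $2\times2$ matrix exponential on the invariant subspace $\{|u\>\<u|,\sigma\}$, you via the equivalent identity/reset-mixture formula), exhibit the corresponding Stinespring isometry, and implement it with a coin rotation, a controlled swap into a dump register, and Grover--Rudolph preparation of $\sum_k\sqrt{a_k/S}\,|k\>$ from the partial-sum oracle. Your isometry $W$ is the paper's $V$ with the single $\C^{N+1}$ ancilla (holding $|0\>$ vs.\ $|m+1\>$) split into an explicit coin qubit plus dump register, so the two circuits coincide up to this repackaging.
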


Observe that, unlike the other simulations we present, \thm{equal-diagonal} efficiently simulates a set of Lindbladians with full-rank matrices $A$ in \eqn{GKS'}.  In other words, it can efficiently simulate a set of Lindbladians with maximum number of Lindblad operators. \thm{equal-diagonal} also applies to the special case $A=I_{N^{2}}$.

Intuitively, if we consider the underlying bipartite graph $G=((V_{R},V_{C}),E)$ of $a$ where $V_{R}$ represents the part for rows and $V_{C}$ represents the part for columns, then $G$ contains $N$ unweighted stars centered in $V_{R}$ and we can simultaneously simulate all $N$ stars with arbitrary weights on them.

\begin{proof}[Proof of \thm{equal-diagonal}]
For convenience, define
\begin{align}
S:=\sum_{k=0}^{N-1}a_{k}.
\end{align}

By \eqn{GKS'-diagonal}, we have
\begin{alignat}{2}
\mathcal{L}(|u\>\<v|)&= -2S|u\>\<v| &\quad& \forall\,u\neq v; \label{eqn:diagonal-u-neq-v} \\
\mathcal{L}(|u\>\<u|)&= 2\sum_{k=0}^{N-1}a_{k}|k\>\<k|-2S|u\>\<u| && \forall\,u. \label{eqn:diagonal-u-eqn-v}
\end{alignat}
Furthermore, we have
\begin{align}
\mathcal{L}\big(\sum_{k=0}^{N-1}a_{k}|k\>\<k|\big)=\sum_{u=0}^{N-1}a_{u}\big(2\sum_{k=0}^{N-1}a_{k}|k\>\<k|-2S|u\>\<u|\big)=0. \label{eqn:diagonal-u-eqn-v-2}
\end{align}
Collecting \eqn{diagonal-u-eqn-v} and \eqn{diagonal-u-eqn-v-2} into a single vector equation, we have
\begin{align}
\mathcal{L}
\begin{pmatrix}
|u\>\<u| \\
\frac{\sum_{k=0}^{N-1}a_{k}|k\>\<k|}{S}
\end{pmatrix}=\left(
\begin{array}{cc}
-2S & 2S \\
0 & 0
\end{array} \right)
\begin{pmatrix}
|u\>\<u| \\
\frac{\sum_{k=0}^{N-1}a_{k}|k\>\<k|}{S}
\end{pmatrix}.
\end{align}

Because
\begin{align}
\exp\biggl[\left(
\begin{array}{cc}
-2S & 2S \\
0 & 0
\end{array} \right)t\biggr]=\left(
\begin{array}{cc}
e^{-2St} & 1-e^{-2St} \\
0 & 1
\end{array} \right),
\end{align}
we have
\begin{align}\label{eqn:diagonal-condition-1}
e^{\mathcal{L}t}(|u\>\<u|)=e^{-2St}|u\>\<u|+(1-e^{-2St})\frac{\sum_{k=0}^{N-1}a_{k}|k\>\<k|}{S}.
\end{align}
Also, by \eqn{diagonal-u-neq-v}, we have
\begin{align}\label{eqn:diagonal-condition-2}
e^{\mathcal{L}t}(|u\>\<v|)=e^{-2St}|u\>\<v|.
\end{align}
It suffices to efficiently implement an isometry that satisfies \eqn{diagonal-condition-1} and \eqn{diagonal-condition-2}.

Consider the isometry $V\colon\C^{N}\rightarrow \C^{N}\otimes \C^{N}\otimes \C^{N+1}$ defined by
\begin{align}\label{eqn:diagonal-isometry}
V|m\>=e^{-St}|m\>|0\>|0\>+\sum_{k=0}^{N-1}\sqrt{\frac{(1-e^{-2St})a_{k}}{S}}|k\>|k\>|m+1\>\quad\forall\,m\in\range{N},
\end{align}
where the second and third subsystems of the output of $V$ are ancilla states. Considering the effect on the third subsystem, we have
\begin{align}
\Tr_{\anc}[V|m\>\<n|V^{\dagger}]=e^{-2St}|m\>\<n| \quad\forall\,m\neq n
\end{align}
and
\begin{align}
\Tr_{\anc}[V|m\>\<m|V^{\dagger}]&=e^{-2St}|m\>\<m|+(1-e^{-2St})\frac{\sum_{k=0}^{N-1}a_{k}|k\>\<k|}{S} \quad\forall\,m,
\end{align}
which match \eqn{diagonal-condition-1} and \eqn{diagonal-condition-2}.

The isometry $V$ can be implemented by the quantum circuit in \fig{diagonal-star}.
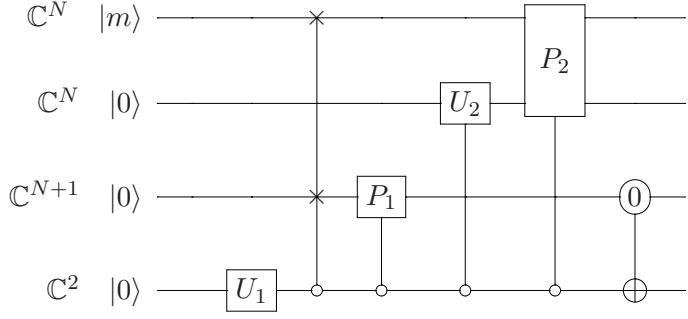
\begin{figure}[htbp]
\centering
\begin{align*}
\Qcircuit @C=1.2em @R=1.8em {
\lstick{\C^{N}\quad|m\>} & \qw & \qw & \qswap \qwx[2] & \qw & \qw & \multigate{1}{P_{2}} & \qw & \qw \\
\lstick{\C^{N}\quad|0\>} & \qw & \qw & \qw & \qw & \gate{U_{2}} & \ghost{P_{2}} & \qw & \qw \\
\lstick{\C^{N+1}\quad|0\>} &  \qw & \qw & \qswap & \gate{P_{1}} & \qw & \qw &\measure{0} \qwx[1] & \qw \\
\lstick{\C^{2}\quad|0\>} & \qw & \gate{U_{1}} & \ctrlo{-1} & \ctrlo{-1} & \ctrlo{-2} & \ctrlo{-2} & \targ & \qw \\
}
\end{align*}
\caption{The quantum circuit for simulating dense-diagonal Lindbladians as in \thm{equal-diagonal}.}
\label{fig:diagonal-star}
\end{figure}
Here, $U_{1}$ is a unitary acting on an ancilla qubit such that
\begin{align}
U_{1}|0\>=\sqrt{1-e^{-2St}}|0\>+e^{-St}|1\>;
\end{align}
the swap gate swaps the state in the first and the third subsystems (the state $|N\>$ cannot appear in the third system, so this swap is well-defined); $P_{1}$ is a permutation acting on the third subsystem such that
\begin{align}
P_{1}|N\>=|0\>,\quad P_{1}|k\>=|k+1\>\quad\forall\,k\in\range{N};
\end{align}
$U_{2}$ is a unitary gate acting on the second subsystem such that
\begin{align}
U_{2}|0\>=\sum_{k=0}^{N-1}\sqrt{\frac{a_{k}}{S}}|k\>;
\end{align}
and $P_{2}$ is a permutation acting on the first and second subsystems such that
\begin{alignat}{2}
P_{0}|0\>|k\>&=|k\>|k\> &\quad& \forall\,k\in\{1,\ldots,N-1\}, \\
P_{0}|k\>|k\>&=|0\>|k\> && \forall\,k\in\{1,\ldots,N-1\}, \\
P_{0}|i\>|j\>&=|i\>|j\> && \text{otherwise.}
\end{alignat}
The gates $U_{1}$, $P_{1}$, $P_{2}$, $\textsf{SWAP}$, and $\wedge_{0}(\sigma_{x})$ (defined in \fig{building-block-1}) can be implemented with $\poly(\log N)$ 2-qubit gates using standard techniques. The gate $U_{2}$ can be implemented with $\poly(\log N)$ 2-qubit gates using the black box and the method of \cite{grover2002creating}.

Finally, we verify that this quantum circuit gives the isometry $V$ as claimed:
\begin{align}
|m\>|0\>|0\>|0\>&\xmapsto{U_{1}}|m\>|0\>|0\>(\sqrt{1-e^{-2St}}|0\>+e^{-St}|1\>) \\
&=\sqrt{1-e^{-2St}}|m\>|0\>|0\>|0\>+e^{-St}|m\>|0\>|0\>|1\> \\
&\xmapsto{\textsf{SWAP}}\sqrt{1-e^{-2St}}|0\>|0\>|m\>|0\>+e^{-St}|m\>|0\>|0\>|1\> \\
&\xmapsto{P_{1}}\sqrt{1-e^{-2St}}|0\>|0\>|m+1\>|0\>+e^{-St}|m\>|0\>|0\>|1\> \\
&\xmapsto{U_{2}}\sqrt{1-e^{-2St}}|0\>\big(\sum_{k=0}^{N-1}\sqrt{\frac{a_{k}}{S}}|k\>\big)|m+1\>|0\>+e^{-St}|m\>|0\>|0\>|1\> \\
&=\sum_{k=0}^{N-1}\sqrt{\frac{(1-e^{-2St})a_{k}}{S}}|0\>|k\>|m+1\>|0\>+e^{-St}|m\>|0\>|0\>|1\> \\
&\xmapsto{P_{2}}\sum_{k=0}^{N-1}\sqrt{\frac{(1-e^{-2St})a_{k}}{S}}|k\>|k\>|m+1\>|0\>+e^{-St}|m\>|0\>|0\>|1\> \\
&\xmapsto{\wedge_{0}(\sigma_{x})}\Big(\sum_{k=0}^{N-1}\sqrt{\frac{(1-e^{-2St})a_{k}}{S}}|k\>|k\>|m+1\>+e^{-St}|m\>|0\>|0\>\Big)|0\>.
\end{align}
This completes the proof.
\end{proof}

\subsection{1-ket-sparse diagonal Lindbladians}

In our final application of the sparse Stinespring isometry framework, we consider Lindbladians for which the only nonzero elements of the matrix $A$ in \eqn{GKS'} are
\begin{align}
a_{k,l}:=A_{(k,l),(k,l)} \quad \text{and} \quad b_{k,l}:=A_{(k,l),(\nu(k),l)},
\end{align}
where $\nu$ is an involution of $\range{N}$. As in \sec{sparse-diagonal}, we call $\nu$ the neighbor function. Furthermore, we assume that $a$ and $b$ are also 1-sparse with respect to the same neighbor function $\nu$ and that their coefficients satisfy the following relationships:
\begin{align}
a_{\{x,\nu(x)\}}&:=a_{x,\nu(x)}=a_{\nu(x),x}, \label{eqn:1-k-sparse-defn-1} \\
a'_{\{x,\nu(x)\}}&:=a_{x,x}=a_{\nu(x),\nu(x)}, \label{eqn:1-k-sparse-defn-2} \\
b_{\{x,\nu(x)\}}&:=b_{x,x}=b_{x,\nu(x)}=b_{\nu(x),x}=b_{\nu(x),\nu(x)}\in\R. \label{eqn:1-k-sparse-defn-3}
\end{align}
We call such a Lindbladian \emph{1-ket-sparse} because $\nu$ can only act on the first coordinates of both rows and columns, which correspond to the ket vectors in \eqn{Lindblad-ket-bra}. Applying \eqn{1-k-sparse-defn-1}, \eqn{1-k-sparse-defn-2}, and \eqn{1-k-sparse-defn-3} to \eqn{GKS'}, we see that 1-ket-sparse Lindbladians have the form
\begin{align}\label{eqn:GKS'-1-k-sim}
\mathcal{L}(\rho)&=\sum_{k=0}^{N-1}a_{\{k,\nu(k)\}}\big(2\<\nu(k)|\rho|\nu(k)\>|k\>\<k|-|\nu(k)\>\<\nu(k)|\rho-\rho|\nu(k)\>\<\nu(k)|\big) \nonumber \\
&\quad+\sum_{k=0}^{N-1}a'_{\{k,\nu(k)\}}\big(2\<k|\rho|k\>|k\>\<k|-|k\>\<k|\rho-\rho|k\>\<k|\big) \nonumber \\
&\quad+2\sum_{k=0}^{N-1}b_{\{k,\nu(k)\}}\big(\<k|\rho|k\>+\<\nu(k)|\rho|\nu(k)\>\big)|k\>\<\nu(k)|.
\end{align}

We show that 1-ket-sparse Lindbladians can be efficiently simulated:
\begin{theorem}\label{thm:1-k-sparse}
Given a black box that takes $x\in\range{N}$ as input and outputs $\nu(x)$, $a_{\{x,\nu(x)\}}$, $a'_{\{x,\nu(x)\}}$ and $b_{\{x,\nu(x)\}}$, the corresponding 1-ket-sparse Lindbladian defined by \eqn{GKS'-1-k-sim} can be simulated for any time $t>0$ using $O(\log N)$ 2-qubit gates and $O(1)$ queries to the black box, with no error.
\end{theorem}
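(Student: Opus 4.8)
The plan is to apply the \stine{} framework via \thm{sparse-Stinespring-isometry}. First I would establish the invariant-subspace structure. Writing $\nu^{0}=\mathrm{id}$, $\nu^{1}=\nu$ and $g_{x}:=a_{\{x,\nu(x)\}}+a'_{\{x,\nu(x)\}}$, inspection of \eqn{GKS'-1-k-sim} shows that $\mathcal{L}(|x\>\<y|)=-(g_{x}+g_{y})|x\>\<y|$ whenever $x\neq y$, while $\mathcal{L}(|u\>\<u|)=2a_{\{u,\nu(u)\}}\bigl(|\nu(u)\>\<\nu(u)|-|u\>\<u|\bigr)+2b_{\{u,\nu(u)\}}\bigl(|u\>\<\nu(u)|+|\nu(u)\>\<u|\bigr)$; hence $\mathcal{L}$ preserves the four-dimensional space $\mathcal{V}_{u}:=\spn\{|u\>\<u|,|\nu(u)\>\<\nu(u)|,|u\>\<\nu(u)|,|\nu(u)\>\<u|\}$ for every $u$, and since $\nu$ is an involution the sets $S_{x}=\{x,\nu(x)\}$ satisfy $S_{x}=S_{y}$ or $S_{x}\cap S_{y}=\emptyset$, so $\mathcal{L}$ is invariantly $2$-sparse with neighbor function $\nu$ (fixed points of $\nu$ behave trivially and may be assumed absent, as in \lem{1-sparse-a}). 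On $\mathcal{V}_{u}$ the generator is block-triangular: the diagonal pair runs a two-state classical rate process of rate $2\alpha$, the coherences $|u\>\<\nu(u)|,|\nu(u)\>\<u|$ decay at rate $2g$, and the $b$-terms feed the former into the latter. Exponentiating, using that $\bigl(\begin{smallmatrix}1&1\\1&1\end{smallmatrix}\bigr)$ times a column-stochastic matrix equals $\bigl(\begin{smallmatrix}1&1\\1&1\end{smallmatrix}\bigr)$, gives with $\alpha:=a_{\{u,\nu(u)\}}$, $g:=g_{u}$, $\gamma:=b_{\{u,\nu(u)\}}(1-e^{-2gt})/g$:
\begin{align*}
e^{\mathcal{L}t}(|u\>\<u|)&=\tfrac{1+e^{-4\alpha t}}{2}|u\>\<u|+\tfrac{1-e^{-4\alpha t}}{2}|\nu(u)\>\<\nu(u)|+\gamma\bigl(|u\>\<\nu(u)|+|\nu(u)\>\<u|\bigr),\\
e^{\mathcal{L}t}(|u\>\<\nu(u)|)&=e^{-2gt}|u\>\<\nu(u)|,\qquad e^{\mathcal{L}t}(|x\>\<y|)=e^{-(g_{x}+g_{y})t}|x\>\<y|\ \ (x\neq y),
\end{align*}
and analogously for the remaining two basis elements of $\mathcal{V}_{u}$.

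By \prop{d-sparse-equivalence} and \thm{sparse-Stinespring-isometry} with $d=2$, it then suffices to produce, for each $x$, ancilla vectors $|\phi_{x,0}\>,|\phi_{x,1}\>$ whose pairwise inner products equal the coefficients above (the Gram vectors of the Gram matrix of $e^{\mathcal{L}t}$), each preparable by $O(\log N)$ $2$-qubit gates and $O(1)$ black-box queries. I would take a constant-size flag register $f$ whose last bit isolates an ``index-$1$'' sector, together with the orbit label $\omega(x):=\min\{x,\nu(x)\}$ and a bit $s_{x}$ recording whether $x>\nu(x)$, and set
\begin{align*}
|\phi_{x,0}\>&=e^{-g_{x}t}\,|0\>\otimes|000\>_{f}+\mu_{x}\,|\omega(x)\>\otimes|s_{x}10\>_{f},\\
|\phi_{x,1}\>&=\xi_{x}\,|x\>\otimes|001\>_{f}+(\gamma_{x}/\mu_{x})\,|\omega(x)\>\otimes|s_{x}10\>_{f},
\end{align*}
with $\mu_{x}^{2}:=\tfrac{1+e^{-4\alpha_{x}t}}{2}-e^{-2g_{x}t}$, $\xi_{x}^{2}:=\tfrac{1-e^{-4\alpha_{x}t}}{2}-\gamma_{x}^{2}/\mu_{x}^{2}$, where $\alpha_{x}:=a_{\{x,\nu(x)\}}$ and $\gamma_{x}$ is the quantity $\gamma$ above for the orbit of $x$. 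The shared $|0\>\otimes|000\>_{f}$ component generates the cross-orbit overlaps $e^{-(g_{x}+g_{y})t}$ and the intra-orbit overlap $e^{-2g_{u}t}$ (the orbit-tagged parts of $|\phi_{u,0}\>,|\phi_{\nu(u),0}\>$ are orthogonal since $s_{u}\neq s_{\nu(u)}$); the tags $s_{x}$ and $x$ make $|\phi_{x,1}\>$ orthogonal to all $|\phi_{y,j}\>$ except $\<\phi_{x,1}|\phi_{x,0}\>=\gamma_{x}$; and the norms are $\tfrac{1+e^{-4\alpha_{x}t}}{2}$ and $\tfrac{1-e^{-4\alpha_{x}t}}{2}$, as needed. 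Each normalized $|\phi_{x,i}\>$ lies in an explicit two-dimensional subspace addressed by $(\log N)$-bit data from the black box, so $c_{\phi}=O(\log N)$ and \thm{sparse-Stinespring-isometry} gives an \emph{exact} implementation with $O(\log N)$ $2$-qubit gates and $O(1)$ queries.

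The main technical obstacle is checking that the amplitudes are real. That $\mu_{x}^{2}\ge0$ is immediate from $a'_{\{x,\nu(x)\}}\ge0$ and $\tfrac{1+z^{2}}{2}\ge z$. The harder point is $\xi_{x}^{2}\ge0$, i.e.\ $\gamma_{x}^{2}\le\tfrac{1-e^{-4\alpha_{x}t}}{2}\bigl(\tfrac{1+e^{-4\alpha_{x}t}}{2}-e^{-2g_{x}t}\bigr)$: this is strictly stronger than what complete positivity of $e^{\mathcal{L}t}$ alone yields, and it uses the defining positive-semidefiniteness constraint $b_{\{x,\nu(x)\}}^{2}\le a_{\{x,\nu(x)\}}a'_{\{x,\nu(x)\}}$ (positivity of $A$ restricted to indices $(x,x),(\nu(x),x)$); after substituting it, the claim reduces to the scalar inequality $\tfrac{4\phi\phi'}{(\phi+\phi')^{2}}\le\tfrac{(1-e^{-2\phi})(1+e^{-2\phi}-2e^{-\phi-\phi'})}{(1-e^{-\phi-\phi'})^{2}}$ for $\phi=2\alpha_{x}t$, $\phi'=2a'_{\{x,\nu(x)\}}t\ge0$, which can be proved by calculus (equality iff $\phi=\phi'$). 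The degenerate case $\alpha_{x}=a'_{\{x,\nu(x)\}}=0$ forces $b_{\{x,\nu(x)\}}=0$, hence $\gamma_{x}=\mu_{x}=\xi_{x}=0$ and $|\phi_{x,1}\>=0$, and is handled by continuity. Granting this, the construction is complete and the stated gate and query counts follow.
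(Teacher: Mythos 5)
Your proof is correct and follows essentially the same route as the paper's: the same four-dimensional invariant-subspace computation of $e^{\mathcal{L}t}$, the same reduction to explicit two-component Gram vectors fed into \thm{sparse-Stinespring-isometry}, and the same key scalar inequality (the paper's \eqn{ac-d}) guaranteeing nonnegative radicands, which both you and the paper assert rather than prove in full. The remaining differences are cosmetic---a different tagging of the ancilla flag registers, and a consistent factor-of-two rescaling of the coherence coefficient $\gamma$ relative to the paper's $d_{u}$, traceable to the explicit factor $2$ in \eqn{GKS'-1-k-sim}.
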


While the class of 1-ket-sparse Lindbladians is somewhat artificial, this result shows that we can efficiently simulate some Lindbladians that include off-diagonal elements, together with a structured diagonal.

\begin{proof}[Proof of \thm{1-k-sparse}]
By \thm{sparse-Stinespring-isometry}, it suffices to show that for any time $t>0$, there exists an 1-sparse Stinespring isometry $V$ whose ancilla states can be prepared using $O(\log N)$ 2-qubit gates such that
\begin{align}
e^{\mathcal{L}t}(\rho)=\Tr_{\anc}[V\rho V^{\dagger}].
\end{align}

Since the matrix $A$ in \eqn{GKS'} is a positive semidefinite matrix,
\begin{align}\label{eqn:semi-definite-small}
b_{\{u,\nu(u)\}}^{2}\leq 4a_{\{u,\nu(u)\}}a'_{\{u,\nu(u)\}}\quad\forall\,u\in\range{N}.
\end{align}
By \eqn{GKS'-1-k-sim},
\begin{alignat}{2}
\mathcal{L}(|u\>\<v|)&=-(a_{u,u}+a_{v,v}+a_{\nu(u),u}+a_{\nu(v),v})|u\>\<v|&\quad& \forall\,u\neq v; \label{eqn:off-diagonal} \\
\mathcal{L}(|u\>\<u|)&=2a_{\{u,\nu(u)\}}\big(|\nu(u)\>\<\nu(u)|-|u\>\<u|)+b_{\{u,\nu(u)\}}(|\nu(u)\>\<u|+|u\>\<\nu(u)|\big) && \forall\,u. \label{eqn:on-diagonal}
\end{alignat}
Therefore, when $u\notin\{v,\nu(v)\}$, we have
\begin{align}
e^{\mathcal{L}t}(|u\>\<v|)=e^{-(a_{u,u}+a_{v,v}+a_{\nu(u),u}+a_{\nu(v),v})t}|u\>\<v|. \label{eqn:off-diagonal-exp}
\end{align}

For convenience, fix $u$ and denote
\begin{align}
a:=a_{\{u,\nu(u)\}},\quad a':=a'_{\{u,\nu(u)\}},\quad b:=b_{\{u,\nu(u)\}}.
\end{align}
Let $\vec{V}_{\{u,\nu(u)\}}:=\big(|u\>\<u|,|u\>\<\nu(u)|,|\nu(u)\>\<u|,|\nu(u)\>\<\nu(u)|\big)^{T}$. In the subspace spanned by $\vec{V}_{\{u,\nu(u)\}}$, we have
\begin{align}\label{eqn:on-diagonal-subspace}
\mathcal{L}\big(\vec{V}_{\{u,\nu(u)\}}\big)=
\left(\begin{array}{cccc}
-2a & b & b & 2a \\
0 & -2(a+a') & 0 & 0 \\
0 & 0 & -2(a+a') & 0 \\
2a & b & b & -2a
\end{array} \right)
\vec{V}_{\{u,\nu(u)\}}.
\end{align}
Solving the differential equation $\frac{\d\rho}{\d t}=\mathcal{L}(\rho)$, we get
\begin{align}
e^{\mathcal{L}t}\big(\vec{V}_{\{u,\nu(u)\}}\big)=
\left(\begin{array}{cccc}
\frac{1}{2}(1+e^{-4at}) & \frac{b(1-e^{-2(a+a')t})}{2(a+a')} & \frac{b(1-e^{-2(a+a')t})}{2(a+a')} & \frac{1}{2}(1-e^{-4at}) \\
0 & e^{-2(a+a')t} & 0 & 0 \\
0 & 0 & e^{-2(a+a')t} & 0 \\
\frac{1}{2}(1-e^{-4at}) & \frac{b(1-e^{-2(a+a')t})}{2(a+a')} & \frac{b(1-e^{-2(a+a')t})}{2(a+a')} & \frac{1}{2}(1+e^{-4at})
\end{array} \right) \vec{V}_{\{u,\nu(u)\}}.
\end{align}
Therefore, for all $u\in\range{N}$, we have
\begin{align}
e^{\mathcal{L}t}(|u\>\<u|)&=\frac{1}{2}(1+e^{-4at})|u\>\<u|+\frac{1}{2}(1-e^{-4at})|\nu(u)\>\<\nu(u)| \nonumber \\
&\quad+\frac{b(1-e^{-2(a+a')t})}{2(a+a')}(|u\>\<\nu(u)|+|\nu(u)\>\<u|). \label{eqn:on-diagonal-exp}
\end{align}

By \eqn{off-diagonal-exp}, \eqn{on-diagonal-exp}, and \defn{d-sparse-Stine}, it suffices to find ancilla states $\{|\phi_{u,1}\>,|\phi_{u,2}\>\}_{x=0}^{N-1}$ such that for arbitrary $u\neq v$,
\begin{align}
	\<\phi_{u,1}|\phi_{u,1}\>&=\frac{1}{2}(1+e^{-4at}); \label{eqn:A-restriction-1} \\
	\<\phi_{u,2}|\phi_{u,2}\>&=\frac{1}{2}(1-e^{-4at}); \label{eqn:A-restriction-2} \\
	\<\phi_{u,1}|\phi_{u,2}\>&=\frac{b(1-e^{-2(a+a')t})}{2(a+a')}; \label{eqn:A-restriction-3} \\
	\<\phi_{v,1}|\phi_{u,1}\>&=e^{-(a_{\{\nu(u),u\}}+a_{\{\nu(v),v\}}+a_{\{u,u\}}+a_{\{v,v\}})t}; \label{eqn:A-restriction-4} \\
	\<\phi_{v,1}|\phi_{u,2}\>&=0; \label{eqn:A-restriction-5} \\
	\<\phi_{v,2}|\phi_{u,2}\>&=0. \label{eqn:A-restriction-6}
\end{align}
Consider the following construction of the ancilla states for all $u\in\range{N}$:
\begin{align}
|\phi_{u,1}\>&=b_{u}|0\ldots0\>_{1,\dots,n}|1\>_{n+1}|0\>_{n+2}+c_{u}|u\>_{1,\ldots,n}|1\>_{n+1}|1\>_{n+2}; \label{eqn:defn-x1} \\
|\phi_{u,2}\>&=\frac{d_{u}}{c_{u}}|u\>_{1,\ldots,n}|1\>_{n+1}|1\>_{n+2}+\sqrt{a_{u}^{2}-\frac{d_{u}^{2}}{c_{u}^{2}}}|u\>_{1,\ldots,n}|0\>_{n+1}|0\>_{n+2}, \label{eqn:defn-x2}
\end{align}
with
\begin{align}
a_{u}&:=\sqrt{\frac{1}{2}(1-e^{-4at})} \\
b_{u}&:=e^{-(a+a')t} \\
c_{u}&:=\sqrt{\frac{1}{2}(1+e^{-4at}-2e^{-2(a+a')t})} \\
d_{u}&:=\frac{b(1-e^{-2(a+a')t})}{2(a+a')}.
\end{align}

For any $a,a',t>0$, one can show that
\begin{align}\label{eqn:ac-d}
(1-e^{-4at})(1+e^{-4at}-2e^{-2(a+a')t})\geq\frac{4aa'}{(a+a')^{2}}(1-e^{-2(a+a')t})^{2}.
\end{align}
By \eqn{semi-definite-small} and \eqn{ac-d}, we have
\begin{align}\label{eqn:ac-geq-d}
a_{u}^{2}c_{u}^{2}\geq d_{u}^{2}.
\end{align}
If $c_{u}=0$, then by \eqn{ac-geq-d} we must have $d_{u}=0$. In this case, we simply define $\frac{d_{u}^{2}}{c_{u}^{2}}:=0$ in \eqn{defn-x2}.

The definitions of $|\phi_{u,1}\>$ and $|\phi_{u,2}\>$ as in \eqn{defn-x1} and \eqn{defn-x2} directly give \eqn{A-restriction-3}--\eqn{A-restriction-6}. As for \eqn{A-restriction-1} and \eqn{A-restriction-2}, we have
\begin{align}
\<\phi_{u,1}|\phi_{u,1}\>&=b_{u}^{2}+c_{u}^{2}=e^{-2(a+a')t}+\frac{1}{2}(1+e^{-4at}-2e^{-2(a+a')t})=\frac{1}{2}(1+e^{-4at}); \\
\<\phi_{u,2}|\phi_{u,2}\>&=\frac{d_{u}^{2}}{c_{u}^{2}}+(a_{u}^{2}-\frac{d_{u}^{2}}{c_{u}^{2}})=a_{u}^{2}=\frac{1}{2}(1-e^{-4at}).
\end{align}
In addition, for all $u\in\range{N}$, $\||\phi_{u,1}\>\|^{2}+\||\phi_{u,2}\>\|^{2}=1$ by \eqn{A-restriction-1} and \eqn{A-restriction-2}, and $\<\phi_{\nu(u),2}|\phi_{u,1}\>+\<\phi_{\nu(u),1}|\phi_{u,2}\>=0$ by \eqn{A-restriction-5}. Therefore, \eqn{defn-x1} and \eqn{defn-x2} give a construction of the 1-sparse Stinespring isometry as claimed.  Similarly to \eqn{diagonal-state-preparation}, all these ancilla states can be prepared using $O(\log N)$ 2-qubit gates.
\end{proof}

\section{Sparse Lindblad operators}
\label{sec:short-time}

In this section, we consider simulating a Lindbladian generated by a single Lindblad operator $L$:
\begin{align}\label{eqn:single-generator}
\mathcal{L}(\rho)=L\rho L^{\dagger}-\frac{1}{2}(L^{\dagger}L\rho+\rho L^{\dagger}L).
\end{align}
Of course, any Lindbladian can be decomposed into a sum of such terms.

We assume that the matrix $L$ in \eqn{single-generator} is both row $k$-sparse and column $k$-sparse, and $\|L\|_{\max}\leq 1$ (i.e., the largest entry of $L$ in absolute value is bounded by 1), which can always be obtained by rescaling. Under these assumptions, we can efficiently implement a quantum operation $\mathcal{E}_{\epsilon}$ that approximates $\mathcal{L}$ for a short time $\epsilon$ up to first order.

\begin{lemma}\label{lem:k-short-time}
Suppose we are given a black box for a $k$-sparse Lindblad operator $L$ that takes a row index or column index as input and outputs all nonzero entries in that row/column and their positions. Then for any $\epsilon>0$, we can efficiently implement a quantum operation $\mathcal{E}_{\epsilon}$ such that
\begin{align}\label{eqn:k-main-bound-generator}
\|(1+\epsilon\mathcal{L})-\mathcal{E}_{\epsilon}\|_{\diamond}=O(k^{4}\epsilon^{2}).
\end{align}
If $k=1$, we can implement $\mathcal{E}_{\epsilon}$ using $O(\log N)$ 2-qubit gates and $O(1)$ queries, with no error. More generally, for $k\geq 2$, we can implement $\mathcal{E}_{\epsilon}$ using $O\big(k^{2}[\log N+\log^{5/2}(k/\epsilon)]\frac{\log(k/\epsilon)}{\log\log(k/\epsilon)}\big)$ 2-qubit gates and $O(k^{2}\frac{\log(k/\epsilon)}{\log\log(k/\epsilon)})$ queries.
\end{lemma}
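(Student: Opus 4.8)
The plan is to realize $\mathcal{E}_{\epsilon}$ as a Stinespring dilation with one ancilla qubit, bound its diamond-norm distance from $1+\epsilon\mathcal{L}$ by a Taylor expansion, and then implement that dilation using the sparsity of $L$ together with the techniques for sparse unitaries of \cite{jordan2009efficient}. Since $L$ is $k$-sparse with $\|L\|_{\max}\le 1$, its spectral norm obeys $\|L\|\le k$. If $\epsilon>1/k^{2}$ the asserted bound $O(k^{4}\epsilon^{2})$ exceeds a fixed constant and the claim is vacuous, so assume $\epsilon\le 1/k^{2}$; then $\epsilon L^{\dagger}L\preceq I$. Take the Kraus operators $K_{0}:=\sqrt{I-\epsilon L^{\dagger}L}$ and $K_{1}:=\sqrt{\epsilon}\,L$, so that $K_{0}^{\dagger}K_{0}+K_{1}^{\dagger}K_{1}=I$ and $V|x\>:=K_{0}|x\>|0\>+K_{1}|x\>|1\>$ is an isometry; set $\mathcal{E}_{\epsilon}(\rho):=\Tr_{\anc}[V\rho V^{\dagger}]=K_{0}\rho K_{0}^{\dagger}+\epsilon L\rho L^{\dagger}$. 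Expanding $\sqrt{1-s}=1-s/2-s^{2}/8-\cdots$ with $\|\epsilon L^{\dagger}L\|\le\epsilon k^{2}\le 1$ gives $\|K_{0}-(I-\tfrac{\epsilon}{2}L^{\dagger}L)\|=O(\epsilon^{2}\|L\|^{4})=O(k^{4}\epsilon^{2})$, hence $\mathcal{E}_{\epsilon}(\rho)=\rho+\epsilon\mathcal{L}(\rho)+O(k^{4}\epsilon^{2})$ in trace norm uniformly over $\|\rho\|_{1}=1$. All operator norms involved are unchanged by tensoring with an identity factor, so $\|(1+\epsilon\mathcal{L})-\mathcal{E}_{\epsilon}\|_{\diamond}=O(k^{4}\epsilon^{2})$, and it suffices to implement $\mathcal{E}_{\epsilon}$---or any channel $O(k^{4}\epsilon^{2})$-close to it---by implementing $V$ or a sufficiently close isometry.

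For $k=1$, $L$ has at most one nonzero entry in each row and column, so $L^{\dagger}L$ is diagonal, $\sqrt{I-\epsilon L^{\dagger}L}$ has exactly computable diagonal entries, and for each $x$ the vector $V|x\>$ lies in the two-dimensional subspace $\spn\{|x\>|0\>,|\pi(x)\>|1\>\}$, where $\pi(x)$ is the unique row with $L_{\pi(x),x}\neq 0$ (and $V|x\>=|x\>|0\>$ if column $x$ of $L$ vanishes). A single query returns $\pi(x)$ and $L_{\pi(x),x}$, and the corresponding rotation together with the relabeling $|x\>\mapsto|\pi(x)\>$ is implemented exactly using $O(\log N)$ two-qubit gates. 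This gives the zero-error, $O(\log N)$-gate, $O(1)$-query implementation for $k=1$.

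For $k\ge 2$, $V|x\>$ need not be supported on few basis states because $\sqrt{I-\epsilon L^{\dagger}L}$ need not be sparse. Replace $K_{0}$ by its degree-one truncation $\widetilde K_{0}:=I-\tfrac{\epsilon}{2}L^{\dagger}L$ (so $\|\widetilde K_{0}-K_{0}\|=O(k^{4}\epsilon^{2})$) and damp it by $\alpha:=1-\Theta(k^{4}\epsilon^{2})$, chosen so that $\alpha^{2}\widetilde K_{0}^{\dagger}\widetilde K_{0}+\epsilon L^{\dagger}L\preceq I$; then there is a unitary on the system, two ancilla qubits, and a workspace whose action on $|x\>|0\>$ is $\alpha\widetilde K_{0}|x\>|0\>+\sqrt{\epsilon}\,L|x\>|1\>+(\text{garbage})|2\>$, so discarding the ancillas gives a channel within $O(k^{4}\epsilon^{2})$ of $\mathcal{E}_{\epsilon}$. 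The ``good'' part is $O(k^{2})$-sparse in $|x\>$, and a sparse-matrix oracle for its amplitudes---the $O(k^{2})$ nonzero entries in the $x$-th columns of $I-\tfrac{\epsilon}{2}L^{\dagger}L$ and of $L$---is synthesized from $O(k)$ queries to the oracle for $L$ (query column $x$, then each of the $\le k$ rows it touches) together with $O(\poly\log(k/\epsilon))$-gate quantum arithmetic computing the entries to precision $O(k^{4}\epsilon^{2}/\poly(k))$. Feeding this oracle into a coherent, $x$-parameterized version of sparse state preparation---generalizing the sparse-unitary construction of \cite{jordan2009efficient}---one peels off the amplitudes one at a time over $O(k^{2})$ rounds, each round costing $O(\log N)$ gates to address the $N$-dimensional register and $O(\log^{5/2}(k/\epsilon))$ gates for the arithmetic and the controlled rotation; tracking these contributions (and the further logarithmic overhead from index bookkeeping) yields the stated gate and query counts.

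The error estimate is routine; the technical heart is the $k\ge 2$ implementation, where two points require care. First, the natural truncated map $|x\>\mapsto\widetilde K_{0}|x\>|0\>+\sqrt{\epsilon}\,L|x\>|1\>$ is slightly \emph{expansive}---$\widetilde K_{0}^{\dagger}\widetilde K_{0}+\epsilon L^{\dagger}L=I+\tfrac{\epsilon^{2}}{4}(L^{\dagger}L)^{2}\succeq I$---so it cannot be a sub-block of a unitary and must be damped by $\alpha$, with the defect absorbed into a discarded ancilla branch. Second, the sparse isometry must be implemented \emph{coherently}: one computes the support and amplitudes of column $x$ into the workspace, spreads the amplitude, and then uncomputes the workspace without disturbing the prepared superposition, all while keeping the total error from polynomial truncation, finite-precision arithmetic, and approximate state preparation within $O(k^{4}\epsilon^{2})$. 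This coherent, parameterized sparse-isometry implementation, generalizing \cite{jordan2009efficient}, is where the bulk of the effort lies.
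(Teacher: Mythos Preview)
Your error analysis and the $k=1$ implementation are essentially what the paper does: the paper's $V_\epsilon|m\>=\sqrt{1-\epsilon|c_m|^2}\,|m\>|0\>+\sqrt{\epsilon}\,c_m|\nu(m)\>|1\>$ is exactly your Kraus pair $K_0=\sqrt{I-\epsilon L^\dagger L}$, $K_1=\sqrt{\epsilon}\,L$ in the case where $L^\dagger L$ is diagonal, and the bound $\|(1+\epsilon\mathcal L)-\mathcal E_\epsilon\|_\diamond=O(k^4\epsilon^2)$ follows by the same Taylor expansion.

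For $k\ge 2$, however, your route diverges from the paper's and has a real gap. The paper does \emph{not} try to implement the approximate isometry by direct sparse state preparation. Instead it writes down the $O(k^2)$-sparse, not-quite-isometric map
\[
V_\epsilon|m\>=|m\>|0\>-\tfrac{\epsilon}{2}\!\!\sum_{i\neq j}c_{m,i}c^*_{\nu_j^{-1}\nu_i(m),j}\,|\nu_j^{-1}\nu_i(m)\>|0\>+\sqrt{\epsilon}\sum_i c_{m,i}|\nu_i(m)\>|1\>,
\]
shows $\|V_\epsilon^\dagger V_\epsilon-I\|=O(k^4\epsilon^2)$, and then (Proposition~3) embeds $V_\epsilon$ into the $(k^2{+}1)$-sparse Hermitian matrix $H=\bigl(\begin{smallmatrix}0&V_\epsilon\\V_\epsilon^\dagger&0\end{smallmatrix}\bigr)$ and simulates $e^{-iH\pi/2}$ with the Berry--Childs--Kothari algorithm. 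The stated complexity $O\bigl(k^2[\log N+\log^{5/2}(k/\epsilon)]\frac{\log(k/\epsilon)}{\log\log(k/\epsilon)}\bigr)$ is precisely the cost of that Hamiltonian simulation call; the $\frac{\log(1/\varepsilon)}{\log\log(1/\varepsilon)}$ and $\log^{5/2}$ factors are its signature and would not fall out of the round-by-round amplitude peeling you sketch.

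The deeper problem is the step you yourself flag: ``uncomputes the workspace without disturbing the prepared superposition.'' After you have spread amplitude from $|x\>$ onto the $O(k^2)$ basis states $|\nu_j^{-1}\nu_i(x)\>$, the workspace holds data computed from $x$, but $x$ is no longer recoverable from the system register alone (the map $x\mapsto\nu_j^{-1}\nu_i(x)$ is many-to-one over the index choices), and there is no invariant cycle structure here as in \defn{d-sparse-Stine} to save you. The Jordan--Wocjan construction you cite handles sparse \emph{unitaries}, where this inversion is available; extending it to a merely column-sparse approximate isometry is exactly the difficulty, and you have not supplied a mechanism. The paper sidesteps it entirely: Hamiltonian simulation of $H$ is automatically unitary, and Proposition~3 shows that when $V^\dagger V\approx I$ one has $e^{-iH\pi/2}(|d\>\otimes\rho)\,e^{iH\pi/2}\approx V\rho V^\dagger$ in the relevant block, so no uncomputation is needed.
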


By concatenating a sequence of short-time evolutions, \lem{k-short-time} implies an efficient simulation of $\mathcal{L}$ for arbitrary times, as follows.

\begin{theorem}\label{thm:k-sparse-generator}
Suppose we are given a black box for $L$ that takes a row index or column index as input and outputs all nonzero entries in that row/column and their positions. If $L$ is 1-sparse, for any time $t>0$ we can simulate $\mathcal{L}$ within error $\epsilon$ using $O(\frac{t^{2}\log N}{\epsilon})$ 2-qubit gates and $O(\frac{t^{2}}{\epsilon})$ queries. If $L$ is $k$-sparse with $k\geq 2$, for any time $t>0$ we can simulate $\mathcal{L}$ within error $\epsilon$ using $O\big(\frac{k^{6}t^{2}}{\epsilon}[\log N+\log^{5/2}(k^{5}t/\epsilon)]\frac{\log(k^{5}t/\epsilon)}{\log\log(k^{5}t/\epsilon)}\big)$ 2-qubit gates and $O\big(\frac{k^{6}t^{2}}{\epsilon}\frac{\log(k^{5}t/\epsilon)}{\log\log(k^{5}t/\epsilon)}\big)$ queries.
\end{theorem}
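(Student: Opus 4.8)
The plan is to realize $e^{\mathcal{L}t}$ as an $r$-fold composition of the short-time channel $\mathcal{E}_{t/r}$ from \lem{k-short-time}, with $r$ chosen so that the accumulated error is at most $\epsilon$. First I would record a crude bound on the generator: since $L$ is row- and column-$k$-sparse with $\|L\|_{\max}\le1$, its operator norm satisfies $\|L\|\le\sqrt{\|L\|_{1}\|L\|_{\infty}}\le k$, so from \eqn{single-generator} one gets $\|\mathcal{L}\|_{\diamond}\le 2\|L\|^{2}=O(k^{2})$ (and $O(1)$ when $k=1$). Set $s:=t/r$, with $r$ large enough that $s\|\mathcal{L}\|_{\diamond}\le1$; this will hold automatically for the value of $r$ fixed below.

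The second step is a per-step error estimate. Taylor-expanding the exponential and using submultiplicativity of the diamond norm,
\begin{align}
\big\|e^{\mathcal{L}s}-(1+s\mathcal{L})\big\|_{\diamond}\le\sum_{j\ge2}\frac{(s\|\mathcal{L}\|_{\diamond})^{j}}{j!}\le\frac{e}{2}\,(s\|\mathcal{L}\|_{\diamond})^{2}=O(k^{4}s^{2})
\end{align}
whenever $s\|\mathcal{L}\|_{\diamond}\le1$. Combining this with the bound $\|(1+s\mathcal{L})-\mathcal{E}_{s}\|_{\diamond}=O(k^{4}s^{2})$ supplied by \lem{k-short-time} (this term is zero for $k=1$) and the triangle inequality gives $\|e^{\mathcal{L}s}-\mathcal{E}_{s}\|_{\diamond}=O(k^{4}s^{2})$, which for $k=1$ reduces to $O(s^{2})$.

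The third step propagates this through the composition. Since every $\mathcal{E}_{s}$ and every $e^{\mathcal{L}s}$ is a quantum operation, hence of diamond norm $1$, and $\|\cdot\|_{\diamond}$ is submultiplicative, a standard telescoping argument yields
\begin{align}
\big\|\mathcal{E}_{s}^{r}-e^{\mathcal{L}t}\big\|_{\diamond}=\big\|\mathcal{E}_{s}^{r}-(e^{\mathcal{L}s})^{r}\big\|_{\diamond}\le r\,\big\|\mathcal{E}_{s}-e^{\mathcal{L}s}\big\|_{\diamond}=O(k^{4}rs^{2})=O(k^{4}t^{2}/r),
\end{align}
where $\mathcal{E}_{s}^{r}$ denotes $\mathcal{E}_{s}$ composed $r$ times and $rs=t$. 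Taking $r=\Theta(k^{4}t^{2}/\epsilon)$ (and $r=\Theta(t^{2}/\epsilon)$ when $k=1$) — more precisely the maximum of this scale and $\Theta(\|\mathcal{L}\|_{\diamond}t)=\Theta(k^{2}t)$, so that $s\|\mathcal{L}\|_{\diamond}\le1$ holds — makes the error at most $\epsilon$. Finally I would count resources: with this $r$ we have $s=t/r=\Theta(\epsilon/(k^{4}t))$ in the governing regime, so $k/s=\Theta(k^{5}t/\epsilon)$; plugging this into the gate and query counts of \lem{k-short-time} and multiplying by $r$ reproduces exactly the claimed $O\!\big(\frac{k^{6}t^{2}}{\epsilon}[\log N+\log^{5/2}(k^{5}t/\epsilon)]\frac{\log(k^{5}t/\epsilon)}{\log\log(k^{5}t/\epsilon)}\big)$ gates and $O\!\big(\frac{k^{6}t^{2}}{\epsilon}\frac{\log(k^{5}t/\epsilon)}{\log\log(k^{5}t/\epsilon)}\big)$ queries, and for $k=1$ gives $O(t^{2}\log N/\epsilon)$ gates and $O(t^{2}/\epsilon)$ queries.

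The genuinely hard part is \lem{k-short-time} itself, which is already established; the only obstacle remaining in this theorem is bookkeeping. The key point to be careful about is that the short-time error in \lem{k-short-time} is \emph{quadratic} in $s$, so that the linear-in-$r$ accumulation from the telescoping leaves an overall $O(1/r)$ error; and the small-$t$ edge cases, where the naive choice of $r$ would drop below $\|\mathcal{L}\|_{\diamond}t$ and invalidate the Taylor estimate, must be absorbed by taking $r$ to be the maximum of the two scales, which does not change the stated asymptotics.
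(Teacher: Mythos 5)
Your proposal is correct and follows essentially the same route as the paper: concatenate the short-time channels of \lem{k-short-time}, use the quadratic per-step error, and accumulate linearly over $r=\Theta(k^4t^2/\epsilon)$ steps, with the same parameter choice $s=\Theta(\epsilon/(k^4t))$ leading to the $\log(k^5t/\epsilon)$ factors. The only organizational difference is that you compare $e^{\mathcal{L}s}$ to $\mathcal{E}_s$ directly via a Taylor-remainder bound and then telescope, whereas the paper routes through the intermediate $(1+\varepsilon\mathcal{L})^{t/\varepsilon}$ and an informal L'H{\^o}pital argument; your version is if anything tidier, and your explicit handling of the small-$t$ regime (taking $r$ at least $\Theta(k^2t)$ so that $s\|\mathcal{L}\|_{\diamond}\le 1$) is a detail the paper glosses over. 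One harmless slip: for $k=1$ the quantity $\|(1+s\mathcal{L})-\mathcal{E}_s\|_{\diamond}$ is not zero but $O(s^2)$ (the ``no error'' in \lem{k-short-time} refers only to the exact gate implementation of $V_s$), which does not change your final bound.
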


\begin{proof}
By \lem{k-short-time} we have
\begin{align}\label{eqn:k-sparse-generator-1}
\|(1+\varepsilon\mathcal{L})^{t/\varepsilon}-\mathcal{E}_{\varepsilon}^{t/\varepsilon}\|_{\diamond}=O(k^{4}t\varepsilon).
\end{align}

By L'H{\^o}pital's rule,
\begin{align}
\lim_{\varepsilon\rightarrow 0}\frac{e^{\mathcal{L}t}-(1+\varepsilon\mathcal{L})^{t/\varepsilon}}{\varepsilon}=\frac{\text{d}}{\text{d}\varepsilon}\big(e^{\mathcal{L}t}-(1+\varepsilon\mathcal{L})^{t/\varepsilon}\big)\Big|_{\varepsilon=0}=
\frac{1}{2}t\mathcal{L}^{2}e^{\mathcal{L}t}.
\end{align}
Therefore,
\begin{align}\label{eqn:k-sparse-generator-2}
\|e^{\mathcal{L}t}-(1+\varepsilon\mathcal{L})^{t/\varepsilon}\|_{\diamond}=\frac{1}{2}\varepsilon t\|\mathcal{L}^{2}e^{\mathcal{L}t}\|_{\diamond}+O(\varepsilon^{2}).
\end{align}

Because $L$ is $k$-sparse and $\|L\|_{\max}=O(1)$, one can show that $\|\mathcal{L}\|_{\diamond}=O(k^{2})$; since $e^{\mathcal{L}t}$ is a quantum operation, $\|e^{\mathcal{L}t}\|_{\diamond}=1$. Thus $\|\mathcal{L}^{2}e^{\mathcal{L}t}\|_{\diamond}\leq \|\mathcal{L}\|_{\diamond}^{2}=O(k^{4})$. Therefore, combining \eqn{k-sparse-generator-1} and \eqn{k-sparse-generator-2}, we have
\begin{align}\label{eqn:k-sparse-generator-3}
\|e^{\mathcal{L}t}-\mathcal{E}_{\varepsilon}^{t/\varepsilon}\|_{\diamond}=O(k^{4}t\varepsilon).
\end{align}
Therefore, if we take $\varepsilon=\epsilon/(tk^{4})$, we find
\begin{align}\label{eqn:k-sparse-generator-4}
\|e^{\mathcal{L}t}-\mathcal{E}_{\epsilon/(tk^{4})}^{t^{2}k^{4}/\epsilon}\|_{\diamond}=O(\epsilon).
\end{align}
If $k=1$, since by \lem{k-short-time} we can implement $\mathcal{E}_{\epsilon/t}$ using $O(\log N)$ 2-qubit gates and $O(1)$ queries, we can repeat it $t^{2}/\epsilon$ times to simulate $e^{\mathcal{L}t}$ within error $\epsilon$ using using $O(\frac{t^{2}\log N}{\epsilon})$ 2-qubit gates and $O(\frac{t^{2}}{\epsilon})$ queries. If $k\geq 2$, since by \lem{k-short-time} we can implement $\mathcal{E}_{\epsilon/(tk^{4})}$ using $O\big(k^{2}[\log N+\log^{5/2}(k^{5}t/\epsilon)]\frac{\log(k^{5}t/\epsilon)}{\log\log(k^{5}t/\epsilon)}\big)$ 2-qubit gates and $O(k^{2}\frac{\log(k^{5}t/\epsilon)}{\log\log(k^{5}t/\epsilon)})$ queries, we can repeat it $t^{2}k^{4}/\epsilon$ times to simulate $e^{\mathcal{L}t}$ within error $\epsilon$ using $O\big(\frac{k^{6}t^{2}}{\epsilon}[\log N+\log^{5/2}(k^{5}t/\epsilon)]\frac{\log(k^{5}t/\epsilon)}{\log\log(k^{5}t/\epsilon)}\big)$ 2-qubit gates and $O\big(\frac{k^{6}t^{2}}{\epsilon}\frac{\log(k^{5}t/\epsilon)}{\log\log(k^{5}t/\epsilon)}\big)$ queries.
\end{proof}

When $L$ is 1-sparse, it is straightforward to implement the quantum operation $\mathcal{E}_{\epsilon}$ of \lem{k-short-time}, as explained in \sec{1-sparse-generator}. More generally, we show how to implement $\mathcal{E}_{\epsilon}$ for any $k$-sparse $L$ in \sec{k-sparse-generator}.

A higher-order version of this construction might lead to improved performance. We leave this as an open problem.

\subsection{1-sparse Lindblad operator}
\label{sec:1-sparse-generator}

\begin{proof}[Proof of \lem{k-short-time}]
If $L$ is 1-sparse, there exists a permutation $\nu\colon\range{N}\rightarrow\range{N}$ such that
\begin{align}
  \forall\,j\in\range{N} ~
  \exists\,c_j \in \C \quad \text{such that} \quad
L|j\>=c_{j}|\nu(j)\>.
\end{align}

For $0<\epsilon\leq 1/\|L\|_{\max}^{2}$, define the isometry
\begin{align}\label{eqn:1-defn-V}
V_{\epsilon}|m\>:=\sqrt{1-\epsilon |c_{m}|^{2}}|m\>|0\>+\sqrt{\epsilon}c_{m}|\nu(m)\>|1\>\quad\forall\,m\in\range{N}.
\end{align}
We show that the quantum operation $\mathcal{E}_{\epsilon}(\rho):=\Tr_{\anc}[V_{\epsilon}\rho V_{\epsilon}^{\dagger}]$ satisfies \lem{k-short-time}. To do this, by subadditivity of the trace norm it suffices to show that for an arbitrary pure state $|\Phi\>=\sum_{m,m'=0}^{N-1}b_{mm'}|m\>|m'\>$ where $\sum_{m,m'=0}^{N-1}|b_{mm'}|^{2}=1$,
\begin{align}\label{eqn:1-main-bound-2}
\big\|((1+\epsilon\mathcal{L})\otimes\mathcal{I}_{N})(|\Phi\>\<\Phi|)-\Tr_{\anc}[(V_{\epsilon}\otimes I_{N})|\Phi\>\<\Phi|(V_{\epsilon}^{\dagger}\otimes I_{N})]\big\|_{1}=O(\epsilon^{2}).
\end{align}

We have $L^{\dagger}|\nu(j)\>=c_{j}^{*}|j\>\ \forall\, j\in\range{N}$. Thus for any $m,n\in\range{N}$,
\begin{align}\label{eqn:1-short}
(1+\epsilon\mathcal{L})(|m\>\<n|)=|m\>\<n|+\epsilon\big(c_{m}c_{n}^{*}|\nu(m)\>\<\nu(n)|-\frac{1}{2}(|c_{m}|^{2}+|c_{n}|^{2})|m\>\<n|\big).
\end{align}
For convenience, denote $l_{m}:=1-\sqrt{1-\epsilon |c_{m}|^{2}}$ for each $m$. By \eqn{1-defn-V} and \eqn{1-short}, we have
\begin{align}
& (1+\epsilon\mathcal{L})(|m\>\<n|)-\Tr_{\anc}[V_{\epsilon}|m\>\<n|V_{\epsilon}^{\dagger}] \nonumber \\
&\quad= \big(1-\frac{\epsilon}{2}(|c_{m}|^{2}+|c_{n}|^{2})-\sqrt{(1-\epsilon |c_{m}|^{2})(1-\epsilon |c_{n}|^{2})}\big)|m\>\<n| \\
&\quad= \frac{1}{2}(l_{m}-l_{n})^{2}|m\>\<n|.
\end{align}
Therefore,
\begin{align}
&\ \big\|((1+\epsilon\mathcal{L})\otimes\mathcal{I}_{N})(|\Phi\>\<\Phi|)-\Tr_{\anc}[(V_{\epsilon}\otimes I_{N})|\Phi\>\<\Phi|(V_{\epsilon}^{\dagger}\otimes I_{N})]\big\|_{1}  \nonumber \\
&\quad=\ \Big\|\sum_{m,m',n,n'}b_{mm'}b_{nn'}^{*}\big((1+\epsilon\mathcal{L})(|m\>\<n|)-\Tr_{\anc}[V_{\epsilon}|m\>\<n|V_{\epsilon}^{\dagger}]\big)\otimes |m'\>\<n'|\Big\|_{1} \\
&\quad= \frac{1}{2}\Big\|\sum_{n,n'}\big(\sum_{m,m'}b_{mm'}b_{nn'}^{*}(l_{m}-l_{n})^{2}|m\>\<n|\big)\otimes |m'\>\<n'|\Big\|_{1} \\
&\quad\leq \Big\|\sum_{n,n'}\sum_{m,m'}b_{mm'}b_{nn'}^{*}l_{m}^{2}|m\>\<n|\otimes |m'\>\<n'|\Big\|_{1}+\Big\|\sum_{n,n'}\sum_{m,m'}b_{mm'}b_{nn'}^{*}l_{m}l_{n}|m\>\<n|\otimes |m'\>\<n'|\Big\|_{1} \\
&\quad= \sqrt{\sum_{m,m'}|b_{mm'}|^{2}l_{m}^{4}}\sqrt{\sum_{n,n'}|b_{nn'}|^{2}}+\sum_{m,m'}|b_{mm'}|^{2}l_{m}^{2} \\
&\quad= O(\epsilon^{2}),
\end{align}
where the last equality follows from $l_{m}=O(\epsilon)$ and $\sum_{m,m'=0}^{N-1}|b_{mm'}|^{2}=1$.

The isometry $V_{\epsilon}$ can be implemented by the following procedure:
\begin{align}
|m\>&\mapsto |m\>(\sqrt{1-\epsilon |c_{m}|^{2}}|0\>+\sqrt{\epsilon}c_{m}|1\>) \\
&\xmapsto{\text{control-}\nu} \sqrt{1-\epsilon |c_{m}|^{2}}|m\>|0\>+\sqrt{\epsilon}c_{m}|\nu(m)\>|1\>.
\end{align}
Here in the first step we compute $c_{m}$, prepare the ancilla qubit, and then uncompute $c_{m}$. To compute $\nu$ in place for the second step, we compute $|m\>\mapsto|m\>|\nu(m)\>$, and then perform the inverse of the map $|\nu(m)\>\mapsto|\nu(m)\>|\nu(\nu(m))\>=|\nu(m)\>|m\>$ to erase the $|m\>$ register. This procedure uses $O(\log N)$ 2-qubit gates and $O(1)$ queries, with no error.
\end{proof}

\subsection{$k$-sparse Lindblad operator}
\label{sec:k-sparse-generator}

If $L$ is $k$-sparse, there exist permutations
\begin{align}
\nu_{1},\nu_{2},\ldots,\nu_{k}\colon\range{N}\rightarrow\range{N}
\end{align}
such that
\begin{align}
L|j\>=\sum_{i=1}^{k}c_{j,i}|\nu_{i}(j)\>\quad\forall\,j\in\range{N},
\end{align}
where
\begin{align}
\nu_{i}(x)\neq\nu_{j}(x)\quad \forall\,x\in\range{N},\forall\,i\neq j.
\end{align}

For $0<\epsilon\leq 1/\|L\|_{\max}^{2}$, define $V_{\epsilon}\colon\C^{N}\rightarrow \C^{N}\otimes\C^{2}$ such that
\begin{align}\label{eqn:k-defn-V}
V_{\epsilon}|m\>:=\sum_{i=1}^{k}\sqrt{\epsilon}c_{m,i}|\nu_{i}(m)\>|1\>+|m\>|0\>-\epsilon\sum_{1\leq i\neq j\leq k}\frac{1}{2}c_{m,i}\cdot c_{\nu_{j}^{-1}(\nu_{i}(m)),j}^{*}|\nu_{j}^{-1}(\nu_{i}(m))\>|0\>
\end{align}
and $\mathcal{E}'_{\epsilon}(\rho):=\Tr_{\anc}[V_{\epsilon}\rho V_{\epsilon}^{\dagger}]$.

The construction of the quantum operation $\mathcal{E}_{\epsilon}$ in \lem{k-short-time} is based on two observations:
\begin{lemma}\label{lem:E-first-order-approximation}
$\mathcal{E}'_{\epsilon}$ approximates $e^{\mathcal{L}t}$ up to first order:
\begin{align}\label{eqn:k-isometry-bound}
\|(1+\epsilon\mathcal{L})-\mathcal{E}'_{\epsilon}\|_{\diamond}=O(k^{4}\epsilon^{2}).
\end{align}
\end{lemma}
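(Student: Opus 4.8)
The plan is to rewrite $\mathcal{E}'_{\epsilon}$ in closed operator form and observe that it agrees with $1+\epsilon\mathcal{L}$ up to a single conjugation term that is manifestly $O(k^{4}\epsilon^{2})$ in diamond norm. First I would read off the two ``Kraus-like'' pieces of $V_{\epsilon}$: writing $V_{\epsilon}|\psi\> = (A|\psi\>)|0\> + (B|\psi\>)|1\>$, tracing out the ancilla qubit gives $\mathcal{E}'_{\epsilon}(X) = AXA^{\dagger} + BXB^{\dagger}$. Here $B = \sqrt{\epsilon}\,L$ is read off directly from the $|1\>$-part of \eqn{k-defn-V}, so $BXB^{\dagger} = \epsilon\,LXL^{\dagger}$ is exactly the jump term of $\epsilon\mathcal{L}$. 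For $A$ --- the operator whose $m$-th column is the $|0\>$-component of $V_{\epsilon}|m\>$ --- I would use the identity
\begin{align}
L^{\dagger}L|m\> = \sum_{i,j=1}^{k} c_{m,i}\,\overline{c_{\nu_{j}^{-1}(\nu_{i}(m)),j}}\;|\nu_{j}^{-1}(\nu_{i}(m))\>,
\end{align}
which follows from $\<p|L^{\dagger}L|m\> = \sum_{i,j:\,\nu_{j}(p)=\nu_{i}(m)}\overline{c_{p,j}}\,c_{m,i}$, noting that the $i=j$ contributions collapse to $\big(\sum_{i}|c_{m,i}|^{2}\big)|m\>$ because $\nu_{i}(x)\neq\nu_{j}(x)$ for $i\neq j$. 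Matching this against the $O(\epsilon)$ correction terms of \eqn{k-defn-V} (together with the normalization of the amplitude on $|m\>$) shows $A = I - \tfrac{\epsilon}{2}L^{\dagger}L$ up to an $O(\epsilon^{2})$, Hermitian correction.

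Since $L^{\dagger}L$ is Hermitian, $AXA^{\dagger} = (I-\tfrac{\epsilon}{2}L^{\dagger}L)X(I-\tfrac{\epsilon}{2}L^{\dagger}L) + O(\epsilon^{2})$, and expanding the product,
\begin{align}
\mathcal{E}'_{\epsilon}(X) = X - \tfrac{\epsilon}{2}\big(L^{\dagger}LX + XL^{\dagger}L\big) + \epsilon\,LXL^{\dagger} + \tfrac{\epsilon^{2}}{4}\,L^{\dagger}L\,X\,L^{\dagger}L + O(\epsilon^{2}).
\end{align}
The first three groups of terms are exactly $(1+\epsilon\mathcal{L})(X)$, so
\begin{align}
\big((1+\epsilon\mathcal{L}) - \mathcal{E}'_{\epsilon}\big)(X) = -\tfrac{\epsilon^{2}}{4}\,L^{\dagger}L\,X\,L^{\dagger}L + O(\epsilon^{2});
\end{align}
that is, to leading order the difference is simply the conjugation map $X \mapsto -\tfrac{\epsilon^{2}}{4}(L^{\dagger}L)X(L^{\dagger}L)$.

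To finish, I would use that conjugation maps behave well under the diamond norm: for any operator $M$, $\|\,X\mapsto MXM^{\dagger}\,\|_{\diamond}\leq\|M\|^{2}$, since $\|(M\otimes I)Y(M^{\dagger}\otimes I)\|_{1}\leq\|M\|^{2}\|Y\|_{1}$ and this survives tensoring with $\mathcal{I}_{N}$. It then remains to bound $\|L^{\dagger}L\| = \|L\|^{2}$; since $L$ is row- and column-$k$-sparse with $\|L\|_{\max}\leq 1$, the Schur test gives $\|L\|\leq k$, so $\|L^{\dagger}L\|\leq k^{2}$. Therefore $\|(1+\epsilon\mathcal{L})-\mathcal{E}'_{\epsilon}\|_{\diamond}\leq\tfrac{1}{4}k^{4}\epsilon^{2}+O(\epsilon^{2}) = O(k^{4}\epsilon^{2})$, where the remaining $O(\epsilon^{2})$ contributions (the correction to $A$ and cross terms of the form $\epsilon^{2}\<\delta_{m}|\delta_{n}\>$ coming from the off-diagonal part of $V_\epsilon$) obey the same estimate: each such error vector is a sum of $O(k^{2})$ basis states with amplitudes $O(\|L\|_{\max}^{2}) = O(1)$, equivalently has norm at most $\|L^{\dagger}L\| = O(k^{2})$.

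The step I expect to be the main obstacle is verifying the closed form of the $|0\>$-branch, i.e.\ that the correction coefficients appearing in \eqn{k-defn-V} are exactly the entries of $-\tfrac{\epsilon}{2}L^{\dagger}L|m\>$ modulo the part along $|m\>$: this requires careful bookkeeping over the pairs $(i,j)$ and the composite permutations $\nu_{j}^{-1}\circ\nu_{i}$, including checking that possible coincidences among the indices $\nu_{j}^{-1}(\nu_{i}(m))$ do not spoil the first-order match. Everything downstream is a one-line algebraic identity followed by the elementary operator-norm bound; alternatively one could run the same computation at the level of matrix units $|m\>\<n|$ and pure inputs $|\Phi\> = \sum_{m,m'}b_{mm'}|m\>|m'\>$ exactly as in the $k=1$ case treated above, but the operator form makes both the cancellation and the $k^{4}$ dependence transparent.
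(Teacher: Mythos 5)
Your proposal is correct in substance and arrives at exactly the residual the paper computes, but it packages the argument at the operator level rather than on matrix units: where the paper expands $(1+\epsilon\mathcal{L})(|m\>\<n|)$ and $\Tr_{\anc}[V_{\epsilon}|m\>\<n|V_{\epsilon}^{\dagger}]$ term by term and then bounds the leftover $\epsilon^{2}$ sum on a purification $|\Phi\>$ by Cauchy--Schwarz over the index tuples $(i,j,i',j')$ (yielding $\frac{\epsilon^{2}}{4}k^{2}(k-1)^{2}$), you identify the two branches of $V_{\epsilon}$ as Kraus operators $A\approx I-\frac{\epsilon}{2}L^{\dagger}L$ and $B=\sqrt{\epsilon}\,L$, observe that the difference collapses to the single conjugation $X\mapsto-\frac{\epsilon^{2}}{4}(L^{\dagger}L)X(L^{\dagger}L)$ --- which is precisely the paper's \eqn{isometry-single-state} --- and bound its diamond norm by $\|L^{\dagger}L\|^{2}\leq\|L\|^{4}\leq k^{4}$ via the Schur test. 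Both routes give $O(k^{4}\epsilon^{2})$; yours makes the order-$\epsilon$ cancellation and the source of the $k^{4}$ more transparent.

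The step you flag as the main obstacle is indeed the crux, and your proposed resolution (``normalization of the amplitude on $|m\>$'') is not the right mechanism. In \eqn{k-defn-V} as printed, the amplitude of $|m\>|0\>$ is exactly $1$ and the correction sum runs over $i\neq j$ only, so the $|0\>$-branch is $\big(I-\frac{\epsilon}{2}(L^{\dagger}L-D)\big)|m\>$, where $D$ is the diagonal part of $L^{\dagger}L$, i.e.\ $D|m\>=\sum_{i}|c_{m,i}|^{2}|m\>$ --- exactly the $i=j$ terms of your identity for $L^{\dagger}L|m\>$. Under that literal reading the difference $(1+\epsilon\mathcal{L})-\mathcal{E}'_{\epsilon}$ acquires the order-$\epsilon$ term $-\frac{\epsilon}{2}(DX+XD)$ and the lemma would fail; the cancellation at first order (and likewise the one claimed in \lem{square-error-spectral-norm}) goes through only if the $i=j$ terms are included, so that the $|0\>$-branch is exactly $(I-\frac{\epsilon}{2}L^{\dagger}L)|m\>$. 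This is evidently what is intended --- the paper's own display for $(1+\epsilon\mathcal{L})(|m\>\<n|)$ sums over all $i,j$ while its display for the traced-out channel keeps $i\neq j$, and the stated difference silently drops the mismatch --- so your derivation is the right one; you should simply state explicitly that you take $A=I-\frac{\epsilon}{2}L^{\dagger}L$ (equivalently, read the correction in \eqn{k-defn-V} as $-\frac{\epsilon}{2}L^{\dagger}L|m\>$ in full) rather than appeal to a normalization that is not present in the definition.
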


\begin{lemma}\label{lem:square-error-spectral-norm}
$V_{\epsilon}^{\dagger}V_{\epsilon}$ approximates the identity up to first order in the spectral norm:
\begin{align}
\|V_{\epsilon}^{\dagger}V_{\epsilon}-I\|=O(k^{4}\epsilon^{2}).
\end{align}
\end{lemma}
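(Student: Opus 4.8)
The plan is to expand $V_{\epsilon}$ in powers of $\sqrt{\epsilon}$ and to exploit the fact that its two ancilla branches are orthogonal, so that every cross term between them drops out of $V_{\epsilon}^{\dagger}V_{\epsilon}$. Concretely, I would write $V_{\epsilon}=A_{0}+\sqrt{\epsilon}\,A_{1}+\epsilon\,A_{2}$, where $A_{0}|m\>=|m\>|0\>$, $A_{1}|m\>=\sum_{i=1}^{k}c_{m,i}|\nu_{i}(m)\>|1\>$, and $A_{2}|m\>$ is the order-$\epsilon$ correction term of \eqn{k-defn-V}, which lands entirely in the $|0\>$ branch. Since $\mathrm{range}(A_{0}),\mathrm{range}(A_{2})\subseteq\C^{N}\otimes\spn\{|0\>\}$ while $\mathrm{range}(A_{1})\subseteq\C^{N}\otimes\spn\{|1\>\}$, one has $A_{0}^{\dagger}A_{1}=A_{1}^{\dagger}A_{0}=A_{1}^{\dagger}A_{2}=A_{2}^{\dagger}A_{1}=0$, and therefore
\begin{align}
V_{\epsilon}^{\dagger}V_{\epsilon}=A_{0}^{\dagger}A_{0}+\epsilon\big(A_{0}^{\dagger}A_{2}+A_{2}^{\dagger}A_{0}+A_{1}^{\dagger}A_{1}\big)+\epsilon^{2}\,A_{2}^{\dagger}A_{2},
\end{align}
with no surviving $O(\sqrt{\epsilon})$ or $O(\epsilon^{3/2})$ term. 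Trivially $A_{0}^{\dagger}A_{0}=I$.

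Next I would check that the $O(\epsilon)$ bracket vanishes identically. Writing $A_{1}=T_{1}L$ and $A_{2}=T_{0}M$, where $T_{b}|v\>:=|v\>|b\>$ is an isometry $\C^{N}\to\C^{N}\otimes\C^{2}$ and $M$ is the operator on $\C^{N}$ implemented by the correction branch, one gets $A_{1}^{\dagger}A_{1}=L^{\dagger}L$ and $A_{0}^{\dagger}A_{2}+A_{2}^{\dagger}A_{0}=M+M^{\dagger}$. The correction branch of \eqn{k-defn-V} is engineered precisely so that $M=-\tfrac12 L^{\dagger}L$; I would verify this by reusing the identity $L^{\dagger}L|m\>=\sum_{i,j}c_{m,i}\,c_{\nu_{j}^{-1}(\nu_{i}(m)),j}^{*}\,|\nu_{j}^{-1}(\nu_{i}(m))\>$ already obtained in the proof of \lem{E-first-order-approximation}. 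As $L^{\dagger}L$ is Hermitian, $M+M^{\dagger}=-L^{\dagger}L$, so the bracket is $-L^{\dagger}L+L^{\dagger}L=0$ and hence $V_{\epsilon}^{\dagger}V_{\epsilon}-I=\epsilon^{2}A_{2}^{\dagger}A_{2}$.

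It then remains to estimate $\|A_{2}^{\dagger}A_{2}\|=\|A_{2}\|^{2}=\tfrac14\|L^{\dagger}L\|^{2}=\tfrac14\|L\|^{4}$. Because $L$ is simultaneously row $k$-sparse and column $k$-sparse with $\|L\|_{\max}\le 1$, its largest absolute row sum and column sum are each at most $k$, so $\|L\|\le k$ — the same elementary bound that underlies the estimate $\|\mathcal{L}\|_{\diamond}=O(k^{2})$ used in the proof of \thm{k-sparse-generator}. Consequently
\begin{align}
\|V_{\epsilon}^{\dagger}V_{\epsilon}-I\|\le\tfrac14 k^{4}\epsilon^{2}=O(k^{4}\epsilon^{2}),
\end{align}
as claimed.

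The only point requiring real care is the $O(\epsilon)$ cancellation, i.e.\ the identity $A_{0}^{\dagger}A_{2}+A_{2}^{\dagger}A_{0}+A_{1}^{\dagger}A_{1}=0$; this rests entirely on confirming that the correction branch of \eqn{k-defn-V} implements exactly $-\tfrac12 L^{\dagger}L$ on the $|0\>$ sector, which is precisely the first-order bookkeeping carried out for \lem{E-first-order-approximation} and can be quoted rather than repeated. Everything after that is a one- or two-line operator-norm estimate, the orthogonality of the ancilla branches doing most of the work by killing the potentially $O(\sqrt\epsilon)$-sized terms outright.
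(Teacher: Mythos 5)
Your proof is correct and follows essentially the same route as the paper's: expand $V_{\epsilon}^{\dagger}V_{\epsilon}$, note that the $O(\sqrt{\epsilon})$ cross terms vanish by ancilla orthogonality and that the $O(\epsilon)$ terms cancel, and bound the surviving $O(\epsilon^{2})$ piece. Your operator-level packaging --- $V_{\epsilon}^{\dagger}V_{\epsilon}-I=\frac{\epsilon^{2}}{4}(L^{\dagger}L)^{2}$ together with $\|L\|\le k$ from the row/column-sum bound --- is a cleaner rendering of the paper's entry-wise Cauchy--Schwarz estimate and yields the same constant $\frac{1}{4}k^{4}\epsilon^{2}$. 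One caveat deserves attention: your key identity $M=-\frac{1}{2}L^{\dagger}L$ requires the correction sum in \eqn{k-defn-V} to run over \emph{all} pairs $(i,j)$, whereas as printed it is restricted to $i\neq j$. With the printed restriction one gets $M=-\frac{1}{2}(L^{\dagger}L-D)$ with $D=\diag\big(\sum_{i}|c_{m,i}|^{2}\big)$, so the $O(\epsilon)$ bracket equals $\epsilon D\neq 0$ and the claimed $O(k^{4}\epsilon^{2})$ bound would fail. The paper's own cancellation argument (the $m=n$ case of its three constrained sums) silently assumes the unrestricted sum as well, so this is evidently a typo in the definition rather than a flaw in your reasoning --- but since your argument hinges on exactly this identification, it should be verified rather than asserted.
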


The proofs of \lem{E-first-order-approximation} and \lem{square-error-spectral-norm} are shown in \sec{k-isometry-part-1} and \sec{k-isometry-part-2}, respectively.

Furthermore, $\mathcal{E}'_{\epsilon}$ can be approximately implemented by a sparse unitary transformation, which follows from a more general proposition:

\begin{proposition}\label{prop:sparse-isometry-implementation}
Suppose $d$ is a fixed positive integer, and let $V\colon\C^{N}\rightarrow \C^{N}\otimes\C^{d}$ be a matrix that is both row and column sparse. Furthermore, suppose $V$ is $\varepsilon$-close to an isometry, in the sense that
\begin{align}
\|V^{\dagger}V-I\| \le \varepsilon.
\end{align}
Define the map $\mathcal{E}'(\rho):=\Tr_{\anc}[V\rho V^{\dagger}]$. Then we can efficiently implement a quantum operation $\mathcal{E}$ such that
\begin{align}\label{eqn:k-unitary-bound}
\|\mathcal{E}-\mathcal{E}'\|_{\diamond}=O(\varepsilon).
\end{align}
\end{proposition}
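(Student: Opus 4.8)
The plan is to replace the (generally trace-decreasing) map $\mathcal{E}'$ by a genuine quantum operation $\mathcal{E}$ that is $O(\varepsilon)$-close to it in diamond norm, and to obtain an explicit circuit for $\mathcal{E}$ by running a variant of the Jordan--Wocjan sparse-unitary algorithm \cite{jordan2009efficient} directly on the sparse near-isometry $V$. The point is that the failure of $V$ to be exactly an isometry can be absorbed into a flagged ``garbage'' branch that is traced out, so one never has to implement any dense object.

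First I would normalize so that $\|V\|\le1$: since $\|V^{\dagger}V-I\|\le\varepsilon$ gives $V^{\dagger}V\preceq(1+\varepsilon)I$, replacing $V$ by $V/(1+\varepsilon)$ yields $V^{\dagger}V\preceq I$ with $\|I-V^{\dagger}V\|=O(\varepsilon)$, while changing $\mathcal{E}'$ only by the scalar $(1+\varepsilon)^{-2}$, i.e.\ by $O(\varepsilon)$ in diamond norm. Conceptually the remaining defect can be cured with one extra qubit: the map $\tilde V\colon|\psi\rangle\mapsto(V|\psi\rangle)|0\rangle+\big(\sqrt{I-V^{\dagger}V}\,|\psi\rangle\otimes|0\rangle_{\C^{d}}\big)|1\rangle$ is an exact isometry, and $\Tr_{\anc}[\tilde V\rho\tilde V^{\dagger}]$ (tracing out the $\C^{d}$ factor and the new qubit) is a channel at diamond distance $O(\varepsilon)$ from $\mathcal{E}'$: the cross terms sit in the $|0\rangle\langle1|$ sector of the new qubit and vanish under the partial trace, while the leftover defect term has trace norm $\Tr[(I-V^{\dagger}V)\rho]\le\varepsilon$. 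In the implementation, however, I would not implement $\sqrt{I-V^{\dagger}V}$ (which need not be sparse); in the circuit its role is played by a flagged failure branch of a subnormalized state preparation.

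Next I would implement $V$ up to garbage by a two-stage circuit, generalizing \cite{jordan2009efficient} and mirroring \thm{sparse-Stinespring-isometry}. Forward stage: controlled on the input label $|m\rangle$ and using the column oracle for $V$, prepare an ancilla register holding the superposition of the $O(1)$ nonzero amplitudes of the column $V|m\rangle$, together with an extra ``success'' qubit carrying the amplitude $\sqrt{1-\|V|m\rangle\|^{2}}=O(\sqrt\varepsilon)$ of the failure branch; then route the $O(1)$ image labels to the correct output basis states of $\C^{N}\otimes\C^{d}$. Uncompute stage: erase the input register using the row oracle (each output basis state has $O(1)$ preimages) together with the near-orthonormality of the columns of $V$, organized via an $O(1)$-edge-coloring of the bipartite graph of nonzero entries and processed color class by color class with $O(\log N)$-gate controlled rotations --- exactly the orthogonality-driven ancilla uncomputation of \thm{sparse-Stinespring-isometry}, carried out in the regime where orthogonality holds only up to $O(\varepsilon)$. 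Let $U'$ be the resulting exactly unitary circuit; since the row/column sparsity and $d$ are constant it uses $\poly(\log N,\log(1/\varepsilon))$ two-qubit gates and $\poly\log(1/\varepsilon)$ oracle queries. Setting $\mathcal{E}(\rho):=\Tr[U'(\rho\otimes|0\cdots0\rangle\langle0\cdots0|)U'^{\dagger}]$, tracing out every register except the output copy of $\C^{N}$, gives a CPTP map by construction; I would prove $\|\mathcal{E}-\mathcal{E}'\|_{\diamond}=O(\varepsilon)$ by writing $U'|m\rangle|0\cdots0\rangle=|0\cdots0\rangle_{\text{flags}}(V|m\rangle)_{\text{out}}+|\mathrm{err}_{m}\rangle$, where $|\mathrm{err}_{m}\rangle$ is the union of the $O(\sqrt\varepsilon)$-weight failure branch (carrying a nonzero success flag) and the $O(\varepsilon)$-norm uncomputation residual (carrying a nonzero value in the input register); both pieces are distinguished from the good branch by a register that is traced out, so their cross terms with the good branch drop and only their squared norms --- both $O(\varepsilon)$ --- enter the trace-norm bound. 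Together with the $O(\varepsilon)$ rescaling cost this gives the claim.

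I expect the main obstacle to be the uncompute stage and its error accounting. Reversibly erasing the input register is possible only because the columns of $V$ are nearly orthonormal --- which is precisely why one works directly with $V$ rather than with the nearby exact isometry $V(V^{\dagger}V)^{-1/2}$, which is in general not sparse and so cannot be fed to the sparse-unitary construction. The delicate points are: showing that the color-class rotations compose correctly despite the overlapping sparsity pattern (the $U_{\leftarrow}$ bookkeeping of \thm{sparse-Stinespring-isometry}, but now with inexact orthogonality); controlling the resulting residual so that it has \emph{norm} $O(\varepsilon)$ rather than $O(\sqrt\varepsilon)$ (this follows from $\|(V^{\dagger}V-I)e_{m}\|\le\|V^{\dagger}V-I\|\le\varepsilon$); and verifying that every garbage contribution is quarantined in a register that is eventually traced out, so that the final diamond-norm error is linear, not square-root, in $\varepsilon$.
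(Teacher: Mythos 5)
Your strategy is genuinely different from the paper's, and it is worth seeing how the paper sidesteps everything you flag as delicate: it never builds a state-preparation/uncomputation circuit for $V$ at all. Instead it embeds $V$ into the sparse Hamiltonian $H=\bigl(\begin{smallmatrix}0 & V\\ V^{\dagger} & 0\end{smallmatrix}\bigr)$ on $\C^{(d+1)N}$, defines $\mathcal{E}(\rho):=\Tr_{\anc}[e^{-iH\pi/2}(|d\>\<d|\otimes\rho)e^{iH\pi/2}]$, and invokes off-the-shelf sparse Hamiltonian simulation. The near-isometry hypothesis becomes the operator statement $\|H^{3}-H\|=O(\varepsilon)$, whence $\|e^{-iH\pi/2}-(1-iH-H^{2})\|=O(\varepsilon)$; reading off the block form of $1-iH-H^{2}$ shows that this unitary carries $|d\>|m\>$ to $-iV|m\>$ plus a remainder of norm $\|(I-V^{\dagger}V)|m\>\|=O(\varepsilon)$, and a Schmidt-decomposition argument lifts this to the diamond-norm bound. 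The linear-in-$\varepsilon$ error thus comes for free from the operator norm hypothesis, with no garbage bookkeeping.

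By contrast, the load-bearing step of your construction --- the uncompute stage --- is asserted rather than proved, and it is exactly where the difficulty lives. Concretely: (i) the uncomputation in \thm{sparse-Stinespring-isometry} relies on the \emph{invariant} sparsity structure of \defn{d-sparse-Stine} (a single permutation $\nu$ whose powers generate each $S_{x}$) and on \emph{exact} orthogonality relations such as \eqn{trace-preserving}; a general row- and column-sparse $V$ has neither, so you would need a new construction combining the Jordan--Wocjan edge-coloring of \cite{jordan2009efficient} with an approximate-orthogonality uncomputation, which you sketch but do not supply. (ii) Your error accounting hinges on the claim that the part of the uncomputation residual lying in the unflagged sector has norm $O(\varepsilon)$ rather than $O(\sqrt{\varepsilon})$; otherwise its cross terms with the good branch survive the partial trace and you only obtain $O(\sqrt{\varepsilon})$ in diamond norm. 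That this residual is governed by the off-diagonal entries of $V^{\dagger}V-I$ (each $O(\varepsilon)$) rather than by the diagonal deficit (whose amplitude is only $O(\sqrt{\varepsilon})$) depends on circuit details you have not specified. Until that is pinned down the argument is incomplete; the Hamiltonian-embedding route closes precisely this hole.
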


\prop{sparse-isometry-implementation} shows that we can efficiently implement sparse approximate isometries, generalizing the implementation of sparse unitaries \cite{jordan2009efficient}.

\begin{proof}[Proof of \prop{sparse-isometry-implementation}]
Define
\begin{align}\label{eqn:k-defn-H}
H:=\left(
      \begin{array}{cc}
        0 & V \\
        V^{\dagger} & 0
      \end{array} \right)
\end{align}
to be a Hamiltonian in $\C^{(d+1)N}$. Consider the quantum operation
\begin{align}
\mathcal{E}(\rho):=\Tr_{\anc}[e^{-iH\frac{\pi}{2}}(|d\>\<d|\otimes\rho) e^{iH\frac{\pi}{2}}],
\end{align}
where $\mathcal{E}$ uses ancilla of dimension $d+1$ and $|d\>$ is the $(d+1)$st state of the computational basis of the ancilla system. $\mathcal{E}$ can be efficiently implemented because $H$ is a sparse Hamiltonian, which can be efficiently simulated. We will show that $\mathcal{E}$ satisfies \eqn{k-unitary-bound}.

By subadditivity of the trace norm, it suffices to prove \eqn{k-unitary-bound} for pure states, i.e., for any state $|\Phi\>=\sum_{m,m'=0}^{N-1}b_{mm'}|m\>|m'\>$ where $\sum_{m,m'=0}^{N-1}|b_{mm'}|^{2}=1$,
\begin{align}
\big\|\Tr_{\anc}\big[(e^{-iH\frac{\pi}{2}}\otimes I_{N})(|d\>\<d|\otimes|\Phi\>\<\Phi|)(e^{iH\frac{\pi}{2}}\otimes I_{N})-(V\otimes I_{N})|\Phi\>\<\Phi|(V^{\dagger}\otimes I_{N})\big]\big\|_{1}=O(\varepsilon).
\end{align}
Furthermore, since $(e^{-iH\frac{\pi}{2}}\otimes I_{N})(|d\>\<d|\otimes|\Phi\>\<\Phi|)(e^{iH\frac{\pi}{2}}\otimes I_{N})$ and $(V\otimes I_{N})|\Phi\>\<\Phi|(V^{\dagger}\otimes I_{N})$ are both rank-1 matrices, $(e^{-iH\frac{\pi}{2}}\otimes I_{N})(|d\>\<d|\otimes|\Phi\>\<\Phi|)(e^{iH\frac{\pi}{2}}\otimes I_{N})-(V\otimes I_{N})|\Phi\>\<\Phi|(V^{\dagger}\otimes I_{N})$ can be at most rank 2. Therefore, its trace norm must be upper-bounded by twice its spectral norm, and thus it suffices to prove
\begin{align}\label{eqn:k-unitary-spectral-bound-with-trace}
\big\|\Tr_{\anc}\big[(e^{-iH\frac{\pi}{2}}\otimes I_{N})(|d\>\<d|\otimes|\Phi\>\<\Phi|)(e^{iH\frac{\pi}{2}}\otimes I_{N})-(V\otimes I_{N})|\Phi\>\<\Phi|(V^{\dagger}\otimes I_{N})\big]\big\|=O(\varepsilon).
\end{align}
In turn, this shows that it suffices to prove
\begin{align}\label{eqn:k-unitary-spectral-bound}
\big\|(e^{-iH\frac{\pi}{2}}\otimes I_{N})(|d\>\<d|\otimes|\Phi\>\<\Phi|)(e^{iH\frac{\pi}{2}}\otimes I_{N})-(\tilde{V}\otimes I_{N})|\Phi\>\<\Phi|(\tilde{V}^{\dagger}\otimes I_{N})\big\|=O(\varepsilon),
\end{align}
where $\tilde{V}\colon\C^{N}\rightarrow \C^{N}\otimes\C^{d+1}$ such that $\tilde{V}=\bigl(\begin{smallmatrix}
        V \\
        0
\end{smallmatrix}\bigr)$, i.e., the first $dN$ rows of $\tilde{V}$ are identical to $V$, and the last $N$ rows of $\tilde{V}$ are all 0.

We claim that for any $m,n\in[N]$,
\begin{align}\label{eqn:k-unitary-spectral-bound-lemma}
\big\|e^{-iH\frac{\pi}{2}}(|d\>\<d|\otimes|m\>\<n|) e^{iH\frac{\pi}{2}}-\tilde{V}|m\>\<n|\tilde{V}^{\dagger}\big\|=O(\varepsilon).
\end{align}
For convenience, denote $P:=VV^{\dagger}$. Then
\begin{align}
\|P^{2}-P\|=\|VV^{\dagger}VV^{\dagger}-VV^{\dagger}\|=\|V(I+O(\varepsilon))V^{\dagger}-VV^{\dagger}\|=O(\varepsilon).
\end{align}
Furthermore,
\begin{align}
\|H^{3}-H\|=\biggl\|\left(
      \begin{array}{cc}
        0 & VV^{\dagger}V \\
        V^{\dagger}VV^{\dagger} & 0
      \end{array} \right)-\left(
      \begin{array}{cc}
        0 & V \\
        V^{\dagger} & 0
      \end{array} \right)\biggr\|=\biggl\|\left(
      \begin{array}{cc}
        0 & VO(\varepsilon) \\
        O(\varepsilon)V^{\dagger} & 0
      \end{array} \right)\biggr\|=O(\varepsilon).
\end{align}
By the same method, we can show that for any positive integer $n$,
\begin{align}\label{eqn:Hamiltonian-relationship}
\|H^{n+2}\|=\|H^{n}\|+O(\varepsilon)
\end{align}
where the big-$O$ constant is independent of $n$.

By \eqn{Hamiltonian-relationship}, for a fixed time $t>0$ we have
\begin{align}
&\|\cos(Ht)-(1-H^{2}+H^{2}\cos t)\| \nonumber \\
&\quad=\biggl\|1-\frac{H^{2}t^{2}}{2}+\sum_{n=2}^{\infty}(-1)^{n}\frac{H^{2n}t^{2n}}{(2n)!}-(1-\frac{H^{2}t^{2}}{2}+\sum_{n=2}^{\infty}(-1)^{n}\frac{H^{2}t^{2n}}{(2n)!})\biggr\| \\
&\quad=\biggl\|\sum_{n=2}^{\infty}(-1)^{n}\frac{(H^{2n}-H^{2})t^{2n}}{(2n)!}\biggr\| \\
&\quad\leq\sum_{n=2}^{\infty}\frac{t^{2n}\|H^{2n}-H^{2}\|}{(2n)!} \\
&\quad=\sum_{n=2}^{\infty}\frac{t^{2n}nO(\varepsilon)}{(2n)!} \\
&\quad=O(\varepsilon).
\end{align}
A similar calculation shows that
\begin{align}
\|\sin(Ht)-H\sin t\|=O(\varepsilon).
\end{align}
Therefore,
\begin{align}\label{eqn:Hamiltonian-error-analysis}
\|e^{-iHt}-(1-H^{2}+H^{2}\cos t)+iH\sin t\|=O(\varepsilon).
\end{align}
Taking $t=\frac{\pi}{2}$ in \eqn{Hamiltonian-error-analysis}, we have
\begin{align}\label{eqn:Hamiltonian-error-analysis-pi/2}
\|e^{-iH\frac{\pi}{2}}-(1-iH-H^{2})\|=O(\varepsilon).
\end{align}
As a block matrix, we have
\begin{align}\label{eqn:Hamiltonian-error-analysis-matrix}
1-iH-H^{2}=\left(
      \begin{array}{cc}
        I-VV^{\dagger} & -iV \\
       -iV^{\dagger} & I-V^{\dagger}V
      \end{array} \right).
\end{align}
Plugging \eqn{Hamiltonian-error-analysis-pi/2} and \eqn{Hamiltonian-error-analysis-matrix} into \eqn{k-unitary-spectral-bound-lemma}, and by the definition of $\tilde{V}$, we have
\begin{align}
&\big\|e^{-iH\frac{\pi}{2}}(|d\>\<d|\otimes|m\>\<n|) e^{iH\frac{\pi}{2}}-\tilde{V}|m\>\<n|\tilde{V}^{\dagger}\big\| \nonumber \\
&\quad=\biggl\|e^{-iH\frac{\pi}{2}}(|d\>\<d|\otimes|m\>\<n|) e^{iH\frac{\pi}{2}}-\left(
      \begin{array}{cc}
        V|m\>\<n|V^{\dagger} & 0 \\
        0 & 0
      \end{array} \right)\biggr\| \\
&\quad=\biggl\|\left(
      \begin{array}{c}
        -iV|m\> \\
        (I-V^{\dagger}V)|m\>
      \end{array} \right)\left(
      \begin{array}{cc}
        i\<n|V^{\dagger} & \<n|(I-V^{\dagger}V)
      \end{array} \right)-\left(
      \begin{array}{cc}
        V|m\>\<n|V^{\dagger} & 0 \\
        0 & 0
      \end{array} \right)\biggr\|+O(\varepsilon) \\
&\quad=\biggl\|\left(
      \begin{array}{cc}
        0 & -iV|m\>\<n|(I-V^{\dagger}V) \\
        i(I-V^{\dagger}V)|m\>\<n|V^{\dagger} & (I-V^{\dagger}V)|m\>\<n|(I-V^{\dagger}V)
      \end{array} \right)\biggr\|+O(\varepsilon) \\
&\quad=O(\varepsilon),
\end{align}
where the last equality comes from the assumption that $\|I-V^{\dagger}V\|=O(\varepsilon)$.

Now, for any $m\in[N]$, let
\begin{align}
|\Psi_{m}\>&:=e^{-iH\frac{\pi}{2}}|d\>|m\>, \label{eqn:k-unitary-spectral-bound-states-1} \\
|\Phi_{m}\>&:=\tilde{V}|m\>. \label{eqn:k-unitary-spectral-bound-states-2}
\end{align}
Equation \eqn{k-unitary-spectral-bound-lemma} ensures that for any normalized pure states $|\gamma_{1}\>,|\gamma_{2}\> \in \C^{d+1}\otimes\C^{N}$,
\begin{align}\label{eqn:normalized-difference}
\big|\<\gamma_{1}|\Psi_{m}\>\<\Psi_{n}|\gamma_{2}\>-\<\gamma_{1}|\Phi_{m}\>\<\Phi_{n}|\gamma_{2}\>\big|=O(\varepsilon)\quad\forall\,m,n\in\range{N}.
\end{align}

To prove \eqn{k-unitary-spectral-bound} it suffices to show that for any pure state $|\Gamma\>\in \C^{d+1}\otimes\C^{N}\otimes\C^{N}$,
\begin{align}\label{eqn:k-unitary-spectral-bound-2}
\big|\<\Gamma|\big((e^{-iH\frac{\pi}{2}}\otimes I_{N})(|d\>\<d|\otimes|\Phi\>\<\Phi|)(e^{iH\frac{\pi}{2}}\otimes I_{N})-(\tilde{V}\otimes I_{N})|\Phi\>\<\Phi|(\tilde{V}^{\dagger}\otimes I_{N})\big)|\Gamma\>\big|=O(\varepsilon).
\end{align}
By the Schmidt decomposition, we can write $|\Gamma\>=\sum_{m'=0}^{N-1}c_{m'}|\gamma_{m'}\>|m'\>$ where $|\gamma_{m'}\>$ are orthonormal pure states in $\C^{d+1}\otimes\C^{N}$ for $m'\in\range{N}$, $c_{m'}\geq 0$ for any $m'\in\range{N}$, and $\sum_{m'=0}^{N-1}c_{m'}^{2}=1$. Then, \eqn{k-unitary-spectral-bound-2} holds because
\begin{align}
&\big|\<\Gamma|\big((e^{-iH\frac{\pi}{2}}\otimes I_{N})(|d\>\<d|\otimes|\Phi\>\<\Phi|)(e^{iH\frac{\pi}{2}}\otimes I_{N})-(\tilde{V}\otimes I_{N})|\Phi\>\<\Phi|(\tilde{V}^{\dagger}\otimes I_{N})\big)|\Gamma\>\big| \nonumber \\
&\quad= \Big|\sum_{m,m'}\sum_{n,n'}b_{mm'}b_{nn'}^{*}c_{m'}c_{n'}\big(\<\gamma_{m'}|\Psi_{m}\>\<\Psi_{n}|\gamma_{n'}\>-\<\gamma_{m'}|\Phi_{m}\>\<\Phi_{n}|\gamma_{n'}\>\big)\Big| \label{eqn:spectral-bound-normalization-1} \\
&\quad\le \sum_{m,m'}\sum_{n,n'}|b_{mm'}b_{nn'}^{*}c_{m'}c_{n'}|\cdot\big|\<\gamma_{m'}|\Psi_{m}\>\<\Psi_{n}|\gamma_{n'}\>-\<\gamma_{m'}|\Phi_{m}\>\<\Phi_{n}|\gamma_{n'}\>\big| \\
&\quad\le \sqrt{\sum_{m,m'}\sum_{n,n'}|b_{mm'}b_{nn'}^{*}c_{m'}c_{n'}|^{2}}\cdot O(\varepsilon) \label{eqn:spectral-bound-normalization-2} \\
&\quad= \sqrt{\sum_{m,m'}|b_{mm'}|^{2}c_{m'}^{2}}\sqrt{\sum_{n,n'}|b_{nn'}|^{2}c_{n'}^{2}}\cdot O(\varepsilon) \\
&\quad\le  \sqrt{\sum_{m,m'}|b_{mm'}|^{2}}\sqrt{\sum_{n,n'}|b_{nn'}|^{2}}\cdot O(\varepsilon) \\
&\quad= O(\varepsilon) \label{eqn:spectral-bound-normalization-3},
\end{align}
where \eqn{spectral-bound-normalization-1} comes from \eqn{k-unitary-spectral-bound-states-1} and \eqn{k-unitary-spectral-bound-states-2}, \eqn{spectral-bound-normalization-2} comes from \eqn{normalized-difference} and orthogonality of the states $|\gamma_{m'}\>$, and \eqn{spectral-bound-normalization-3} comes from the assumption that $\sum_{m}\sum_{m'}|b_{mm'}|^{2}=1$.
\end{proof}

Finally, we use \lem{E-first-order-approximation}, \lem{square-error-spectral-norm}, and \prop{sparse-isometry-implementation} to prove \lem{k-short-time}.

\begin{proof}[Proof of \lem{k-short-time}]
Plugging \lem{square-error-spectral-norm} into \prop{sparse-isometry-implementation} by taking $V=V_{\epsilon}$, $\varepsilon=k^{4}\epsilon^{2}$, and $d=2$, we know that
\begin{align}
\|\mathcal{E}_{\epsilon}-\mathcal{E}'_{\epsilon}\|_{\diamond}=O(k^{4}\epsilon^{2}),
\end{align}
where $\mathcal{E}_{\epsilon}(\rho):=\Tr_{\anc}[e^{-iH\frac{\pi}{2}}(|2\>\<2|\otimes\rho) e^{iH\frac{\pi}{2}}]$. Together with \lem{E-first-order-approximation}, we have shown that the quantum operation $\mathcal{E}_{\epsilon}$ satisfies \lem{k-short-time}. Furthermore, $H$ is a $(k^{2}+1)$-sparse Hamiltonian, so by Theorem 1 of \cite{berry2015hamiltonian}, $\mathcal{E}_{\epsilon}$ can be implemented using $O\big(k^{2}[\log N+\log^{5/2}(k/\epsilon)]\frac{\log(k/\epsilon)}{\log\log(k/\epsilon)}\big)$ 2-qubit gates and $O(k^{2}\frac{\log(k/\epsilon)}{\log\log(k/\epsilon)})$ queries to the black box.
\end{proof}

\subsection{Truncated damped quantum harmonic oscillators}
\label{sec:dho}

In this section we briefly describe an application of the above simulation, namely to simulating a damped quantum harmonic oscillator truncated to the first $N$ levels in Fock space. This system is described by the creation operator $a^{\dagger}$ and its Hermitian conjugate, the annihilation operator $a$, satisfying
\begin{align}
a^{\dagger}|j\>=\sqrt{j+1}|j+1\>\qquad a|j\>=\sqrt{j}|j-1\>,
\end{align}
where $|n\>=0$ if $n>N-1$. Then the Lindbladians
\begin{align}
\mathcal{L}_{\mathrm{up}}(\rho)&:=\frac{1}{N}\big(a^{\dagger}\rho a-\frac{1}{2}(aa^{\dagger}\rho+\rho aa^{\dagger})\big) \\
\mathcal{L}_{\mathrm{down}}(\rho)&:=\frac{1}{N}\big(a\rho a^{\dagger}-\frac{1}{2}(a^{\dagger}a\rho+\rho a^{\dagger}a)\big)
\end{align}
are not local, but they can be efficiently simulated by \thm{k-sparse-generator}:
\begin{corollary}
For any time $t>0$, the Lindbladians $\mathcal{L}_{\mathrm{up}}$ and $\mathcal{L}_{\mathrm{down}}$ can both be simulated within error $\epsilon$ using $O(\frac{t^{2}\log N}{\epsilon})$ 2-qubit gates.
\end{corollary}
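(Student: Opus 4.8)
The plan is to recognize each of $\mathcal{L}_{\mathrm{up}}$ and $\mathcal{L}_{\mathrm{down}}$ as a single--generator Lindbladian of the form \eqn{single-generator} with a $1$-sparse Lindblad operator, and then invoke \thm{k-sparse-generator} in the case $k=1$. Taking $L_{\mathrm{up}} := a^{\dagger}/\sqrt{N}$ gives $L_{\mathrm{up}}\rho L_{\mathrm{up}}^{\dagger} = \tfrac{1}{N}\,a^{\dagger}\rho a$ and $L_{\mathrm{up}}^{\dagger}L_{\mathrm{up}} = \tfrac{1}{N}\,a a^{\dagger}$, so $\mathcal{L}_{\mathrm{up}}$ is precisely the Lindbladian generated by $L_{\mathrm{up}}$; likewise $\mathcal{L}_{\mathrm{down}}$ is the Lindbladian generated by $L_{\mathrm{down}} := a/\sqrt{N}$. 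Note that it is exactly the $1/N$ prefactor built into the definitions of $\mathcal{L}_{\mathrm{up}},\mathcal{L}_{\mathrm{down}}$ that lets us keep the evolution time equal to $t$ rather than rescaling $L$ and inflating the time to order $Nt$.

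Next I would check the hypotheses of \thm{k-sparse-generator} for $L_{\mathrm{up}}$ (the case of $L_{\mathrm{down}}$ being identical). Since $a^{\dagger}|j\> = \sqrt{j+1}\,|j+1\>$ with $|N\> = 0$, each row and each column of $L_{\mathrm{up}}$ has exactly one nonzero entry, so $L_{\mathrm{up}}$ is $1$-sparse in the sense of \sec{1-sparse-generator}: writing $L_{\mathrm{up}}|j\> = c_{j}|\nu(j)\>$, we may take $\nu$ to be the cyclic shift $\nu(j) = j+1 \bmod N$ together with $c_{j} = \sqrt{(j+1)/N}$ for $j < N-1$ and $c_{N-1} = 0$; the Fock--space truncation is harmless precisely because the coefficient that would ``wrap around'' vanishes, so any choice of $\nu(N-1)$ (in particular $\nu(N-1)=0$) makes $\nu$ a genuine permutation. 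Similarly $L_{\mathrm{down}}|j\> = \sqrt{j/N}\,|j-1\>$ with $\nu(j) = j-1 \bmod N$ and $c_{0} = 0$. Both operators satisfy $\|L\|_{\max} = \sqrt{(N-1)/N} \le 1$, so no further rescaling is required. The black box demanded by \thm{k-sparse-generator} must, on input a row or column index, return the single nonzero entry and its location; here this amounts to computing $j \mapsto j \pm 1$ together with a square root of a simple ratio involving $N$, which can be done with $O(\log N)$ elementary gates, so the $O(t^{2}/\epsilon)$ queries do not dominate the gate count.

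With these conditions verified, \thm{k-sparse-generator} in the $1$-sparse case directly gives a simulation of $e^{\mathcal{L}_{\mathrm{up}}t}$ (and of $e^{\mathcal{L}_{\mathrm{down}}t}$) to within diamond--norm error $\epsilon$ using $O(\tfrac{t^{2}\log N}{\epsilon})$ $2$-qubit gates, which is exactly the claimed bound. There is no genuinely hard step in this argument: the only points that require any attention are (i) passing to the \emph{normalized} generator $L = a^{\dagger}/\sqrt{N}$ (resp.\ $a/\sqrt{N}$) so that the $\|L\|_{\max}\le 1$ hypothesis holds without enlarging the time parameter, and (ii) handling the Fock--space truncation boundary, both of which are routine.
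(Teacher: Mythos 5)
Your proof is correct and follows exactly the route the paper intends: the paper's own justification for this corollary is simply the observation that $\mathcal{L}_{\mathrm{up}}$ and $\mathcal{L}_{\mathrm{down}}$ are single-generator Lindbladians with $1$-sparse Lindblad operators $a^{\dagger}/\sqrt{N}$ and $a/\sqrt{N}$ satisfying $\|L\|_{\max}\le 1$, followed by an application of \thm{k-sparse-generator} in the $k=1$ case. Your additional care about the truncation boundary (the vanishing coefficient $c_{N-1}$, resp.\ $c_{0}$, letting $\nu$ be a genuine cyclic permutation) and about absorbing the $1/N$ into the operator rather than the time is exactly the right bookkeeping.
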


The Lindbladian $\mathcal{L}_{\mathrm{down}}$ represents damping at zero temperature, whereas $\mathcal{L}_{\mathrm{up}}$ represents damping at infinite temperature. More generally, Lindbladians of the form $\lambda\mathcal{L}_{\mathrm{up}}+(1-\lambda)\mathcal{L}_{\mathrm{down}}$ for $0<\lambda<1$ represent generalized damping at finite temperature, and can also be simulated efficiently.

\section{Open quantum walks}
\label{sec:open-quantum-walk}
This section briefly discusses possible applications of our simulation methods to implementations of non-unitary quantum walks.

\subsection{Quantum stochastic walks}

\label{sec:stochastic-walk}
Reference \cite{whitfield2010quantum} introduced the notion of quantum stochastic walks, a mutual generalization of classical random walks and quantum walks. Specifically, given a graph $G=(V,E)$ and a continuous-time classical Markov chain on $G$ with transition rate matrix $M$, consider Lindblad operators $L_{k,l}=\sqrt{M_{k,l}}|k\>\<l|$ for each $k\neq l$. This gives a diagonal Lindbladian $\mathcal{L}$ of the form of equation \eqn{GKS'-diagonal}, where for each $m$ we have
\begin{align}\label{eqn:quantum-stochastic-walk-Lindblad}
\mathcal{L}(|m\>\<m|)=\sum_{k\neq m}M_{k,m}(|k\>\<k|-|m\>\<m|).
\end{align}
In other words, $\mathcal{L}$ simulates the classical Markov chain on diagonal states. On the other hand, a continuous-time quantum walk is characterized by a Hamiltonian $\mathcal{H}(\rho):=-i[H,\rho]$, which can appear as part of a Lindbladian as in \eqn{Lindblad} and \eqn{GKS}. By taking a positive linear combination of $\mathcal{H}$ and $\mathcal{L}$, we obtain a quantum stochastic walk, generalizing classical random walks and quantum walks.

If the underlying graph $G$ of the Lindbladian $\mathcal{L}$ in \eqn{quantum-stochastic-walk-Lindblad} is sparse, then $\mathcal{L}$ is a sparse-diagonal Lindbladian and it can be efficiently simulated by \thm{d-sparse}. This shows that quantum stochastic walks on sparse graphs can be carried out in practice. To the best of our knowledge, this was not known previously.

As a concrete example, consider the unweighted random walk on $G$ defined by
\begin{align}
M_{k,l}=\left\{
\begin{array}{ll}
\frac{1}{\deg(l)} & \text{if } (k,l)\in E \\
0 & \text{if } (k,l)\notin E.
\end{array} \right.
\end{align}
We can use quantum stochastic walks to prepare the stationary state of such a process:
\begin{theorem}[\cite{liu2016continuous}]\label{thm:quantum-stochastic-walk-relaxing}
If $G$ is connected, then for any Hamiltonian $\mathcal{H}$, the Lindbladian $\mathcal{H}+\mathcal{L}$ is relaxing, i.e., there exists a unique stationary state $\rho_{\infty}$ such that $e^{\mathcal{L}t}(\rho_{\infty})=\rho_{\infty}$ for any time $t$ and $\lim_{t\rightarrow\infty}e^{\mathcal{L}t}(\rho)=\rho_{\infty}$ for any initial state $\rho$. Furthermore, if $G$ is both connected and regular, then $\rho_{\infty}=\frac{1}{|V|}I$.
\end{theorem}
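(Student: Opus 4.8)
The plan is to view $\mathcal{H}+\mathcal{L}$ as the generator of a quantum dynamical semigroup and run it through the standard ergodic theory of Lindbladians, so that the only genuinely combinatorial input is the connectivity of $G$. Write the dissipative part as in \eqn{single-generator}, summed over the Lindblad operators $L_{k,l}=\sqrt{M_{k,l}}\,|k\rangle\langle l|$ over $(k,l)\in E$; since $G$ is undirected this family is closed under $\dagger$, and since $M_{k,l}=1/\deg(l)$ on every edge, $\sum_{(k,l)\in E}L_{k,l}^{\dagger}L_{k,l}=\sum_{l}|l\rangle\langle l|\sum_{k\sim l}M_{k,l}=\sum_{l}|l\rangle\langle l|=I$. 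Two linear-algebra facts then follow at once from connectivity. (i) The only subspace $\mathcal{K}\subseteq\C^{N}$ with $L_{k,l}\mathcal{K}\subseteq\mathcal{K}$ for all edges is $\{0\}$ or $\C^{N}$: given $0\neq|\psi\rangle\in\mathcal{K}$ with $\langle v_{0}|\psi\rangle\neq 0$, applying $L_{k,v_{0}}\propto|k\rangle\langle v_{0}|$ for a neighbour $k$ of $v_{0}$ forces $|k\rangle\in\mathcal{K}$, and one then walks $|v\rangle\in\mathcal{K}$ along edges to every vertex. (ii) $\{L_{k,l}:(k,l)\in E\}'=\C I$: comparing matrix entries, commuting with $|k\rangle\langle l|$ forces column $k$ and row $l$ of $X$ to be supported on the diagonal with $X_{kk}=X_{ll}$; ranging over all edges and using connectivity gives $X\in\C I$.

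Next I would establish the existence of a \emph{faithful} stationary state. Some stationary $\rho_{\infty}$ exists because $e^{(\mathcal{H}+\mathcal{L})t}$ is a quantum channel on a finite-dimensional space and hence has a fixed density matrix. For faithfulness, set $\mathcal{K}=\operatorname{supp}\rho_{\infty}$ and take $|\psi\rangle\perp\mathcal{K}$, so $\rho_{\infty}|\psi\rangle=0$ and $\langle\psi|\rho_{\infty}=0$. In the identity $0=\langle\psi|(\mathcal{H}+\mathcal{L})(\rho_{\infty})|\psi\rangle$ the commutator term $-i\langle\psi|[H,\rho_{\infty}]|\psi\rangle$ and the anticommutator terms $\langle\psi|\{L_{k,l}^{\dagger}L_{k,l},\rho_{\infty}\}|\psi\rangle$ vanish because $\rho_{\infty}$ annihilates $|\psi\rangle$ on both sides, leaving $\sum_{(k,l)\in E}\bigl\|\sqrt{\rho_{\infty}}\,L_{k,l}^{\dagger}|\psi\rangle\bigr\|^{2}=0$. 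Hence $L_{k,l}^{\dagger}\mathcal{K}^{\perp}\subseteq\mathcal{K}^{\perp}$, i.e. $L_{k,l}\mathcal{K}\subseteq\mathcal{K}$ for all edges, and by (i) this forces $\mathcal{K}=\C^{N}$, so $\rho_{\infty}>0$.

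Now I would combine the faithful stationary state with the trivial commutant (ii) and invoke the standard relaxation theorem for Lindblad semigroups (Spohn's ergodic theorem, refined by Frigerio and Verri): a Lindblad semigroup that possesses a faithful stationary state and whose Lindblad operators have trivial commutant converges to that state from every initial state, for an arbitrary Hamiltonian. This gives $e^{(\mathcal{H}+\mathcal{L})t}(\rho)\to\rho_{\infty}$ for all $\rho$, and $\rho_{\infty}$ is the unique stationary state since the limit is unique; since $\rho_{\infty}$ is a fixed point of the generator it is fixed by $e^{(\mathcal{H}+\mathcal{L})t}$ for every $t$. For the last assertion, when $G$ is $d$-regular we have $M_{k,l}=1/d$ on edges, so by the computation behind \eqn{quantum-stochastic-walk-Lindblad} one gets $\mathcal{L}(I)=\sum_{k}\bigl(\sum_{m\sim k}1/d-1\bigr)|k\rangle\langle k|=\sum_{k}(\deg(k)/d-1)|k\rangle\langle k|=0$ and $\mathcal{H}(I)=-i[H,I]=0$; thus $I/|V|$ is stationary, and by uniqueness $\rho_{\infty}=I/|V|$.

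The step I expect to carry the real weight is the relaxation theorem itself, i.e. ruling out a rotating (oscillatory) part of the evolution, which for an arbitrary Hamiltonian does not follow from irreducibility alone. A self-contained route would show that the peripheral spectrum of $\mathcal{H}+\mathcal{L}$ on the imaginary axis reduces to $\{0\}$: any eigenoperator $X$ with $(\mathcal{H}+\mathcal{L})(X)=i\theta X$ yields (after twisting by the faithful $\rho_{\infty}$) a positive operator lying in $\{L_{k,l},L_{k,l}^{\dagger}\}'=\C I$, so $X$ is, up to this twist, a scalar multiple of a unitary; the resulting unitaries generate a finite abelian ``symmetry'' compatible with the jump operators $|k\rangle\langle l|$, and connectivity of $G$ forces that symmetry to be trivial, hence $\theta=0$. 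Everything else is bookkeeping, and the graph enters only through facts (i), (ii), and the $d$-regular identity $\sum_{m\sim k}1/\deg(m)=1$.
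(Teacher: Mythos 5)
The paper does not prove this theorem at all: it is imported verbatim from \cite{liu2016continuous}, so there is no internal proof to compare against. Judged on its own, your argument is correct and follows the standard ergodic-theory route for Lindblad semigroups (Spohn; Frigerio--Verri): facts (i) and (ii) are right (note only that the family $\{L_{k,l}\}$ is closed under adjoints merely up to positive scalars when $G$ is irregular, since $L_{k,l}^{\dagger}=\sqrt{1/\deg(l)}\,|l\>\<k|$ while $L_{l,k}=\sqrt{1/\deg(k)}\,|l\>\<k|$, but proportionality is all that the commutant and invariant-subspace arguments use); the faithfulness argument via $0=\<\psi|(\mathcal{H}+\mathcal{L})(\rho_{\infty})|\psi\>$ is sound; and the regular case is the one-line computation you give. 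Two points deserve tightening. First, existence of a stationary state should be argued for the whole semigroup rather than for a single channel $e^{(\mathcal{H}+\mathcal{L})t_{0}}$, e.g.\ by taking a limit point of the Ces\`aro averages $\frac{1}{T}\int_{0}^{T}e^{(\mathcal{H}+\mathcal{L})t}(\rho)\,\d t$. Second, your closing sketch of the peripheral-spectrum step is heavier than necessary: the dissipation identity $\mathcal{L}^{*}(X^{\dagger}X)-\mathcal{L}^{*}(X^{\dagger})X-X^{\dagger}\mathcal{L}^{*}(X)=\sum_{(k,l)\in E}[L_{k,l},X]^{\dagger}[L_{k,l},X]$, in which the Hamiltonian cancels identically, applied under $\Tr[\rho_{\infty}\,\cdot\,]$ to a Heisenberg-picture eigenoperator with eigenvalue $i\theta$ gives $[L_{k,l},X]=0$ for every edge at once (using faithfulness of $\rho_{\infty}$), hence $X\in\C I$ by (ii) and $\theta=0$; no ``finite abelian symmetry'' analysis is needed, and the absence of Jordan blocks on the imaginary axis follows simply because the unital semigroup is a contraction. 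With those two points made explicit the proof is complete.
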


If $G$ is connected but not regular, then the state $\rho_{\infty}$ in \thm{quantum-stochastic-walk-relaxing} can be complicated even though $G$ is not. For instance, if $V=\{v_{1},v_{2},v_{3}\}$, $E=\{(v_{1},v_{2}),(v_{2},v_{3})\}$, and the Hamiltonian $H$ is the Laplacian of $G$, then \cite{liu2016continuous} showed that
\begin{align}
\rho_{\infty}=\left(
\begin{array}{ccc}
\frac{2}{7} & -\frac{1}{28}+\frac{1}{28}i & \frac{1}{14} \\
-\frac{1}{28}-\frac{1}{28}i & \frac{3}{7} & -\frac{1}{28}-\frac{1}{28}i \\
\frac{1}{14} & -\frac{1}{28}+\frac{1}{28}i & \frac{2}{7}
\end{array} \right).
\end{align}
This type of behavior may be of interest for quantum state engineering or dissipative quantum computation \cite{kraus2008preparation,verstraete2009quantum}.

\subsection{Decoherence in quantum walks on the hypercube}

\label{sec:decoherence-walk}
There is an extensive literature on the effects of decoherence on quantum walks. In particular, references \cite{alagic2005decoherence,strauch2009reexamination} discussed models of decoherence in quantum walks on the hypercube, the graph with vertex set $V=\{0,1\}^{n}$ and edge set $E=\{(x,y)\in V^{2}:\Delta(x,y)=1\}$, where $\Delta(x,y)$ denotes the Hamming distance between the strings $x$ and $y$. The adjacency matrix of the hypercube is
\begin{align}
A=\sum_{j=1}^{n}\sigma_{x}^{(j)}
\end{align}
where $\sigma_{x}^{(j)}$ denotes the Pauli $X$ operator acting on the $j$th qubit.

For simplicity, we consider the quantum walk with the Hamiltonian given by $A$ instead of the Laplacian, because the hypercube is a regular graph and the walks generated by the adjacency matrix and the Laplacian only differ by a global phase.

Reference \cite{alagic2005decoherence} considered the decoherence model with Lindblad operators
\begin{align}\label{eqn:decoherence-Alagic}
L_{j,\alpha} := I \otimes \cdots \otimes I \otimes \Pi_{\alpha} \otimes I \otimes \cdots \otimes I,
\end{align}
where $\alpha\in\{0,1\}$, $j\in\{1,2,\ldots,n\}$, and $\Pi_{\alpha}:=|\alpha\>\<\alpha|$ acts on the $j$th qubit. The Lindbladian is
\begin{align}\label{eqn:identical-coordinate-instance-1}
\mathcal{L}(\rho):=-(1-p)i[A,\rho]+p\sum_{j,\alpha}\big(L_{j,\alpha}\rho L_{j,\alpha}^{\dagger}-\frac{1}{2}L_{j,\alpha}^{\dagger}L_{j,\alpha}\rho-\frac{1}{2}\rho L_{j,\alpha}^{\dagger}L_{j,\alpha}\big),
\end{align}
where $0<p<4/n$ is the probability of decoherence per unit time. This $\mathcal{L}$ can be efficiently simulated using \thm{local-Lindbladians} and the fact that efficient simulation is closed under positive linear combinations.

Reference \cite{strauch2009reexamination} considered another decoherence model with Lindblad operators
\begin{align}\label{eqn:decoherence-Strauch}
L_{x}:=\Pi_{x_{1}}\otimes\cdots\otimes \Pi_{x_{n}},
\end{align}
where $x\in\{0,1\}^{n}$. The Lindbladian is
\begin{align}\label{eqn:identical-coordinate-instance-2}
\mathcal{L}(\rho):=-i[A,\rho]+\lambda\sum_{x}\big(L_{x}\rho L_{x}^{\dagger}-\frac{1}{2}L_{x}^{\dagger}L_{x}\rho-\frac{1}{2}\rho L_{x}^{\dagger}L_{x}\big),
\end{align}
where $\lambda\ll 1$. The matrix $A$ in \eqn{GKS'} of this $\mathcal{L}$ satisfies $A_{(x,x),(x,x)}=\frac{1}{2}\lambda$ for all $x \in \{0,1\}^n$ and $A_{(k,l),(k',l')}=0$ otherwise. By taking $a_{x}=\frac{1}{2}\lambda$ and $c_{x}=0$ for all $x$ in \thm{singly-sparse}, we see that this is an identical-coordinate Lindbladian, so it can be efficiently simulated.

References \cite{alagic2005decoherence,strauch2009reexamination} studied the hitting and mixing times of these walks, showing that they can sometimes achieve the fast hitting time of a quantum walk while also retaining fast convergence to a uniform distribution as in a classical random walk. Our work shows that this process can be implemented efficiently, and we hope this may lead to practical applications of open quantum walks.

\section{Linear limit on simulation time}
\label{sec:no-fast-forwarding}

Reference \cite{berry2007efficient} established a so-called no--fast-forwarding theorem for Hamiltonian simulation, showing that simulating a Hamiltonian $H$ for time $t$ with constant precision requires $\Omega(t)$  queries to a black box description of $H$. This result shows that one cannot fast-forward the dynamics of a generic Hamiltonian. Here we extend this result to Lindbladians in two different query models.

\begin{theorem}[No--fast-forwarding theorem for Lindbladians]\label{thm:no-fast-forward}
For any positive integer $N\geq 7$, there exists a Lindbladian $\mathcal{L}$ generated by a single Lindblad operator $L$ such that $\|\mathcal{L}\|_{\diamond}=1$ and simulating the evolution of $\mathcal{L}$ for time $t=4N$ within precision 1/8 requires at least
\begin{enumerate}[itemsep=0pt,topsep=4pt,label=(\roman*)]
\item $N/4=t/16$ queries to a black box that takes both a row and column index of the matrix $A$ of $\mathcal{L}$ defined in \eqn{GKS'} and outputs the corresponding entry in $A$, or
\item $N/2=t/8$ queries to a black box that takes a row or column index of $L$ as input and outputs the locations and values of all nonzero elements in that row or column.
\end{enumerate}
\end{theorem}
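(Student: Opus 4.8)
The plan is to obtain both lower bounds from a single reduction from the $\mathrm{PARITY}$ problem, mirroring the strategy used for Hamiltonians in \cite{berry2007efficient}, and exploiting the fact (immediate from \eqn{GKS'}) that a Lindbladian generated by a single Lindblad operator $L=\sum_{j}c_{j}|\pi(j)\rangle\langle j|$ acts on diagonal density matrices as a classical continuous-time Markov chain with transition $j\mapsto\pi(j)$ at rate $|c_{j}|^{2}$. For each $x\in\{0,1\}^{n}$ with $n=N-O(1)$, I would build such an $L_{x}$ whose underlying directed graph is a ``straight-line circuit'' ending in two absorbing sites: the walker starts in a fixed diagonal state $\rho_{0}$ and is transported along a path of $\Theta(N)$ sites whose structure folds in the hidden bits $x_{1},\dots,x_{n}$, so that the stationary state it reaches is the computational-basis projector onto whichever of the two absorbing sites encodes $\mathrm{PARITY}(x)$. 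All transition weights are taken in $\{0,1\}$, so $\|\mathcal{L}_{x}\|_{\diamond}=O(1)$; after rescaling $L_{x}$ so that $\|\mathcal{L}_{x}\|_{\diamond}=1$ exactly, the constant $4$ in $t=4N$ absorbs both this rescaling and a concentration buffer, since the traversal is a sum of $\Theta(N)$ independent unit-rate exponential holding times with expectation $\Theta(N)$ and hence, by a Chernoff bound, has completed by time $t=4N$ except with probability at most $1/8$. Thus $\bigl\|e^{\mathcal{L}_{x}t}(\rho_{0})-\rho_{\infty}(x)\bigr\|_{1}\le 1/8$, and measuring which of the two absorbing sites is occupied in $\rho_{\infty}(x)$ returns $\mathrm{PARITY}(x)$ with certainty. (The hypothesis $N\ge 7$ ensures the gadget fits in an $N$-dimensional system.)

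Given this family, the reduction is routine. If an algorithm $\mathcal{A}$ makes $q$ queries to one of the two black boxes for $\mathcal{L}_{x}$ and outputs a quantum operation $\mathcal{E}$ with $\|\mathcal{E}-e^{\mathcal{L}_{x}t}\|_{\diamond}\le 1/8$, then $\|\mathcal{E}(\rho_{0})-\rho_{\infty}(x)\|_{1}\le 1/4$, so measuring the two absorbing sites of $\mathcal{E}(\rho_{0})$ returns $\mathrm{PARITY}(x)$ with probability at least $3/4>1/2$; that is, $\mathcal{A}$ solves $\mathrm{PARITY}_{n}$ with bounded error. It remains to bound how many bit-queries of $x$ suffice to answer one black-box query. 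A query to the $L$-oracle returns an entire row or column of $L_{x}$, which I would arrange to depend on at most one hidden bit, so $\mathcal{A}$ yields a bounded-error quantum algorithm for $\mathrm{PARITY}_{n}$ making $q$ queries to $x$. A query to the $A$-oracle returns a single entry $A_{(k,l),(k',l')}$; since $\mathcal{L}_{x}$ comes from a single Lindblad operator, the GKS matrix is the rank-one $A=\tfrac{1}{2}|a\rangle\langle a|$ with $|a\rangle=\sum_{k,l}a_{(k,l)}|k\rangle|l\rangle$ (cf.\ \eqn{GKS'-1}), so $A_{(k,l),(k',l')}=\tfrac{1}{2}a_{(k,l)}a_{(k',l')}^{*}$ is determined by two coordinates of $|a\rangle$, hence by at most two hidden bits, and $\mathcal{A}$ yields an algorithm for $\mathrm{PARITY}_{n}$ making $2q$ queries to $x$. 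Recalling that the bounded-error quantum query complexity of $\mathrm{PARITY}$ on $n$ bits equals $\lceil n/2\rceil$, we conclude $q\ge\lceil n/2\rceil$ in the first model and $q\ge\lceil n/2\rceil/2$ in the second, which for $n=N-O(1)$ gives $q\ge N/2=t/8$ and $q\ge N/4=t/16$, respectively.

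I expect the crux to be the explicit construction of $L_{x}$, which must meet three requirements at once. First, $\|\mathcal{L}_{x}\|_{\diamond}$ must be $O(1)$: for a weighted partial permutation this is governed by the squared transition weights, so every holding rate is $O(1)$ and no part of the traversal can be sped up---which is exactly why the traversal time, and hence $t$, must be $\Theta(N)$. Second, the running parity has to be carried \emph{within} the $N$-dimensional position space, together with the absorbing sites that make the time-$t$ state (nearly) stationary; a naive variant that stores the running parity in an extra register would encode $\mathrm{PARITY}$ of only about $N/2$ bits and thereby lose a factor of $2$ in the constant, so obtaining the stated $N/2$ in the first model requires a more economical gadget (or, alternatively, a sharper accounting of the information an $A$-oracle query leaks). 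Third, the dependence on $x$ must be \emph{local}, with each row and column of $L_{x}$---equivalently, each coordinate of $|a\rangle$---touched by at most one hidden bit, so that the oracle-to-bit reduction loses only the claimed factor. Once such a gadget is in hand, the Chernoff estimate at $t=4N$, the arithmetic $1/8+1/8<1/2$, and the treatment of small $N$ are all straightforward.
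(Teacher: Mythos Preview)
Your proposal is correct and follows essentially the same route as the paper: encode $\mathrm{PARITY}$ of a hidden string into a Lindbladian whose single Lindblad operator is a partial permutation, so that the induced classical walk carries a ``walker'' to an absorbing configuration that reveals the parity, and then invoke the $\lceil n/2\rceil$ lower bound for $\mathrm{PARITY}$ together with the observation that one $A$-oracle query leaks at most two hidden bits and one $L$-oracle query at most one. The paper's construction is exactly the ``naive variant'' you worry about: it works on $\mathcal{H}\otimes\mathbb{C}^{2}$ with $\dim\mathcal{H}=N+1$, storing the running parity in the qubit register via $L_{s}|n,j\rangle=|n-1,j\oplus s_{n}\rangle$. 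Your concern about losing a factor of $2$ is a misreading of the statement: $N$ in \thm{no-fast-forward} is a free parameter (the length of the hidden string), not the Hilbert-space dimension, so doubling the dimension costs nothing. Two smaller differences: (i) where you plan a Chernoff bound for the traversal time, the paper computes the evolution exactly (the number of jumps is Poisson) and bounds the left tail $\Pr[\mathrm{Poisson}(2N)<N]\le 1/64$ via a Ramanujan identity, which is what pins down the specific threshold $N\ge 7$; (ii) rather than rescaling, the paper inserts a factor $\tfrac12$ in the definition of $\mathcal{L}_{s}$ and proves $\|\mathcal{L}_{s}\|_{\diamond}=1$ directly by computing $\|\mathcal{L}\otimes\mathcal{I}\|_{1\to 1}$.
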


To prove \thm{no-fast-forward}, we first establish two lemmas:

\begin{lemma}\label{lem:exp_tail_analysis}
For all positive integers $N\geq 7$,
\begin{align}\label{eqn:exp_tail_analysis}
\sum_{m=0}^{N-1}\frac{(2N)^{m}e^{-2N}}{m!}\leq\frac{1}{64}.
\end{align}
\end{lemma}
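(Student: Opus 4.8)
The plan is to recognize the left-hand side as the lower-tail probability $f(N) := \Pr[X_N \le N-1]$, where $X_N$ has the Poisson distribution with mean $2N$. Since the mean $2N$ grows faster than the cutoff $N-1$, one expects $f(N)$ to shrink as $N$ increases, and this is the route I would take: first show that $f$ is non-increasing, then verify the single base case $f(7)\le 1/64$ by an exact finite computation. I would flag at the outset that a soft argument (a Chernoff or union bound applied directly to the sum) does not suffice here, since numerically $f(7)\approx 0.0142$ against $1/64\approx 0.0156$, so the inequality has only about a $10\%$ margin at the boundary case and an exact base case seems unavoidable.

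For the monotonicity I would use the decomposition of Poisson laws: if $Y\sim\mathrm{Poisson}(2)$ is independent of $X_N$, then $X_{N+1}$ has the law of $X_N+Y$. Conditioning on $Y$ and comparing term by term with $f(N)=\Pr[X_N\le N-1]$ gives
\begin{align*}
f(N+1)-f(N) = e^{-2}\Pr[X_N=N] - \sum_{j\ge 2}\Pr[Y=j]\sum_{i=N-j+1}^{N-1}\Pr[X_N=i].
\end{align*}
The key point is the exact Poisson recurrence $\Pr[X_N=N] = \frac{2N}{N}\Pr[X_N=N-1] = 2\Pr[X_N=N-1]$: the $j=2$ term of the subtracted sum equals $\Pr[Y=2]\Pr[X_N=N-1] = 2e^{-2}\Pr[X_N=N-1] = e^{-2}\Pr[X_N=N]$, which cancels the positive term exactly, while every $j\ge 3$ term is nonnegative. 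Hence $f(N+1)\le f(N)$ for every $N\ge 1$; only the base case will use $N\ge 7$.

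For the base case I would compute $f(7)$ exactly: clearing denominators, $\sum_{m=0}^{6}14^m/m! = 153995/9$, so $f(7) = \tfrac{153995}{9}e^{-14}$, and $f(7)\le 1/64$ is equivalent to $e^{14}\ge \tfrac{64\cdot 153995}{9} = 1095075.\overline{5}$. This follows from any crude lower bound on $e$: for instance $e>2.718$ gives $e^7>1080$ and hence $e^{14}>1080^2 = 1166400 > 1095076$. Combining with the monotonicity yields $f(N)\le 1/64$ for all $N\ge 7$ by induction. The only genuine obstacle is the one flagged first—the inequality is too tight for soft analysis, which forces the two-stage structure—and beyond that the single mildly clever step is spotting the exact cancellation in the monotonicity estimate; the remaining work is just keeping the base-case arithmetic exact (working with rationals and an explicit numerical lower bound on $e^{14}$) rather than with decimal approximations.
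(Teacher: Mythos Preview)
Your proof is correct and takes a genuinely different route from the paper.

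The paper does \emph{not} prove monotonicity of $f(N)$. Instead it checks $N=7,\ldots,12$ by hand and, for $N\ge 13$, invokes Ramanujan's theorem that $\sum_{m=0}^{M-1}M^m/m!\le\tfrac12 e^M$ to get $\sum_{m=0}^{2N-1}(2N)^m/m!\le\tfrac12 e^{2N}$; it then shows the block $\sum_{m=N}^{2N-1}$ dominates the block $\sum_{m=0}^{N-1}$ by a factor of at least $31$ via a termwise ratio argument, which is where the threshold $N\ge 13$ enters. Your approach is cleaner and more self-contained: the Poisson-splitting identity $X_{N+1}\stackrel{d}{=}X_N+Y$ together with the exact recurrence $\Pr[X_N=N]=2\Pr[X_N=N-1]$ gives monotonicity for all $N\ge 1$ without any external input, so only the single base case $N=7$ needs checking. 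The paper's argument, on the other hand, avoids the probabilistic reformulation and stays entirely with manipulations of the sum, at the cost of six base cases and an appeal to Ramanujan's result. Your exact cancellation at $j=2$ is the step that makes the difference; it is worth noting explicitly (as you essentially do) that for $j>N$ the inner sum is still nonnegative, being $\Pr[X_N\le N-1]-\Pr[X_N\le N-j]=\Pr[X_N\le N-1]$.
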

\begin{proof}
When $N=7,8,9,10,11,12$, \eqn{exp_tail_analysis} holds. It suffices to prove \eqn{exp_tail_analysis} when $N\geq 13$.

A theorem of Ramanujan \cite[Question 294]{hardy1927collected} states that for any positive integer $M$,
\begin{align}\label{eqn:Ramanujan}
\frac{1}{2}e^{M}=\sum_{m=0}^{M-1}\frac{M^{m}}{m!}+\theta(M)\cdot\frac{M^{M}}{M!},
\end{align}
where $\frac{1}{3}\leq\theta(M)\leq\frac{1}{2}\ \forall\, M$. Therefore,
\begin{align}\label{eqn:Ramanujan-2}
\frac{1}{2}e^{2N}\geq\sum_{m=0}^{2N-1}\frac{(2N)^{m}}{m!}.
\end{align}

If
\begin{align}\label{eqn:exp_tail_analysis-2}
\sum_{m=N}^{2N-1}\frac{(2N)^{m}}{m!}\geq 31\cdot \sum_{m=0}^{N-1}\frac{(2N)^{m}}{m!},
\end{align}
then \eqn{Ramanujan-2} gives
\begin{align}
\frac{1}{2}e^{2N}\geq\sum_{m=0}^{N-1}\frac{(2N)^{m}}{m!}+31\cdot \sum_{m=0}^{N-1}\frac{(2N)^{m}}{m!}\quad\Rightarrow\quad\sum_{m=0}^{N-1}\frac{(2N)^{m}e^{-2N}}{m!}\leq\frac{1}{64}.
\end{align}
Hence it suffices to prove \eqn{exp_tail_analysis-2}, which holds provided
\begin{align}
\frac{(2N)^{m+N}}{(m+N)!}\geq 31\cdot \frac{(2N)^{m}}{m!}\quad\forall\,m\in\range{N}.
\end{align}
Moreover, since
\begin{align}
\frac{\frac{(2N)^{m+1+N}}{(m+1+N)!}/\frac{(2N)^{m+1}}{(m+1)!}}{\frac{(2N)^{m+N}}{(m+N)!}/\frac{(2N)^{m}}{m!}}=\frac{m+1}{m+1+N}<1\quad\forall\,m\in\range{N},
\end{align}
it suffices to prove
\begin{align}\label{eqn:exp_tail_analysis-3}
\frac{(2N)^{2N}}{(2N)!}\geq 31\cdot\frac{(2N)^{N}}{N!}.
\end{align}

Define $g(N):=\frac{(2N)^{2N}}{(2N)!}/\frac{(2N)^{N}}{N!}$. Since $g(13)\approx 38.3102>31$ and
\begin{align}
\frac{g(N+1)}{g(N)}=\frac{N+1}{2N+1}\cdot(1+\frac{1}{N})^{N}>\frac{1}{2}\cdot 2=1\quad \forall\,N,
\end{align}
we have $g(N)>31$ for $N\geq 13$. Therefore, \eqn{exp_tail_analysis-3} holds and \eqn{exp_tail_analysis} follows.
\end{proof}

\begin{lemma}\label{lem:Jordan_exp}
Consider the Lindblad operator $L$ acting in $\C^{N+1}$ such that
\begin{align}
L|0\>=0,\quad L|n\>=|n-1\>\quad\forall\,n\in\{1,\ldots,N\}.
\end{align}
Then for any time $t>0$, the Lindbladian $\mathcal{L}(\rho):=L\rho L^{\dagger}-\frac{1}{2}(L^{\dagger}L\rho+\rho L^{\dagger}L)$ satisfies
\begin{align}\label{eqn:Jordan_exp}
e^{\mathcal{L}t}(|N\>\<N|)=\sum_{m=0}^{N-1}\frac{t^{m}e^{-t}}{m!}|N-m\>\<N-m|+\Big(1-\sum_{m=0}^{N-1}\frac{t^{m}e^{-t}}{m!}\Big)|0\>\<0|.
\end{align}
\end{lemma}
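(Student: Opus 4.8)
The plan is to reduce the superoperator evolution to a classical continuous-time Markov chain on the populations. First I would record the elementary facts $L^\dagger L = \sum_{n=1}^{N}|n\rangle\langle n| = I - |0\rangle\langle 0|$ (so $L^\dagger L$ is the projector onto $\spn\{|1\rangle,\dots,|N\rangle\}$) and $L|n\rangle\langle n|L^\dagger = |n-1\rangle\langle n-1|$ for $n\ge 1$. Plugging these into \eqn{single-generator} gives
\begin{align}
\mathcal{L}(|n\rangle\langle n|) = |n-1\rangle\langle n-1| - |n\rangle\langle n| \quad (1\le n\le N), \qquad \mathcal{L}(|0\rangle\langle 0|) = 0.
\end{align}
In particular the real span of the population operators $\{|n\rangle\langle n|\}_{n=0}^{N}$ is invariant under $\mathcal{L}$, and since $|N\rangle\langle N|$ lies in it, we may write $e^{\mathcal{L}t}(|N\rangle\langle N|) = \sum_{n=0}^{N}p_n(t)\,|n\rangle\langle n|$ for some scalar functions $p_n(t)$.

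Next I would extract the ODE system. Differentiating and matching coefficients yields $\dot p_n = p_{n+1} - p_n$ for $1\le n\le N$ (with the convention $p_{N+1}\equiv 0$) and $\dot p_0 = p_1$, subject to $p_n(0) = \delta_{n,N}$. This is exactly the forward equation of a pure death process started at $N$ with unit jump rate and absorbing state $0$, so the expected answer is the Poisson distribution for the number of jumps. To prove it I would solve the triangular system from the top down using the integrating factor $e^{t}$: $\tfrac{\d}{\d t}\big(e^{t}p_{N-m}\big) = e^{t}p_{N-m+1}$. The base case $m=0$ gives $p_N(t) = e^{-t}$, and an induction on $m$ then gives $p_{N-m}(t) = \tfrac{t^m e^{-t}}{m!}$ for $0\le m\le N-1$, since $\tfrac{\d}{\d t}\big(e^{t}p_{N-m}\big) = e^{t}p_{N-m+1} = \tfrac{t^{m-1}}{(m-1)!}$ integrates (with zero initial value) to $\tfrac{t^m}{m!}$.

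Finally I would pin down $p_0$. Because $e^{\mathcal{L}t}$ is trace-preserving, $\sum_{n=0}^{N}p_n(t) = \Tr[|N\rangle\langle N|] = 1$, hence $p_0(t) = 1 - \sum_{m=0}^{N-1}\tfrac{t^m e^{-t}}{m!}$, which is \eqn{Jordan_exp}. (As a consistency check one can instead integrate $\dot p_0 = p_1 = \tfrac{t^{N-1}e^{-t}}{(N-1)!}$ directly and verify via a telescoping sum that the result agrees with $1 - \sum_{m=0}^{N-1}\tfrac{t^m e^{-t}}{m!}$.) There is no real obstacle here: the only points requiring a little care are justifying that the off-diagonal sectors never get populated — which is immediate from invariance of the population subspace — and correctly handling the absorbing boundary at $n=0$, where trace preservation does the bookkeeping.
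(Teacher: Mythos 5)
Your proof is correct and follows essentially the same route as the paper: both reduce the evolution to the classical master equation $\dot p_n = p_{n+1}-p_n$ on the invariant span of the population operators $|n\>\<n|$, and both recover the coefficient of $|0\>\<0|$ from trace preservation (the paper phrases this as the zero-row-sum property $e^{tA}\mathbf{e}=\mathbf{e}$). The only difference is cosmetic: you solve the triangular ODE system by an integrating-factor induction, while the paper computes $(A^{n})_{0,m}=\binom{n}{m}(-1)^{n+m}$ and sums the exponential series to obtain the same Poisson weights.
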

\begin{proof}
We have
\begin{align}
\mathcal{L}(|n\>\<n|)&=|n-1\>\<n-1|-|n\>\<n|\quad \forall\,n\in\{1,2,\ldots,N\}; \label{eqn:Jordan_exp-1} \\
\mathcal{L}(|0\>\<0|)&=0. \label{eqn:Jordan_exp-2}
\end{align}

Denote $C:=(|N\>\<N|,|N-1\>\<N-1|,\ldots,|0\>\<0|)^{\dagger}$, and let $A$ be the $(N+1)\times(N+1)$ matrix with nonzero entries $A_{n,n}=-1$ and $A_{n,n+1}=1\ \forall\, n\in\range{N}$. In other words,
\begin{align}
A=\left(
      \begin{array}{cccc}
        -1 & 1 & & \\
        & \ddots & \ddots & \\
        & & -1 & 1 \\
        & & & 0
      \end{array} \right).
\end{align}
Then $\mathcal{L}(C)=AC$, so
\begin{align}
e^{\mathcal{L}t}(C)=e^{tA}C.
\end{align}
Since every row of $A$ has zero sum, $\textbf{e}=(1,1,\ldots,1)^{\dagger}$ satisfies $A\textbf{e}=0$. Therefore, $e^{tA}\textbf{e}=\textbf{e}$, so
\begin{align}\label{eqn:Jordan_exp-3}
(e^{tA})_{0,N}=1-\sum_{m=0}^{N-1}(e^{tA})_{0,m}.
\end{align}
By induction, for arbitrary $n\in\N$ we have
\begin{align}
(A^{n})_{0,m}={n\choose m}(-1)^{n+m}\quad\forall\,m\in\range{N}.
\end{align}
Therefore,
\begin{align}\label{eqn:Jordan_exp-4}
(e^{tA})_{0,m}=\sum_{n=0}^{\infty}\frac{(A^{n})_{0,m}t^{n}}{n!}=\sum_{n=0}^{\infty}\frac{{n\choose m}(-1)^{n+m}t^{n}}{n!}=\frac{t^{m}}{m!}\sum_{n=m}^{\infty}\frac{(-1)^{n-m}t^{n-m}}{(n-m)!}=\frac{t^{m}e^{-t}}{m!}.
\end{align}
Combining \eqn{Jordan_exp-3} and \eqn{Jordan_exp-4}, we get
\begin{align}\label{eqn:Jordan_exp-5}
(e^{tA})_{0,N}=1-\sum_{m=0}^{N-1}\frac{t^{m}e^{-t}}{m!}.
\end{align}
Finally, \eqn{Jordan_exp} follows from \eqn{Jordan_exp-4} and \eqn{Jordan_exp-5}.
\end{proof}

\begin{proof}[Proof of \thm{no-fast-forward}]
We now construct a Lindbladian whose dynamics compute the parity of a string $s\in\{0,1\}^{N}$, analogous to the Hamiltonian constructed in \cite{berry2007efficient}. Specifically, let
\begin{align}\label{eqn:L_s-definition}
\mathcal{L}_{s}(\rho)=\frac{1}{2}\big(L_{s}\rho L_{s}^{\dagger}-\frac{1}{2}(L_{s}^{\dagger}L_{s}\rho+\rho L_{s}^{\dagger}L_{s})\big)
\end{align}
be a Lindbladian acting on the $2(N+1)$-dimensional Hilbert space $\mathcal{H}\otimes\C^{2}$, where $s=s_{1}s_{2}\cdots s_{N}$ is a binary string and
\begin{align}
L_{s}|0,j\>=0,\quad L_{s}|n,j\>=|n-1,j\oplus s_{n}\>\quad\forall\,n\in\{1,2,\ldots,N\},\, j\in\{0,1\}.
\end{align}

First we prove $\|\mathcal{L}_{s}\|_{\diamond}=1$. Another commonly used norm for superoperators is the \emph{1-to-1 norm} defined as
\begin{align}\label{eqn:1-to-1-norm}
\|\mathcal{T}\|_{1\rightarrow 1}:=\max_{\|\rho\|_{1}=1}\|\mathcal{T}(\rho)\|_{1}.
\end{align}
Let $\mathcal{I}_{m}$ denote the identity superoperator on $\C^{m}$. Since $2\mathcal{L}_{s}$ is a permutation of $\mathcal{L}\otimes\mathcal{I}_{2}$ where $\mathcal{L}$ is the Lindbladian in \lem{Jordan_exp},
\begin{align}
2\|\mathcal{L}_{s}\|_{\diamond}=\|\mathcal{L}\otimes\mathcal{I}_{2}\|_{\diamond}=\|\mathcal{L}\otimes\mathcal{I}_{2}\otimes\mathcal{I}_{2(N+1)}\|_{1\rightarrow 1}=\|\mathcal{L}\otimes\mathcal{I}_{4(N+1)}\|_{1\rightarrow 1}=\|\mathcal{L}\otimes\mathcal{I}_{N+1}\|_{1\rightarrow 1}
\end{align}
where the last equality comes from the fact that $\|\mathcal{L}\otimes\mathcal{I}_{m}\|_{1\rightarrow 1}=\|\mathcal{L}\otimes\mathcal{I}_{N+1}\|_{1\rightarrow 1}$ for any $m\geq N+1$ (see Section 3.3 of \cite{watrous2011theory}). Thus it suffices to show that
\begin{align}\label{eqn:L-1-to-1-norm}
\|\mathcal{L}\otimes\mathcal{I}_{N+1}\|_{1\rightarrow 1}=2.
\end{align}

On the one hand, $\big\|\mathcal{L}(|N\>\<N|)\big\|_{1}=\big\||N-1\>\<N-1|-|N\>\<N|\big\|_{1}=2$, so
\begin{align}\label{eqn:L-1-to-1-norm-geq}
\|\mathcal{L}\otimes\mathcal{I}_{N+1}\|_{1\rightarrow 1}\geq \|\mathcal{L}\|_{1\rightarrow 1}\geq 2.
\end{align}
On the other hand, for any pure state $|\Phi\>=\sum_{m,m'=0}^{N}b_{mm'}|m\>|m'\>$,
\begin{align}
\|(L\otimes I_{N+1})|\Phi\>\<\Phi|(L^{\dagger}\otimes I_{N+1})\|_{1}&=\big\|\sum_{m,n=1}^{N}\sum_{m',n'=0}^{N} b_{mm'}b_{nn'}^{*}|m-1\>\<n-1|\otimes|m'\>\<n'|\big\|_{1} \\
&=\big\|\sum_{m=1}^{N}\sum_{m'=0}^{N} b_{mm'}|m-1\>|m'\>\big\|^{2} \\
&=\sum_{m=1}^{N}\sum_{m'=0}^{N}|b_{mm'}|^{2} \\
&\leq\sum_{m=0}^{N}\sum_{m'=0}^{N}|b_{mm'}|^{2} \\
&=1.
\end{align}
Similarly,
\begin{align}
\|(L^{\dagger}\otimes I_{N+1})(L\otimes I_{N+1})|\Phi\>\<\Phi|\|_{1},\||\Phi\>\<\Phi|(L^{\dagger}\otimes I_{N+1})(L\otimes I_{N+1})\|_{1}\leq 1.
\end{align}
Therefore, for any density matrix $\rho\in\linop{L}(\C^{N+1}\otimes \C^{N+1})$, by subadditivity of the trace norm
\begin{align}\label{eqn:L-1-to-1-norm-leq}
\|\mathcal{L}\otimes\mathcal{I}_{N+1}(\rho)\|_{1}\leq 1+\frac{1}{2}(1+1)=2 \quad\Rightarrow\quad\|\mathcal{L}\otimes\mathcal{I}_{N+1}\|_{1\rightarrow 1}\leq 2.
\end{align}

Clearly \eqn{L-1-to-1-norm} follows from \eqn{L-1-to-1-norm-geq} and \eqn{L-1-to-1-norm-leq}. Therefore, we have shown $\|\mathcal{L}_{s}\|_{\diamond}=1$.

To complete the proof, we show how a simulation can be used to compute the parity of $s$.
Let $\textsf{PARITY}(s):=s_{1}\oplus s_{2}\oplus\cdots\oplus s_{N}$. We simulate $\mathcal{L}_{s}$ with initial state $|N,0\>\<N,0|$, and after time $t=4N$ we measure the system. By \lem{exp_tail_analysis}, \lem{Jordan_exp}, and \eqn{L_s-definition}, the probability that the measurement result is not $(0,\textsf{PARITY}(s))$ is
\begin{align}\label{eqn:not-parity-s-bound}
p\leq\frac{1}{64}.
\end{align}

In the simulation of $\mathcal{L}_{s}$, denote the final state of the ancilla by $\rho_{\anc}$. Let
\begin{align}
\rho_{s}&:=|\textsf{PARITY}(s)\>\<\textsf{PARITY}(s)|, \\
\rho_{s,\textrm{sim}}&:=\Tr_{\mathcal{H}}[e^{\mathcal{L}_{s}t}(|N,0\>\<N,0|)].
\end{align}
By the definition of $p$, we have
\begin{align}\label{eqn:parity-p-q}
\<\textsf{PARITY}(s)|\rho_{s,\textrm{sim}}|\textsf{PARITY}(s)\>\geq 1-p.
\end{align}
By \cite{beals2001quantum} and \cite{farhi1998limit}, the parity of $N$ bits requires $\frac{N}{2}$ bits of $s$ to compute within error $\frac{1}{4}$. Therefore, if we only learn fewer than $\frac{N}{2}$ bits of $s$,
\begin{align}\label{eqn:trace-distance-1}
D(\rho_{\anc},\rho_{s})\geq\frac{1}{4},
\end{align}
where $D(\rho,\sigma) := \frac{1}{2}\|\rho-\sigma\|_1$ is the trace distance.

On the other hand, we claim that the simulation accurately produces $\rho_{s}$. To see this, let $q:=\<1-\textsf{PARITY}(s)|\rho_{s,\textrm{sim}}|1-\textsf{PARITY}(s)\>$ and $r:=\<\textsf{PARITY}(s)|\rho_{s,\textrm{sim}}|1-\textsf{PARITY}(s)\>$. By \eqn{parity-p-q}, $q\leq p$. Write $\rho_{s,\textrm{sim}}$ in the basis $\{|\textsf{PARITY}(s)\>,|1-\textsf{PARITY}(s)\>\}$:
\begin{align}
\rho_{s,\textrm{sim}}=\left(
      \begin{array}{cc}
        1-q & r \\
        r^{*} & q
      \end{array} \right).
\end{align}
Since $\rho_{s,\textrm{sim}}\geq 0$, we have $|r|^{2}\leq q(1-q)$. Consequently,
\begin{align}
D(\rho_{s},\rho_{s,\textrm{sim}})&=\left\|\left(
      \begin{array}{cc}
        -q & r \\
        r^{*} & q
      \end{array} \right)\right\|_{1}=\sqrt{q^{2}+|r|^{2}}\leq \sqrt{q^{2}+q(1-q)}\leq \sqrt{p}\leq\frac{1}{8}, \label{eqn:trace-distance-2}
\end{align}
where the last inequality comes from \eqn{not-parity-s-bound}.

Therefore, by \eqn{trace-distance-1}, \eqn{trace-distance-2}, and the triangle inequality,
\begin{align}\label{eqn:no-fast-forward-separation}
D(\rho_{\anc},\rho_{s,\textrm{sim}})\geq D(\rho_{\anc},\rho_{s})-D(\rho_{s},\rho_{s,\textrm{sim}})\geq\frac{1}{4}-\frac{1}{8}=\frac{1}{8}.
\end{align}
A nonzero element of the matrix $A_{s}$ of $\mathcal{L}_{s}$ defined in \eqn{GKS'} has row coordinate $\big((n_{1}-1)+(N+1)(j_{1}\oplus s_{n_{1}}),n_{1}+(N+1)j_{1}\big)$ and column coordinate $\big((n_{2}-1)+(N+1)(j_{2}\oplus s_{n_{2}}),n_{2}+(N+1)j_{2}\big)$ for some $n_{1},n_{2}\in\range{N+1}$ and $j_{1},j_{2}\in\{0,1\}$. As a result, using the first query model considered in \thm{no-fast-forward}, one query to the Lindbladian can be simulated using at most two queries to the bits of $s$ ($s_{n_{1}}$ and $s_{n_{2}}$). With the second query model, one query to the Lindbladian can be simulated using a single query to a bit of $s$ because a nonzero element of the Lindblad operator $L_{s}$ has row coordinate $(n-1)+(N+1)(j\oplus s_{n})$ and column coordinate $n+(N+1)j$ for some $n\in\range{N+1}$ and $j\in\{0,1\}$. Therefore, \eqn{no-fast-forward-separation} shows that at least $\frac{N}{4}=\frac{t}{16}$ queries to the first black box, or at least $\frac{N}{2}=\frac{t}{8}$ queries to the second black box, are needed to simulate $\mathcal{L}_{s}$ for time $t=4N$ within precision $\frac{1}{8}$.
\end{proof}

\section{Conclusion and future work}
\label{sec:conclusion}

In this paper, we have developed quantum algorithms for efficiently simulating Lindbladians that are not necessarily local.  We introduced two approaches to this problem, one based on implementing sparse Stinespring isometries and another based on implementing short-time evolutions generated by sparse Lindblad operators.

We hope that future work will expand the scope of our algorithms to simulate broader and more unified classes of sparse Lindbladians.  We also hope these tools will prove useful for simulating realistic quantum systems and for developing novel quantum algorithms.

A concrete challenge for future work is to efficiently simulate the Davies master equation \cite{davies1974markovian}. For any given Hamiltonian and temperature, the Davies master equation describes a Markovian open system that converges to the Gibbs state of the Hamiltonian at that temperature.  While the quantum Metropolis algorithm \cite{temme2011quantum,yung2012quantum} can be used to prepare this Gibbs state, simulating the Davies equation would provide a method of simulating the approach to equilibrium. Note that the overcomplete GKS matrix for the Davies master equation is diagonal in the eigenbasis of its system Hamiltonian, but it is not obvious how to apply our methods in that basis.

Another natural question is to improve the complexity of simulating open quantum systems as a function of the allowed error, $\epsilon$.  The simulations presented in this paper have complexity $\poly(1/\epsilon)$, and it is natural to ask whether there is an algorithm with complexity $\poly(\log 1/\epsilon)$, as in the case of Hamiltonian simulation \cite{berry2014exponential}. After an initial version of our work was made public, reference \cite{cleve2016efficient} gave an algorithm with complexity $\poly(\log 1/\epsilon)$ for Lindbladians with sparse Lindblad operators. It remains open to find algorithms with complexity $\poly(\log 1/\epsilon)$ for Lindbladians implemented by sparse Stinespring isometries (such as the Lindbladians discussed in \sec{main}).

\section*{Acknowledgments}

We thank Richard Cleve for discussions that helped to motivate the simulation method presented in \sec{short-time}. We thank an anonymous reviewer for pointing out that third- and higher-order product formulas cannot be used for Lindbladian simulation. We also thank Marina Radulaski and Mark Wilde for identifying a typo in an earlier version of \prop{trotter-formula-second}. We acknowledge support from the Canadian Institute for Advanced Research and the National Science Foundation (grant 1526380).


\begin{appendix}

\section{Gram matrix decomposition for low-rank cases}
\label{append:low-rank-Gram}

In \sec{local}, we described how to simulate local Lindbladians using the sparse Stinespring isometry framework.  That simulation relied on an algorithm for computing the Gram vector corresponding to a given basis state, which we describe in detail here.

\begin{lemma}
Let $M$ be an $N\times N$ Gram matrix with rank $r$.  Suppose we are given a black box for the entries of $M$ and are told the coordinates of a full-rank principal submatrix.  Then \algo{low-rank-Gram} takes $x\in\range{N}$ as input and prepares the Gram vector $v_{x}$ of $M$ as a quantum state using $O(r^{3})$ time and $O(r^{2})$ queries to the entries of $M$ (in particular, both bounds are independent of $N$).
\end{lemma}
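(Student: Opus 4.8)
The plan is to exhibit an explicit classical procedure that, on input $x$, computes the coordinates of a representative Gram vector $v_x\in\C^r$ from the black box, and then prepares the normalized state $v_x/\|v_x\|$ by a generic single-register state-preparation circuit. Fix the convention $M_{xy}=\langle v_x\mid v_y\rangle$ for Gram vectors $v_x\in\C^r$ (standard inner product on $\C^r$), so that $M=G^{\dagger}G$ where $G$ is the $r\times N$ matrix with columns $v_x$; the argument is insensitive to the choice of convention. Write $R=\{i_1,\dots,i_r\}\subseteq\range N$ for the index set of the given full-rank principal submatrix $M_R$, where $(M_R)_{ab}=M_{i_a,i_b}$. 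Since $M$ is positive semidefinite, so is $M_R$, and being invertible it is positive definite, hence admits a Cholesky factorization $M_R=LL^{\dagger}$ with $L$ lower triangular and invertible.

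\emph{The algorithm.} Query the $O(r^{2})$ entries of $M_R$ (by Hermiticity $\binom{r}{2}+r$ of them determine it) and compute $L$ in $O(r^{3})$ arithmetic operations. Setting $v_{i_a}$ to be the $a$-th column of $L^{\dagger}$ (equivalently, the entrywise conjugate of the $a$-th row of $L$), one checks $\langle v_{i_a}\mid v_{i_b}\rangle=(LL^{\dagger})_{ab}=M_{i_a,i_b}$, so these are valid Gram vectors for the rows in $R$, and they form a basis of $\C^{r}$ since $M_R$ is invertible. On input $x$, query the $r$ entries $u_a:=M_{i_a,x}$ and solve the triangular system $Lv=u$ by forward substitution in $O(r^{2})$ time; the solution $v_x$ is by construction the unique vector in $\C^{r}$ with $\langle v_{i_a}\mid v_x\rangle=M_{i_a,x}$ for all $a\in\range r$. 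Finally compute $\|v_x\|$ and prepare the $\lceil\log_{2}r\rceil$-qubit state $v_x/\|v_x\|$; by standard state preparation \cite{shende2005synthesis} this uses $O(r)$ two-qubit gates, well within budget. In total the procedure makes $O(r^{2})+O(r)=O(r^{2})$ queries and runs in $O(r^{3})$ time, both independent of $N$. (When this routine is used inside the algorithm of \thm{sparse-Stinespring-isometry} with $x$ held in superposition, the first batch of queries is $x$-independent, the query of step three uses the black box on the input register, and the linear-algebra computation is performed reversibly into ancilla, followed by a state-preparation circuit controlled on the register holding $v_x$.)

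\emph{Correctness.} It remains to verify that the assignment $x\mapsto v_x$ produced above satisfies $\langle v_x\mid v_y\rangle=M_{xy}$ for \emph{all} $x,y\in\range N$, not merely for $y\in R$. Let $G$ again denote the $r\times N$ matrix with columns $v_y$. By construction $(G^{\dagger}G)_{i_a,y}=\langle v_{i_a}\mid v_y\rangle=M_{i_a,y}$ for all $a\in\range r$ and $y\in\range N$; restricting $y$ to $R$ gives $(G^{\dagger}G)_R=M_R$, so $G^{\dagger}G$ is positive semidefinite of rank exactly $r$ and its $r\times r$ principal submatrix on $R$ is invertible. For any positive semidefinite matrix $P$ of rank $r$ whose $r\times r$ principal submatrix $P_R$ on an index set $R$ is invertible, one has $P=C\,P_R^{-1}C^{\dagger}$, where $C$ is the submatrix of $P$ consisting of the columns indexed by $R$ (because those columns already span the column space of $P$). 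Applying this both to $M$ and to $G^{\dagger}G$, and noting that Hermiticity turns the row condition $(G^{\dagger}G)_{i_a,y}=M_{i_a,y}$ into equality of the two column-$R$ submatrices, we get $G^{\dagger}G=C\,M_R^{-1}C^{\dagger}=M$, where $C$ now denotes the column-$R$ submatrix of $M$. Hence the $v_x$ are genuine Gram vectors of $M$; in particular $\|v_x\|^{2}=M_{xx}$, so the normalization is well defined (we may discard the degenerate case $M_{xx}=0$, where $v_x=0$). The one nontrivial ingredient is exactly this completion identity — matching $v_x$ against only the $r$ basis vectors $v_{i_a}$ pins it down consistently with all of $M$ precisely because $M$ has rank $r$ and $M_R$ is a full-rank principal submatrix — while the remaining steps (Cholesky, triangular solve, generic state preparation, and the counting) are routine.
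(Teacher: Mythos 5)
Your proposal is correct and follows essentially the same route as the paper's Algorithm 1: compute Gram vectors for the given full-rank principal submatrix (you via Cholesky, the paper via spectral decomposition), then determine $v_x$ as the unique vector whose inner products with that basis reproduce the corresponding entries of $M$, with correctness resting on the fact that a rank-$r$ positive semidefinite matrix with invertible principal submatrix $M_R$ is determined by its rows indexed by $R$. Your triangular solve against the $r$ queried entries $M_{i_a,x}$ is a mild streamlining of the paper's step of diagonalizing $M_{S\cup\{x\}}$ and changing basis, and your explicit identity $P=C\,P_R^{-1}C^{\dagger}$ spells out the completion argument that the paper states more tersely; both versions meet the stated $O(r^{3})$ time and $O(r^{2})$ query bounds.
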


\begin{algorithm}
\DontPrintSemicolon
\SetKwInput{Input}{Input}
\SetKwInput{Output}{Output}
\Input{The rank $r$ of $M$; an index $\emph{x}\in[N]$; $\emph{S}\subseteq[N]$, where $|S|=r$ and the rows and columns specified by $S$ constitute a full-rank principal submatrix of $M$.}
\Output{Quantum state $|v_{x}\>$ proporitional to the $x$th Gram vector of $M$.}
	Let $M_{S}$ denote the principal submatrix of the rows and columns in $S$. Diagonalize it as $M_{S}=UDU^{\dagger}$. Denote $(v_{1},v_{2},\ldots,v_{r})^{\dagger}:=U\sqrt{D}$.\;
	\eIf{$x\in S$}{
        Let $v := v_{x}$.\;}
     {Let $M_{S\cup\{x\}}$ be the principal submatrix of the rows and columns in $S\cup\{x\}$. Diagonalize it as $M_{S\cup\{x\}}=U' \diag(d_1,\ldots,d_r,0)U'^{\dagger}$. Denote $$(v_{1}',v_{2}',\ldots,v_{r}',v_{x}')^{\dagger}:=U'\left(
      \begin{array}{ccc}
        \sqrt{d_{1}} & & \\
         & \ddots & \\
         & & \sqrt{d_{r}} \\
         & & 0
      \end{array} \right).$$\;
      \vspace{-12pt}
      Find the unitary $U_{x}$ such that $U_{x}v_{i}'=v_{i}\ \forall\, i\in\{1,\ldots,r\}$. Let $v:=U_{x}v_{x}'$.\; }
      Prepare a quantum state $|v\>$ proportional to $v$ using the method of reference \cite{shende2005synthesis}.\;
\caption{Preparing a Gram vector of a low-rank Gram matrix}
\label{algo:low-rank-Gram}
\end{algorithm}

\begin{proof}
\algo{low-rank-Gram} makes at most $(r+1)^{2}=O(r^{2})$ queries to the entries of $M$ because we only ask for $M_{S\cup\{x\}}$. The matrix diagonalization in line 1 and line 5 takes $O(r^{3})$ time, and line 6 can be done in $O(r^{3})$ time by computing the inverse matrix of $(v_{1}',\ldots,v_{r}')$ to get $(c_{1},\ldots,c_{r})^{\dagger}=(v_{1}',\ldots,v_{r}')^{-1}v_{x}'$, which is equivalent to $\sum_{i=1}^{r}c_{i}v_{i}'=v_{x}'$; therefore,
\begin{align}
v_{x}=U_{x}v_{x}'=\sum_{i=1}^{r}c_{i}U_{x}v_{i}'=\sum_{i=1}^{r}c_{i}v_{i}.
\end{align}

Now we show correctness. Without loss of generality, assume $S=\{1,2,\ldots,r\}$. Since $M$ has rank $r$ and $M_{S}$ is a submatrix of $M$ with full-rank $r$, all rows in $M$ can be written as a linear combination of all rows in $S$. Therefore, if we can show that $v_{i}^{\dagger}v_{x}=M_{ix}$ for arbitrary $i\in\{1,\ldots,r\}$ and $x\in\{1,\ldots,N\}$, then we have $v_{y}^{\dagger}v_{x}=M_{yx}$ for arbitrary $x,y\in\{1,\ldots,N\}$, so we have found the correct Gram vectors.

Since
\begin{align}
v_{i}'^{\dagger}v_{x}'=v_{i}^{\dagger}U_{x}U_{x}^{\dagger}v_{x}=v_{i}^{\dagger}v_{x},
\end{align}
it suffices to show that $v_{i}'^{\dagger}v_{x}'=M_{ix}$ for arbitrary $i\in\{1,\ldots,r\}$ and $x\in\{1,\ldots,N\}$. But this is trivial because line 5 of the algorithm promises
\begin{align}
&(v_{1}',v_{2}',\ldots,v_{r}',v_{x}')^{\dagger}(v_{1}',v_{2}',\ldots,v_{r}',v_{x}') \nonumber \\
&\quad=U'\left(
      \begin{array}{ccc}
        \sqrt{d_{1}} & & \\
         & \ddots & \\
         & & \sqrt{d_{r}} \\
         & & 0
      \end{array} \right)\left(
      \begin{array}{cccc}
        \sqrt{d_{1}} & & & \\
         & \ddots & & \\
         & & \sqrt{d_{r}} & 0 \\
      \end{array} \right)U'^{\dagger} \\
&\quad=U' \diag(d_1,\ldots,d_r,0) U'^{\dagger} \\
&\quad=M_{S\cup\{x\}},
\end{align}
which gives $v_{i}'^{\dagger}v_{x}'=M_{ix}$.

Finally, by reference \cite{shende2005synthesis} we can prepare $v_{x}$ as a quantum state in $O(r)$ additional time, which is dominated by the $O(r^{3})$ time of computing $v_{x}$.
\end{proof}

\section{Decomposition of sparse diagonal Lindbladians}\label{append:sparse-diagonal}

In this appendix we prove \lem{a-decomposition}, which states that every $d$-sparse-diagonal Lindbladian can be written as the sum of at most $3d^2$ strongly 1-sparse-diagonal Lindbladians, and that this decomposition can be found with constant overhead of queries using the black box in \thm{d-sparse}.

We may assume that the diagonal elements of the matrix $a$ characterizing the diagonal Lindbladian are all 0 because, as we have shown in \lem{1-sparse-a}, we can efficiently simulate a diagonal Lindbladian with diagonal matrix $a$, and efficient Lindbladian simulation is closed under positive linear combination by \prop{trotter-formula-second}.

First we show a decomposition without demanding strong sparsity, as follows.

\begin{lemma}\label{lem:d-to-1}
Every $d$-sparse-diagonal Lindbladian can be written as the sum of at most $d^{2}$ 1-sparse-diagonal Lindbladians with constant overhead in queries using the black box in \thm{d-sparse}.
\end{lemma}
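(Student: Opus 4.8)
\textbf{Proof proposal for \lem{d-to-1}.}
The plan is to reduce the statement to a \emph{local} bipartite edge-coloring argument. Writing the matrix $a$ that characterizes $\mathcal{L}$ via \eqn{GKS'-diagonal}, for each nonzero position $(k,l)$ let $\mathcal{L}_{(k,l)}$ denote the diagonal Lindbladian whose matrix has $a_{k,l}$ in position $(k,l)$ and zeros elsewhere, so that $\mathcal{L}=\sum_{(k,l):\,a_{k,l}\ne 0}\mathcal{L}_{(k,l)}$. For any set $P$ of nonzero positions in which no two positions share a row index and no two share a column index, the submatrix of $a$ supported on $P$ has at most one nonzero entry in each row and each column, so $\sum_{(k,l)\in P}\mathcal{L}_{(k,l)}$ is a $1$-sparse-diagonal Lindbladian. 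Hence it suffices to partition the nonzero positions of $a$ into at most $d^{2}$ such ``partial matchings,'' in a way that can be exhibited with $O(1)$ queries per piece.

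The key step is to define the coloring. For a nonzero position $(k,l)$, query the black box of \thm{d-sparse} on row $k$; this returns the (at most $d$) columns in which row $k$ is nonzero, which we view in sorted order, and we let $i\in\range{d}$ be the rank of $l$ in this list. Symmetrically, query column $l$ and let $j\in\range{d}$ be the rank of $k$ among the rows in which column $l$ is nonzero. Assign $(k,l)$ the color $(i,j)\in\range{d}\times\range{d}$. If two distinct nonzero positions $(k,l)$ and $(k,l')$ lie in the same row, then $l\ne l'$ gives them distinct ranks in the fixed neighbor list of row $k$, so their first color coordinates differ; symmetrically two nonzeros in a common column get distinct second coordinates. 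Thus each of the at most $d^{2}$ color classes is a partial matching in the sense above, and the corresponding $1$-sparse-diagonal Lindbladians sum to $\mathcal{L}$.

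Finally I would verify the query overhead on both sides of the decomposition. Computing the color of a single nonzero uses two black-box queries. For the subsequent simulation (each piece handled through \lem{a-decomposition} and \lem{1-sparse-a}, combined via \prop{trotter-formula-second}), the color-$(i,j)$ component needs a black box that on an index $x$ returns its unique neighbor in that component together with the relevant entries of $a$: if $x$ is a row index, query row $x$, take its $i$-th nonzero (if any), say at column $l$, then query column $l$ to check whether $x$ has rank $j$ there, retaining the entry only in that case; the column-index case is symmetric. Each such call costs $O(1)$ queries to the original black box, giving constant overhead overall. The underlying combinatorics is just bipartite edge coloring, where K\"onig's theorem would in fact yield only $d$ colors; the reason for the weaker $d^{2}$ bound, and the only genuine subtlety, is that this particular coloring is purely local, so that both the decomposition and the black boxes of the resulting pieces remain compatible with the query model of \thm{d-sparse}.
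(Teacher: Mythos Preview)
Your proposal is correct and follows essentially the same approach as the paper: the paper likewise assigns to each nonzero $(k,l)$ the pair $(\NextIdx,\PrevIdx)$---the rank of $l$ among the nonzeros in row $k$ and the rank of $k$ among the nonzeros in column $l$---and argues exactly as you do that no two entries in a common row or column can share a color. Your bipartite edge-coloring phrasing and the remark about K\"onig's theorem are a nice gloss, but the underlying coloring and the query-overhead analysis coincide with the paper's.
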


\begin{proof}
We cast the problem in terms of a directed graph $G=(V,E)$ where $V=\{v_{1},v_{2},\dots,v_{N}\}$. If $a_{ij}\neq 0$ in the $a$ matrix, then there is a directed edge in $E$ that goes from $v_{i}$ to $v_{j}$. Since $a$ is $d$-sparse, each vertex in $V$ has in-degree and out-degree at most $d$. Our aim is to color the edges of the graph with at most $d^{2}$ colors such that in the subgraph including only the edges of any given color, each vertex has in-degree and out-degree at most $1$ in that subgraph (that is, the subgraph of any given color corresponds to a 1-sparse $a$).

Let $(u,v)$ denote the directed edge from vertex $u$ to vertex $v$. For each $i\in\{1,\dots,N\}$, let $\num(v_{i}):=i$. Define
\begin{align}
\Next(u)&:=\{v:(u,v)\in E\}, \\
\Prev(v)&:=\{u:(u,v)\in E\}.
\end{align}
If $v\in \Next(u)$, define
\begin{align}
\NextIdx(u,v):=|\{i:i<\num(v),v_{i}\in \Next(u)\}|;
\end{align}
if $u\in \Next(v)$, define
\begin{align}
\PrevIdx(u,v):=|\{i:i<\num(u),u_{i}\in \Prev(v)\}|.
\end{align}
For each pair of $(u,v)\in E$, since $\num(v)$ is excluded in $\NextIdx(u,v)$ and $\num(u)$ is excluded in $\PrevIdx(u,v)$, we have
\begin{align}\label{eqn:lem-bound}
0\leq \NextIdx(u,v),\PrevIdx(u,v)\leq d-1.
\end{align}
Then we assign the edge $(u,v)$ the color $d\cdot \NextIdx(u,v)+\PrevIdx(u,v)$.

We claim that for the edges in the subgraph of any given color, each vertex has in-degree at most 1. If not, say $(u,v)$ and $(u',v)$ are edges in the same subgraph where $u\neq u'$. This gives $d\cdot \NextIdx(u,v)+\PrevIdx(u,v)=d\cdot \NextIdx(u',v)+\PrevIdx(u',v)$, which by \eqn{lem-bound} leads to $\NextIdx(u,v)=\NextIdx(u',v)$ and $\PrevIdx(u,v)=\PrevIdx(u',v)$. From the latter, we get $u=u'$, a contradiction. By an analogous argument, the out-degree of each vertex in a given subgraph is at most 1. Therefore, each subgraph corresponds to a 1-sparse-diagonal Lindbladian. By \eqn{lem-bound}, we get at most $d^{2}$ subgraphs.

For each edge $(u,v)\in E$, the coloring only needs to ask the black box stated in \thm{d-sparse} about the neighbors of $u$ and $v$, so the decomposition only costs constant overhead in queries to the black box.  The resulting black box for each 1-sparse-diagonal Lindbladian takes a color and a vertex as input and outputs the edge with the color that is adjacent to the vertex.
\end{proof}

It remains to decompose each 1-sparse-diagonal term into strongly 1-sparse-diagonal pieces.

\begin{lemma}\label{lem:1-to-pairwise}
Every $1$-sparse-diagonal Lindbladian can be written as the sum of at most 3 strongly 1-sparse-diagonal Lindbladians with constant overhead in queries using the black box of the $1$-sparse-diagonal Lindbladian stated in \lem{d-to-1}.
\end{lemma}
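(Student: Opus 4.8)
The plan is to recast the claim as an edge-coloring problem. Since we have already reduced to the case where the diagonal of $a$ vanishes, a $1$-sparse-diagonal Lindbladian is described by a directed graph $G=(V,E)$, where $(u,v)\in E$ exactly when $a_{u,v}\neq 0$; by $1$-sparsity every vertex has in-degree and out-degree at most $1$ and there are no self-loops, so $G$ is a vertex-disjoint union of directed simple paths and directed simple cycles. A \emph{strongly} $1$-sparse-diagonal Lindbladian corresponds precisely to a subgraph of $G$ whose connected components are single edges and digons (two-cycles $u\to v\to u$): for such a subgraph one may take the involution $\nu$ to swap the two endpoints of each component and fix every remaining vertex. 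Thus it suffices to partition $E$ into at most three subgraphs of this restricted type.

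First I would pass to the underlying simple undirected graph $\bar{G}$, declaring $\{u,v\}$ an edge of $\bar{G}$ whenever $(u,v)\in E$ or $(v,u)\in E$. Because the in- and out-degrees in $G$ are at most $1$, $\bar{G}$ has maximum degree at most $2$, so it is a disjoint union of (undirected) paths and cycles. Next I would properly $3$-edge-color $\bar{G}$: along each path and each even cycle the colors simply alternate with two colors, and each odd cycle needs a third color on exactly one of its edges, so three colors always suffice. Lifting this coloring back to $G$---each directed edge inherits the color of its image in $\bar{G}$, and the two edges of a digon receive the common color of that undirected edge---every color class of $G$ becomes a matching of $\bar{G}$, hence a disjoint union of single edges and digons. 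This produces a decomposition $a=a^{(1)}+a^{(2)}+a^{(3)}$ into at most three strongly $1$-sparse matrices with nonnegative entries and vanishing diagonal, and by \prop{trotter-formula-second} the three associated Lindbladians sum to the given one.

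It then remains to implement, for each $\gamma\in\{1,2,3\}$, the black box demanded by \lem{1-sparse-a} for the piece $a^{(\gamma)}$ using only $O(1)$ queries to the black box of the $1$-sparse-diagonal Lindbladian (which, by \lem{d-to-1}, already costs $O(1)$ queries to the original $d$-sparse black box). On input $x$ that black box must output $\nu_{\gamma}(x)$, $a^{(\gamma)}_{x,x}=0$, and $a^{(\gamma)}_{\nu_{\gamma}(x),x}=a_{\nu_{\gamma}(x),x}$, so the only genuine task is to decide which of the at most two edges of $\bar{G}$ at $x$ carries color $\gamma$---that is, to evaluate the edge-coloring locally. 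I would attack this by separating the two cases: on a path, a vertex of in-degree $0$ (a source) or out-degree $0$ (a sink) is detected with one query and anchors the alternation; on a cycle one instead anchors the alternation at a canonical vertex and uses the length's parity to decide whether the third color is needed, exploiting the explicit index-based form of the coloring inherited from \lem{d-to-1}.

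The hard part will be exactly this last step: an alternating $2$-coloring of a long path or cycle is a global object, so reading off a single edge's color from a \emph{constant} number of queries is the crux of the lemma, and I expect the argument to require a carefully designed local rule (using the vertex indices and the structure coming from \lem{d-to-1}) rather than any naive traversal, together with a uniform treatment of the odd-cycle case, where the parity obstruction to $2$-coloring forces the third color. The purely combinatorial content---that three strongly $1$-sparse pieces always suffice---is straightforward; essentially all the difficulty lies in keeping the query overhead constant.
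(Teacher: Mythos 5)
Your decomposition is exactly the paper's: model the nonzero pattern of $a$ as a directed graph whose components are paths, cycles, and isolated vertices; alternate two colors along paths and even cycles, give both edges of a digon one color, and spend a third color on one edge of each odd cycle, so that every color class is strongly 1-sparse. The step you single out as the crux---evaluating the alternating coloring locally with $O(1)$ queries---is not elaborated in the paper either: its proof simply asserts that coloring an edge $(u,v)$ ``only needs to ask the black box about the neighbors of $u$ and $v$,'' with no explicit local rule for determining the alternation parity along a long path or cycle, so your proposal matches the published argument in both its content and its level of detail on that point.
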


\begin{proof}
Again consider a directed graph $G=(V,E)$ where $V=\{v_{1},v_{2},\dots,v_{N}\}$ and $E$ contains a directed edge from $v_i$ to $v_j$ if $a_{ij}\neq 0$. Since $a$ is 1-sparse, each vertex in $V$ has in-degree and out-degree at most 1. Our aim is to 3-color the edges such that for the subgraph of any given color, each component is either an isolated vertex, a directed edge, or a directed cycle of length 2 (that is, each subgraph corresponds to a strongly 1-sparse $a$).

Observe that any component of $G$ must be a directed path, a directed cycle, or an isolated vertex. For directed paths, we can color the edges with two alternating colors. For directed cycles with length 2, we can assign both edges the same color.  For directed cycles with even length at least 4, we can color the edges with alternating colors.  For directed cycles with odd length at least 3, we can color the edges with alternating colors, except that the final edge must be assigned a third color.

These assignments can be made using only three colors, and each resulting subgraph corresponds to a strongly 1-sparse-diagonal Lindbladian.

For each edge $(u,v)\in E$, the coloring only needs to ask the black box stated in \lem{d-to-1} about the neighbors of $u$ and $v$, so the decomposition only incurs constant overhead in queries to the black box, and it produces a black box for each resulting strongly 1-sparse-diagonal Lindbladian as stated in \lem{1-sparse-a}.
\end{proof}

\lem{a-decomposition} follows by directly combining \lem{d-to-1} and \lem{1-to-pairwise}.

\section{Proofs of lemmas from \sec{k-sparse-generator}}

\subsection{Proof of \lem{E-first-order-approximation}}
\label{sec:k-isometry-part-1}

\begin{proof}
Since $L|j\>=\sum_{i=1}^{k}c_{j,i}|\nu_{i}(j)\>$ for all $j\in\range{N}$,
\begin{align}
L=\sum_{j=0}^{N-1}\sum_{i=1}^{k}c_{j,i}|\nu_{i}(j)\>\<j|;\qquad L^{\dagger}=\sum_{j=0}^{N-1}\sum_{i=1}^{k}c_{j,i}^{*}|j\>\<\nu_{i}(j)|.
\end{align}
Thus for any $m,n\in\range{N}$,
\begin{align}
(1+\epsilon\mathcal{L})(|m\>\<n|)&=-\frac{\epsilon}{2}\sum_{i,j=1}^{k}\big(c_{m,i}c_{\nu_{j}^{-1}(\nu_{i}(m)),j}^{*}|\nu_{j}^{-1}(\nu_{i}(m))\>\<n| + c_{n,i}^{*}c_{\nu_{j}^{-1}(\nu_{i}(n)),j}|m\>\<\nu_{j}^{-1}(\nu_{i}(n))|\big) \nonumber \\
&\quad +\epsilon\sum_{i,j=1}^{k}c_{m,i}c_{n,j}^{*}|\nu_{i}(m)\>\<\nu_{j}(n)|+|m\>\<n|. \label{eqn:k-short}
\end{align}

On the other hand,
\begin{align}
\Tr_{\anc}[V_{\epsilon}|m\>\<n|V_{\epsilon}^{\dagger}]&=\frac{\epsilon^{2}}{4}\sum_{1\leq i\neq j\leq k,1\leq i'\neq j'\leq k}c_{m,i}c_{\nu_{j}^{-1}(\nu_{i}(m)),j}^{*}c_{n,i'}^{*}c_{\nu_{j'}^{-1}(\nu_{i'}(n)),j'}|\nu_{j}^{-1}(\nu_{i}(m))\>\<\nu_{j'}^{-1}(\nu_{i'}(n))| \nonumber \\
&\quad-\epsilon\sum_{1\leq i\neq j\leq k}\frac{1}{2}c_{m,i}c_{\nu_{j}^{-1}(\nu_{i}(m)),j}^{*}|\nu_{j}^{-1}(\nu_{i}(m))\>\<n| \nonumber  \\
&\quad-\epsilon\sum_{1\leq i\neq j\leq k}\frac{1}{2}c_{n,i}^{*}c_{\nu_{j}^{-1}(\nu_{i}(n)),j}|m\>\<\nu_{j}^{-1}(\nu_{i}(n))| \nonumber  \\
&\quad+\epsilon\sum_{i,j=1}^{k} c_{m,i}c_{n,j}^{*}|\nu_{i}(m)\>\<\nu_{j}(n)|+|m\>\<n|.
\end{align}
Taking the difference between the above equality and \eqn{k-short}, we have
\begin{align}
&(1+\epsilon\mathcal{L})(|m\>\<n|)-\Tr_{\anc}[V_{\epsilon}|m\>\<n|V_{\epsilon}^{\dagger}] \nonumber \\
&\quad=-\frac{\epsilon^{2}}{4}\sum_{1\leq i\neq j\leq k,1\leq i'\neq j'\leq k}c_{m,i}c_{\nu_{j}^{-1}(\nu_{i}(m)),j}^{*}c_{n,i'}^{*}c_{\nu_{j'}^{-1}(\nu_{i'}(n)),j'}|\nu_{j}^{-1}(\nu_{i}(m))\>\<\nu_{j'}^{-1}(\nu_{i'}(n))|. \label{eqn:isometry-single-state}
\end{align}
By subadditivity of the trace norm, it suffices to show that for any pure state
\begin{align}
|\Phi\>=\sum_{m,m'=0}^{N-1}b_{mm'}|m\>|m'\>
\end{align}
with $\sum_{m,m'=0}^{N-1}|b_{mm'}|^{2}=1$, we have
\begin{align}\label{eqn:isometry-single-state-equivalence}
\big\|((1+\epsilon\mathcal{L})\otimes\mathcal{I}_{N})(|\Phi\>\<\Phi|)-\Tr_{\anc}[(V_{\epsilon}\otimes I_{N})|\Phi\>\<\Phi|(V_{\epsilon}^{\dagger}\otimes I_{N})]\big\|_{1}=O(k^{4}\epsilon^{2}).
\end{align}

Indeed,
\begin{align}
&\Big\|\sum_{m,m',n,n'}b_{mm'}b_{nn'}^{*}\sum_{i\neq j,i'\neq j'}c_{m,i}c_{\nu_{j}^{-1}(\nu_{i}(m)),j}^{*}c_{n,i'}^{*}c_{\nu_{j'}^{-1}(\nu_{i'}(n)),j'}|\nu_{j}^{-1}(\nu_{i}(m))\>\<\nu_{j'}^{-1}(\nu_{i'}(n))|\otimes |m'\>\<n'|\Big\|_{1} \nonumber \\
&\quad\leq\sum_{i\neq j,i'\neq j'}\Big(\Big\|\sum_{m,m'}b_{mm'}c_{m,i}c_{\nu_{j}^{-1}(\nu_{i}(m)),j}^{*}|\nu_{j}^{-1}(\nu_{i}(m))\>|m'\>\Big\| \nonumber \\
&\qquad\qquad\qquad\cdot\Big\|\sum_{n,n'}b_{nn'}^{*}c_{n,i'}^{*}c_{\nu_{j'}^{-1}(\nu_{i'}(n)),j'}\<\nu_{j'}^{-1}(\nu_{i'}(n))|\<n'|\Big\|\Big) \\
&\quad=\sum_{i\neq j,i'\neq j'}\sqrt{\sum_{m,m'}|b_{mm'}|^{2}|c_{m,i}|^{2}|c_{\nu_{j}^{-1}\nu_{i}(m),j}|^{2}}\sqrt{\sum_{n,n'}|b_{nn'}|^{2}|c_{n,i'}|^{2}|c_{\nu_{j'}^{-1}\nu_{i'}(n),j'}|^{2}} \\
&\quad\leq\sum_{i\neq j,i'\neq j'}\sqrt{\sum_{m,m'}|b_{mm'}|^{2}\|L\|_{\max}^{4}}\sqrt{\sum_{n,n'}|b_{nn'}|^{2}\|L\|_{\max}^{4}} \\
&\quad=k^{2}(k-1)^{2},
\end{align}
and by \eqn{isometry-single-state} this implies \eqn{isometry-single-state-equivalence}.
\end{proof}

\subsection{Proof of \lem{square-error-spectral-norm}}
\label{sec:k-isometry-part-2}

\begin{proof}
We have
\begin{align}
V_{\epsilon}^{\dagger}V_{\epsilon}&=\sum_{m,n}|n\>\<m|\Big(\sum_{i=1}^{k}\sqrt{\epsilon}c_{n,i}^{*}\<\nu_{i}(n)|\<1|+\<n|\<0|-\epsilon\sum_{i_{1}\neq i_{2}}\frac{1}{2}c_{n,i_{1}}^{*}\cdot c_{\nu_{i_{2}}^{-1}(\nu_{i_{1}}(n)),i_{2}}\<\nu_{i_{2}}^{-1}(\nu_{i_{1}}(n))|\<0|\Big) \nonumber \\
&\quad\cdot\Big(\sum_{j=1}^{k}\sqrt{\epsilon}c_{m,j}|\nu_{j}(m)\>|1\>+|m\>|0\>-\epsilon\sum_{j_{1}\neq j_{2}}\frac{1}{2}c_{m,j_{1}}\cdot c_{\nu_{j_{2}}^{-1}(\nu_{j_{1}}(m)),i_{2}}^{*}|\nu_{j_{2}}^{-1}(\nu_{i_{1}}(m))\>|0\>\Big) \\
&=\sum_{m,n}|n\>\<m|\Big(\frac{\epsilon^{2}}{4}\sum_{\nu_{i_{2}}^{-1}(\nu_{i_{1}}(n))=\nu_{j_{2}}^{-1}(\nu_{j_{1}}(m))}c_{n,i_{1}}^{*}c_{\nu_{i_{2}}^{-1}(\nu_{i_{1}}(n)),i_{2}}c_{m,j_{1}}c_{\nu_{j_{2}}^{-1}(\nu_{j_{1}}(m)),j_{2}}^{*} \nonumber \\
&\quad-\frac{\epsilon}{2}\sum_{\nu_{j_{2}}^{-1}(\nu_{j_{1}}(m))=n}c_{m,j_{1}}c_{\nu_{j_{2}}^{-1}(\nu_{j_{1}}(m)),j_{2}}^{*}-\frac{\epsilon}{2}\sum_{\nu_{i_{2}}^{-1}(\nu_{i_{1}}(n))=m}c_{n,i_{1}}^{*}c_{\nu_{i_{2}}^{-1}(\nu_{i_{1}}(n)),i_{2}} \nonumber \\
&\quad+\epsilon\sum_{\nu_{i}(n)=\nu_{j}(m)}c_{n,i}^{*}c_{m,j}+\delta_{nm}\Big).
\end{align}
Since
\begin{align}
&\epsilon\sum_{\nu_{i}(n)=\nu_{j}(m)}c_{n,i}^{*}c_{m,j}-\frac{\epsilon}{2}\sum_{\nu_{j_{2}}^{-1}(\nu_{j_{1}}(m))=n}c_{m,j_{1}}c_{\nu_{j_{2}}^{-1}(\nu_{j_{1}}(m)),j_{2}}^{*}-\frac{\epsilon}{2}\sum_{\nu_{i_{2}}^{-1}(\nu_{i_{1}}(n))=m}c_{n,i_{1}}^{*}c_{\nu_{i_{2}}^{-1}(\nu_{i_{1}}(n)),i_{2}} \nonumber \\
&\quad=\epsilon\sum_{\nu_{i}(n)=\nu_{j}(m)}c_{n,i}^{*}c_{m,j}-\frac{\epsilon}{2}\sum_{\nu_{j_{1}}(m)=\nu_{j_{2}}(n)}c_{m,j_{1}}c_{n,j_{2}}^{*}-\frac{\epsilon}{2}\sum_{\nu_{i_{1}}(n)=\nu_{i_{2}}(m)}c_{n,i_{1}}^{*}c_{m,i_{2}} \\
&\quad=\epsilon\sum_{\nu_{i}(n)=\nu_{j}(m)}c_{n,i}^{*}c_{m,j}-\frac{\epsilon}{2}\sum_{\nu_{i}(n)=\nu_{j}(m)}c_{n,i}^{*}c_{m,j}-\frac{\epsilon}{2}\sum_{\nu_{i}(n)=\nu_{j}(m)}c_{n,i}^{*}c_{m,j} \\
&\quad=0,
\end{align}
we know that
\begin{align}\label{eqn:difference-with-identity}
V_{\epsilon}^{\dagger}V_{\epsilon}-I=\frac{\epsilon^{2}}{4}\sum_{m,n}|n\>\<m|\sum_{\nu_{i_{2}}^{-1}(\nu_{i_{1}}(n))=\nu_{j_{2}}^{-1}(\nu_{j_{1}}(m))}c_{n,i_{1}}^{*}c_{\nu_{i_{2}}^{-1}(\nu_{i_{1}}(n)),i_{2}}c_{m,j_{1}}c_{\nu_{j_{2}}^{-1}(\nu_{j_{1}}(m)),j_{2}}^{*}.
\end{align}

To prove that $\|V_{\epsilon}^{\dagger}V_{\epsilon}-I\|=O(k^{4}\epsilon^{2})$, it suffices to show that for any pure state $|\psi\>=\sum_{m=0}^{N-1}b_{m}|m\>$ where $\sum_{m=0}^{N-1}|b_{m}|^{2}=1$,
\begin{align}
\<\psi|V_{\epsilon}^{\dagger}V_{\epsilon}-I|\psi\>=O(k^{4}\epsilon^{2}).
\end{align}
Denote $l:=\nu_{i_{2}}^{-1}(\nu_{i_{1}}(n))=\nu_{j_{2}}^{-1}(\nu_{j_{1}}(m))$. Then $m=\nu_{j_{1}}^{-1}(\nu_{j_{2}}(l))$ and $n=\nu_{i_{1}}^{-1}(\nu_{i_{2}}(l))$. Plugging these into \eqn{difference-with-identity}, we get
\begin{align}
\<\psi|V_{\epsilon}^{\dagger}V_{\epsilon}-I|\psi\>&=\frac{\epsilon^{2}}{4}\sum_{l=0}^{N-1}\big|\sum_{i_{1},i_{2}=1}^{k}b_{\nu_{i_{1}}^{-1}(\nu_{i_{2}}(l))}c_{\nu_{i_{1}}^{-1}(\nu_{i_{2}}(l)),i_{1}}^{*}c_{l,i_{2}}\big|^{2} \\
&\leq\frac{\epsilon^{2}}{4}\sum_{l=0}^{N-1}\big(\sum_{i_{1},i_{2}=1}^{k}|b_{\nu_{i_{1}}^{-1}(\nu_{i_{2}}(l))}|^{2}\big)\big(\sum_{i_{1},i_{2}=1}^{k}|c_{\nu_{i_{1}}^{-1}(\nu_{i_{2}}(l)),i_{1}}^{*}c_{l,i_{2}}|^{2}\big) \\
&\leq\frac{\epsilon^{2}}{4} k^{2}\|L\|_{\max}^{4}\sum_{l=0}^{N-1}\sum_{i_{1},i_{2}=1}^{k}|b_{\nu_{i_{1}}^{-1}(\nu_{i_{2}}(l))}|^{2} \\
&=\frac{\epsilon^{2}}{4} k^{2}\sum_{i_{1},i_{2}=1}^{k}\sum_{l=0}^{N-1}|b_{\nu_{i_{1}}^{-1}(\nu_{i_{2}}(l))}|^{2} \\
&=\frac{k^{4}\epsilon^{2}}{4} \\
&= O(k^{4}\epsilon^{2}),
\end{align}
where the first inequality comes from the Cauchy-Schwartz inequality.
\end{proof}
\end{appendix}


\providecommand{\bysame}{\leavevmode\hbox to3em{\hrulefill}\thinspace}


\begin{thebibliography}{10}

\bibitem{aharonov2003adiabatic}
Dorit Aharonov and Amnon Ta-Shma, \emph{Adiabatic quantum state generation and
  statistical zero knowledge}, Proceedings of the 35th Annual ACM Symposium on
  Theory of Computing, pp.~20--29, ACM, 2003,
  \mbox{\href{http://arxiv.org/abs/quant-ph/0301023}{arXiv:quant-ph/0301023}}.

\bibitem{alagic2005decoherence}
Gorjan Alagic and Alexander Russell, \emph{Decoherence in quantum walks on the
  hypercube}, Physical Review A \textbf{72} (2005), no.~6, 062304,
  \mbox{\href{http://arxiv.org/abs/quant-ph/0501169}{arXiv:quant-ph/0501169}}.

\bibitem{babbush2015exponentially}
Ryan Babbush, Dominic~W. Berry, Ian~D. Kivlichan, Annie~Y. Wei, Peter~J. Love,
  and Al{\'a}n Aspuru-Guzik, \emph{Exponentially more precise quantum
  simulation of fermions {I}{I}: {Q}uantum chemistry in the {C}{I} matrix
  representation}, 2015,
  \mbox{\href{http://arxiv.org/abs/1506.01029}{arXiv:1506.01029}}.

\bibitem{barenco1995elementary}
Adriano Barenco, Charles~H. Bennett, Richard Cleve, David~P. DiVincenzo, Norman
  Margolus, Peter Shor, Tycho Sleator, John~A. Smolin, and Harald Weinfurter,
  \emph{Elementary gates for quantum computation}, Physical Review A
  \textbf{52} (1995), no.~5, 3457,
  \mbox{\href{http://arxiv.org/abs/quant-ph/9503016}{arXiv:quant-ph/9503016}}.

\bibitem{beals2001quantum}
Robert Beals, Harry Buhrman, Richard Cleve, Michele Mosca, and Ronald de~Wolf,
  \emph{Quantum lower bounds by polynomials}, Journal of the ACM \textbf{48}
  (2001), no.~4, 778--797,
  \mbox{\href{http://arxiv.org/abs/quant-ph/9802049}{arXiv:quant-ph/9802049}}.

\bibitem{berry2007efficient}
Dominic~W. Berry, Graeme Ahokas, Richard Cleve, and Barry~C. Sanders,
  \emph{Efficient quantum algorithms for simulating sparse {Hamiltonians}},
  Communications in Mathematical Physics \textbf{270} (2007), no.~2, 359--371,
  \mbox{\href{http://arxiv.org/abs/quant-ph/0508139}{arXiv:quant-ph/0508139}}.

\bibitem{BC12}
Dominic~W. Berry and Andrew~M. Childs, \emph{Black-box {H}amiltonian simulation
  and unitary implementation}, Quantum Information and Computation \textbf{12}
  (2012), no.~1-2, 29--62,
  \mbox{\href{http://arxiv.org/abs/0910.4157}{arXiv:0910.4157}}.

\bibitem{berry2014exponential}
Dominic~W. Berry, Andrew~M. Childs, Richard Cleve, Robin Kothari, and
  Rolando~D. Somma, \emph{Exponential improvement in precision for simulating
  sparse {Hamiltonians}}, Proceedings of the 46th Annual ACM Symposium on
  Theory of Computing, pp.~283--292, 2014,
  \mbox{\href{http://arxiv.org/abs/1312.1414}{arXiv:1312.1414}}.

\bibitem{berry2015simulating}
Dominic~W. Berry, Andrew~M. Childs, Richard Cleve, Robin Kothari, and
  Rolando~D. Somma, \emph{Simulating {Hamiltonian} dynamics with a truncated {Taylor}
  series}, Physical Review Letters \textbf{114} (2015), no.~9, 090502,
  \mbox{\href{http://arxiv.org/abs/1412.4687}{arXiv:1412.4687}}.

\bibitem{berry2015hamiltonian}
Dominic~W. Berry, Andrew~M. Childs, and Robin Kothari, \emph{Hamiltonian
  simulation with nearly optimal dependence on all parameters}, Proceedings of
  the 56th IEEE Symposium on Foundations of Computer Science, pp.~792--809,
  2015, \mbox{\href{http://arxiv.org/abs/1501.01715}{arXiv:1501.01715}}.

\bibitem{BN16}
Dominic~W. Berry and Leonardo Novo, \emph{Corrected quantum walk for optimal
  {H}amiltonian simulation}, 2016,
  \mbox{\href{http://arxiv.org/abs/1606.03443}{arXiv:1606.03443}}.

\bibitem{Chi04}
Andrew~M. Childs, \emph{Quantum information processing in continuous time},
  Ph.D. thesis, Massachusetts Institute of Technology, 2004.

\bibitem{Chi10}
Andrew~M. Childs, \emph{On the relationship between continuous- and discrete-time
  quantum walk}, Communications in Mathematical Physics \textbf{294} (2010),
  no.~2, 581--603,
  \mbox{\href{http://arxiv.org/abs/0810.0312}{arXiv:0810.0312}}.

\bibitem{CCDFGS03}
Andrew~M. Childs, Richard Cleve, Enrico Deotto, Edward Farhi, Sam Gutmann, and
  Daniel~A. Spielman, \emph{Exponential algorithmic speedup by quantum walk},
  Proceedings of the 35th ACM Symposium on Theory of Computing, pp.~59--68,
  2003,
  \mbox{\href{http://arxiv.org/abs/quant-ph/0209131}{arXiv:quant-ph/0209131}}.

\bibitem{childs2010simulating}
Andrew~M. Childs and Robin Kothari, \emph{Simulating sparse {Hamiltonians} with
  star decompositions}, Theory of Quantum Computation, Communication, and
  Cryptography, Springer, pp.~94--103, 2010,
  \mbox{\href{http://arxiv.org/abs/1003.3683}{arXiv:1003.3683}}.

\bibitem{childs2012hamiltonian}
Andrew~M. Childs and Nathan Wiebe, \emph{Hamiltonian simulation using linear
  combinations of unitary operations}, Quantum Information \& Computation
  \textbf{12} (2012), no.~11-12, 901--924,
  \mbox{\href{http://arxiv.org/abs/1202.5822}{arXiv:1202.5822}}.

\bibitem{cleve2016efficient}
Richard Cleve and Chunhao Wang, \emph{Efficient quantum algorithms for
  simulating {L}indblad evolution}, 2016, \mbox{\href{http://arxiv.org/abs/1612.09512}{arXiv:1612.09512}}.

\bibitem{davies1974markovian}
Edward~Brian Davies, \emph{Markovian master equations}, Communications in
  Mathematical Physics \textbf{39} (1974), no.~2, 91--110.

\bibitem{FGG07}
Edward Farhi, Jeffrey Goldstone, and Sam Gutmann, \emph{A quantum algorithm for
  the {H}amiltonian {NAND} tree}, Theory of Computing \textbf{4} (2008), no.~1,
  169--190,
  \mbox{\href{http://arxiv.org/abs/quant-ph/0702144}{arXiv:quant-ph/0702144}}.

\bibitem{farhi1998limit}
Edward Farhi, Jeffrey Goldstone, Sam Gutmann, and Michael Sipser, \emph{Limit
  on the speed of quantum computation in determining parity}, Physical Review
  Letters \textbf{81} (1998), no.~24, 5442,
  \mbox{\href{http://arxiv.org/abs/quant-ph/9802045}{arXiv:quant-ph/9802045}}.

\bibitem{feynman1982simulating}
Richard~P. Feynman, \emph{Simulating physics with computers}, International
  Journal of Theoretical Physics \textbf{21} (1982), no.~6, 467--488.

\bibitem{gorini1976completely}
Vittorio Gorini, Andrzej Kossakowski, and Ennackal C.~G. Sudarshan,
  \emph{Completely positive dynamical semigroups of {N}-level systems}, Journal
  of Mathematical Physics \textbf{17} (1976), no.~5, 821--825.

\bibitem{grover2002creating}
Lov Grover and Terry Rudolph, \emph{Creating superpositions that correspond to
  efficiently integrable probability distributions}, 2002,
  \mbox{\href{http://arxiv.org/abs/quant-ph/0208112}{arXiv:quant-ph/0208112}}.

\bibitem{hardy1927collected}
Godfrey~H. Hardy, P.~Seshu~Aiyar, and Bertram~M. Wilson, \emph{Collected papers
  of {S}rinivasa {R}amanujan}, AMS, 1927.

\bibitem{HHL09}
Aram~W. Harrow, Avinatan Hassidim, and Seth Lloyd, \emph{Quantum algorithm for
  linear systems of equations}, Physical Review Letters \textbf{103} (2009),
  no.~15, 150502,
  \mbox{\href{http://arxiv.org/abs/0811.3171}{arXiv:0811.3171}}.

\bibitem{jordan2009efficient}
Stephen~P. Jordan and Pawel Wocjan, \emph{Efficient quantum circuits for
  arbitrary sparse unitaries}, Physical Review A \textbf{80} (2009), no.~6,
  062301, \mbox{\href{http://arxiv.org/abs/0904.2211}{arXiv:0904.2211}}.

\bibitem{kitaev1997quantum}
Alexei~Y. Kitaev, \emph{Quantum computations: Algorithms and error correction},
  Russian Mathematical Surveys \textbf{52} (1997), no.~6, 1191--1249.

\bibitem{kliesch2011dissipative}
Martin Kliesch, Thomas Barthel, Christian Gogolin, Michael Kastoryano, and Jens
  Eisert, \emph{Dissipative quantum {Church}-{Turing} theorem}, Physical Review
  Letters \textbf{107} (2011), no.~12, 120501,
  \mbox{\href{http://arxiv.org/abs/1105.3986}{arXiv:1105.3986}}.

\bibitem{kraus2008preparation}
Barbara Kraus, Hans~P. B{\"u}chler, Sebastian Diehl, Adrian Kantian, Andrea
  Micheli, and Peter Zoller, \emph{Preparation of entangled states by quantum
  {Markov} processes}, Physical Review A \textbf{78} (2008), no.~4, 042307,
  \mbox{\href{http://arxiv.org/abs/0803.1463}{arXiv:0803.1463}}.

\bibitem{lindblad1976generators}
Goran Lindblad, \emph{On the generators of quantum dynamical semigroups},
  Communications in Mathematical Physics \textbf{48} (1976), no.~2, 119--130.

\bibitem{liu2016continuous}
Chaobin Liu and Radhakrishnan Balu, \emph{Continuous-time open quantum walks}, 2016,
  \mbox{\href{http://arxiv.org/abs/1604.05652}{arXiv:1604.05652}}.

\bibitem{lloyd1996universal}
Seth Lloyd, \emph{Universal quantum simulators}, Science \textbf{273} (1996),
  no.~5278, 1073.

\bibitem{low2016optimal}
Guang~Hao Low and Isaac~L. Chuang, \emph{Optimal {Hamiltonian} simulation by
  quantum signal processing}, 2016,
  \mbox{\href{http://arxiv.org/abs/1606.02685}{arXiv:1606.02685}}.

\bibitem{shende2005synthesis}
Vivek~V. Shende, Stephen~S. Bullock, and Igor~L. Markov, \emph{Synthesis of
  quantum logic circuits}, Proceedings of the 2005 Asia and South Pacific
  Design Automation Conference, pp.~272--275, 2005,
  \mbox{\href{http://arxiv.org/abs/quant-ph/0406176}{arXiv:quant-ph/0406176}}.

\bibitem{strauch2009reexamination}
Frederick~W. Strauch, \emph{Reexamination of decoherence in quantum walks on
  the hypercube}, Physical Review A \textbf{79} (2009), no.~3, 032319,
  \mbox{\href{http://arxiv.org/abs/0808.3403}{arXiv:0808.3403}}.

\bibitem{suzuki1990fractal}
Masuo Suzuki, \emph{Fractal decomposition of exponential operators with
  applications to many-body theories and {Monte} {Carlo} simulations}, Physics
  Letters A \textbf{146} (1990), no.~6, 319--323.

\bibitem{suzuki1991general}
Masuo Suzuki, \emph{General theory of fractal path integrals with applications to
  many-body theories and statistical physics}, Journal of Mathematical Physics
  \textbf{32} (1991), no.~2, 400--407.

\bibitem{sweke2016digital}
Ryan Sweke, Mikel Sanz, Ilya Sinayskiy, Francesco Petruccione, and Enrique
  Solano, \emph{Digital quantum simulation of many-body non-{Markovian}
  dynamics}, 2016,
  \mbox{\href{http://arxiv.org/abs/1604.00203}{arXiv:1604.00203}}.

\bibitem{sweke2014simulation}
Ryan Sweke, Ilya Sinayskiy, and Francesco Petruccione, \emph{Simulation of
  single-qubit open quantum systems}, Physical Review A \textbf{90} (2014),
  no.~2, 022331,
  \mbox{\href{http://arxiv.org/abs/1405.6049}{arXiv:1405.6049}}.

\bibitem{temme2011quantum}
Kristan Temme, Tobias~J. Osborne, Karl~G. Vollbrecht, David Poulin, and Frank
  Verstraete, \emph{Quantum {Metropolis} sampling}, Nature \textbf{471} (2011),
  no.~7336, 87--90,
  \mbox{\href{http://arxiv.org/abs/0911.3635}{arXiv:0911.3635}}.

\bibitem{terhal2000problem}
Barbara~M. Terhal and David~P. DiVincenzo, \emph{Problem of equilibration and
  the computation of correlation functions on a quantum computer}, Physical
  Review A \textbf{61} (2000), no.~2, 022301,
  \mbox{\href{http://arxiv.org/abs/quant-ph/9810063}{arXiv:quant-ph/9810063}}.

\bibitem{verstraete2009quantum}
Frank Verstraete, Michael~M. Wolf, and Juan~Ignacio Cirac, \emph{Quantum
  computation and quantum-state engineering driven by dissipation}, Nature
  Physics \textbf{5} (2009), no.~9, 633--636.

\bibitem{wang2011quantum}
Hefeng Wang, Sahel Ashhab, and Franco Nori, \emph{Quantum algorithm for
  simulating the dynamics of an open quantum system}, Physical Review A
  \textbf{83} (2011), no.~6, 062317,
  \mbox{\href{http://arxiv.org/abs/1103.3377}{arXiv:1103.3377}}.

\bibitem{watrous2011theory}
John Watrous, \emph{Theory of quantum information}, 2011.

\bibitem{werner2016positive}
A.~H. Werner, D.~Jaschke, P.~Silvi, M.~Kliesch, T.~Calarco, J.~Eisert, and
  S.~Montangero, \emph{A positive tensor network approach for simulating open
  quantum many-body systems}, Physical Review Letters \textbf{116} (2016),
  no.~23, 237201,   \mbox{\href{http://arxiv.org/abs/1412.5746}{arXiv:1412.5746}}.

\bibitem{whitfield2010quantum}
James~D. Whitfield, C{\'e}sar~A. Rodr{\'\i}guez-Rosario, and Al{\'a}n
  Aspuru-Guzik, \emph{Quantum stochastic walks: A generalization of classical
  random walks and quantum walks}, Physical Review A \textbf{81} (2010), no.~2,
  022323, \mbox{\href{http://arxiv.org/abs/0905.2942}{arXiv:0905.2942}}.

\bibitem{yung2012quantum}
Man-Hong Yung and Al{\'a}n Aspuru-Guzik, \emph{A quantum--quantum {Metropolis}
  algorithm}, Proceedings of the National Academy of Sciences \textbf{109}
  (2012), no.~3, 754--759,
  \mbox{\href{http://arxiv.org/abs/1011.1468}{arXiv:1011.1468}}.

\end{thebibliography}
\end{document}